\newcommand{\bb}{\mathrm{\bf b}}
\newcommand{\bff}{\mathrm{\bf f}}
\newcommand{\by}{\mathrm{\bf y}}
\newcommand{\bA}{\mathrm{\bf A}}
\newcommand{\bu}{\mathrm{\bf u}}
\newcommand{\bB}{\mathrm{\bf B}}
\newcommand{\bC}{\mathrm{\bf C}}
\newcommand{\bD}{\mathrm{\bf D}}
\newcommand{\bE}{\mathrm{\bf E}}
\newcommand{\bF}{\mathrm{\bf F}}
\newcommand{\bH}{\mathrm{\bf H}}
\newcommand{\bI}{\mathrm{\bf I}}
\newcommand{\bV}{\mathrm{\bf V}}
\newcommand{\bR}{\mathrm{\bf R}}
\newcommand{\bS}{\mathrm{\bf S}}
\newcommand{\bone}{\mathrm{\bf 1}}
\newcommand{\bzero}{\mathrm{\bf 0}}
\newcommand{\bxi}{\mbox{\boldmath $\xi$}}
\newcommand{\beps}{\mbox{\boldmath $\varepsilon$}}
\newcommand{\bmu}{\mbox{\boldmath $\mu$}}
\newcommand{\bSigma}{\mbox{\boldmath $\Sigma$}}
\newcommand{\bOmega}{\mbox{\boldmath $\Omega$}}
\newcommand{\bPhi}{\mbox{\boldmath $\Phi$}}
\newcommand{\hB}{\widehat \bB}
\newcommand{\hU}{\widehat U}
\newcommand{\tC}{\widetilde \bC}
\newcommand{\tD}{\widetilde \bD}
\newcommand{\hF}{\widehat \bF}
\newcommand{\hf}{\widehat \bff}
\newcommand{\hu}{\widehat \bu}
\newcommand{\hvar}{\widehat \var}
\newcommand{\hcov}{\widehat \cov}
\newcommand{\hgamma}{\widehat \gamma}
\newcommand{\hSig}{\widehat\Sig}
\newcommand{\hsig}{\widehat\sigma}
\newcommand{\hlam}{\widehat\lambda}
\newcommand{\hxi}{\widehat\bxi}
\newcommand{\cov}{\mathrm{cov}}
\newcommand{\Sig}{\mathbf{\Sigma}}
\newcommand{\tr}{\mathrm{tr}}
\newcommand{\argmin}{\mathrm{argmin}}
\newcommand{\bw}{\mbox{\bf w}}
\newcommand{\var}{\mathrm{var}}
\newcommand{\beq}{\begin{eqnarray*}}
\newcommand{\eeq}{\end{eqnarray*}}
\numberwithin{equation}{section}
\theoremstyle{plain}
\newtheorem{thm}{Theorem}[section]
\newtheorem{lem}{Lemma}[section]
\newtheorem{cor}{Corollary}[section]
\newtheorem{assum}{Assumption}[section]
\theoremstyle{definition}
\newtheorem{exm}{Example}[section]
\newtheorem{remark}{Remark}[section]
\def\@biblabel#1{\hspace*{-\labelsep}}
\begin{document}

\title{ 
\bf Risks of Large Portfolios}
\author{Jianqing Fan
\thanks{Address: Department of Operations Research and Financial Engineering, Sherrerd Hall, Princeton University, Princeton, NJ 08544, USA, e-mail: \textit{jqfan@princeton.edu},
\textit{yuanliao@umd.edu}, \textit{xshi@princeton.edu}.  The research was partially supported by DMS-1206464 and NIH R01GM100474-03,
NIH R01-GM072611-08.}$\; ^\dag$, Yuan Liao$^\ddag$ and Xiaofeng Shi$^*$
\medskip\\{\normalsize $^*$Department of Operations Research and Financial Engineering,  Princeton University}
\medskip\\{\normalsize $^\dag$ Bendheim Center for Finance, Princeton University}
\medskip\\{\normalsize $^\ddag$ Department of Mathematics, University of Maryland}}

\date{}
\maketitle

\sloppy%

\onehalfspacing


\begin{abstract}
Estimating and assessing the risk of a large portfolio is an important topic in financial econometrics and risk management. The risk is often estimated by a substitution of a good estimator of the volatility matrix. However, the accuracy of such a risk estimator for large portfolios is largely unknown, and a simple inequality in the previous literature gives an infeasible upper bound for the estimation error. In addition, numerical studies illustrate that this upper bound is very crude.  In this paper, we propose  factor-based risk estimators under a large amount of assets, and introduce a  high-confidence level upper bound (H-CLUB)  to assess the accuracy of the risk estimation. The H-CLUB is constructed based on  three different estimates of the volatility matrix: sample covariance, approximate factor model with known factors, and unknown factors (POET, Fan, Liao and Mincheva, 2013). For the first time in the literature,  we derive the limiting distribution of the estimated risks in high dimensionality. Our numerical results demonstrate that the proposed upper bounds significantly outperform the traditional crude bounds, and provide insightful assessment of the estimation of the  portfolio risks. In addition,  our simulated results quantify the relative error in the risk estimation, which is usually negligible using 3-month daily data. Finally, the proposed methods are applied to an empirical study.

\end{abstract}
\strut

\textbf{Keywords:} High dimensionality,  approximate factor model, unknown factors, principal components, sparse matrix, thresholding,   risk management, volatility
\strut

\pagebreak%
\doublespacing

\onehalfspacing

\section{Introduction}

The potential of a portfolio's loss is termed as the portfolio risk. There are two types of portfolio risks. The {\it systematic risk} (or {\it market risk}) is the risk inherent to the entire market, such as risk associated with interest rates, currencies, recession, war and political instability, etc. The systematic risk cannot be diversified away, even with a well-diversified portfolio. In contrast, {\it specific risk} (or {\it idiosyncratic risk}) refers to the risk that affects a very specific group of securities or even an individual security. For example, it can be the risk of price changes due to the unique circumstances of a specific stock. Unlike systematic risk, specific risk can be reduced through diversification.

Estimating and assessing the risk of a large portfolio is an important topic in financial econometrics and risk management. The risk of a given portfolio allocation vector $\bw$ is conveniently measured by $(\bw'\Sig\bw)^{1/2}$, in which $\Sig$ is a  volatility (covariance) matrix of the assets' returns.
Often multiple portfolio risks are at interests and hence it is essential to estimate the volatility matrix $\Sig$.   The problem becomes   challenging when the portfolio size is large. Suppose we have created a portfolio from two thousand assets and invested in a part of selected assets. The covariance matrix $\Sig$ involved then  contains over two million unknown parameters.  Yet, the sample size based on one year's daily data is around 252. It is hard to assess the estimation accuracy when the estimation errors from more than two million parameters are aggregated. Hence some regularization method is recommended to estimate and assess risks.

The interest on large portfolios surges recently.  Pesaran and Zaffaroni (2008) examined the asymptotic behavior of the portfolio weights.    Brodie et al.  (2009)  addressed the problem of portfolio selection using a regularization penalty.  Gomez and Gallon (2011) numerically compared several methods of covariance matrix estimation for portfolio management.  In particular, the optimal portfolio selection involves inverting an estimated  $\Sig$, which is a challenging problem  under a large number of assets. The literature is also found in Jacquier and Polson (2010), Antoine (2011),  Chang and Tsay (2010), DeMiguel et al. (2009),  Ledoit and Wolf (2003), El Karoui (2010),  Lai et al. (2011),  Bannouh et al. (2012), Gandy and Veraart (2012), Bianchi and Carvalho (2011), among others.  


 This paper contributes to the literature   in at least four aspects. First of all,  we propose  risk estimators based on factor analysis.  Traditionally $\Sig$ is estimated by the sample covariance. However, when the number of assets is larger than the sample size, it is well known that the sample covariance is singular, which may result in an estimated risk being zero for certain portfolios. 
 By assuming a factor structure on the returns,  we obtain strictly positive definite covariance estimators $\hSig$ even when the number of assets is larger than the sample size. Two factor-based methods are proposed. The first estimator assumes the factors to be known and observable. The second method deals with the case in which common factors are unknown. This is particularly important for analyzing many non-U.S. markets when assets' returns are driven by a few unknown factors.   In both cases,  the factor model imposes a \textit{conditionally sparse} structure, in that the idiosyncratic covariance is a large sparse matrix. This yields to an \textit{approximate factor model} as in Chamberlain and Rothschild (1983), with a non-diagonal error covariance matrix. 


 Secondly, we provide a new and practical method to assess the accuracy of risk estimation $\bw'(\hSig-\Sig)\bw$. In the literature (e.g., Fan et al. 2012), this term has been bounded by $$
\xi_T=\|\bw\|_1^2\|\hSig-\Sig\|_{\max}
$$
 where $\|\bw\|_1$, the $L_1$-norm of $\bw$,  is the gross exposure of the portfolio, which is bounded when there are no extreme positions in the portfolio.  However, this upper bound depends on the unknown $\Sig$, hence is not applicable in practice. In addition, the numerical studies in this paper demonstrate that this upper bound is too crude:  it is often of the same or even larger scale than the estimated risk.  In contrast, we provide a  high-confidence level upper bound (H-CLUB) for $\bw'(\hSig-\Sig)\bw$, which is of much smaller scale and  easy to compute  in practice.  H-CLUB is constructed based on the confidence interval for the true risk. For each proposed risk estimator $\bw'\hSig\bw$ and a given $\tau\in(0,1)$, we find an H-CLUB  $\hU(\tau)$ such that $$P(|\bw'(\hSig-\Sig)\bw|\leq \hU(\tau))\rightarrow1-\tau.$$ In contrast, $P(|\bw'(\hSig-\Sig)\bw|\leq\xi_T)=1.$
  Hence H-CLUB is an upper bound for the risk estimation error with high confidence while the traditional bound $\xi_T$ is of full confidence. 

  The third contribution is that   for the first time in the literature, we derive the inferential theory of the risk estimators with a high-dimensional portfolio, especially when the estimator is factor-based  with either observed or unobserved factors.  Although factor analysis has long been used  for the portfolio allocation theory, it remains largely unknown whether the effects of estimating the factor loadings and unobservable factors are  negligible in risk estimation, especially when the dimensionality is high. This paper proves that these effects are indeed asymptotically negligible for diversified portfolios, even when the dimensionality is much larger than the sample size.    Interestingly, we find that when the dimensionality is larger than the sample size, the factor-based risk estimators have the same asymptotic variances no matter whether the factors are known or not, and   they are asymptotically equivalent. Hence  the high dimensionality is in fact a bless for risk estimation instead of a curse from this point of view. In addition, the asymptotic variance of factor-based estimators is  slightly  smaller than that of the sample covariance-based estimator, but the difference is  small. This demonstrates that the benefit of using a factor model is not in terms of a much smaller asymptotic variance, because the systematic risk cannot be diversified. Rather, factor analysis gives a strictly positive definite covariance estimator, which is essential  to estimate the optimal portfolio allocation vector, and also interprets the structure of the portfolio risks. 


Finally, using our simulated results based on the model calibrated from the real U.S. equity market data, we are able to  quantify the relative error of the estimation error or coefficient of variation, defined as STD$(\bw'\hSig\bw)/\bw'\hSig\bw$, where STD$(\cdot)$ denotes the standard error of the estimated risk. Interestingly, this ratio is just a few percent and is approximately independent of the gross exposure $\|\bw\|_1$ but sensitive to the length of the time series. On the other hand, we also quantify the relation between the crude bound and the practical H-CLUB. We find that $\xi_T$ is many times larger than $\hU(\tau)$, and  the ratio $\xi_T/\hU(\tau)$   increases as the gross exposure  increases.

We also contribute to the portfolio theory by introducing a sampling technique which picks a random portfolio with a given  gross exposure level.  This sampling scheme can be useful for portfolio optimization and understanding the overall risks within a given level of gross exposure.

We emphasize that the recent works by Fan et al. (2011, 2013) are only concerned about covariance estimations and no inferential theories  were studied. In contrast, we focus on the risk estimation, with a particular attention to the risk assessment and the impact of covariance estimation on the limiting distributions of risk estimators.

The rest of the paper is organized as follows. Section 2 proposes new risk estimators based on   factor analysis under both known and unknown factor cases. Section 3 constructs the H-CLUB for each risk estimator based on the confidence interval for risks. It also derives the limiting distributions of the risk estimators and compares their asymptotic variances. Section \ref{simulation} presents simulation results. An empirical study is considered in Section 5. Finally, Section 6 concludes. All the proofs are given in the appendix.

 Throughout the paper, $\|\bw\|_1=\sum_{i=1}^N|w_i|$ is used to denote the gross exposure of a given portfolio allocation vector.  For a square matrix $\bA$, $\lambda_{\min}(\bA)$ and $\lambda_{\max}(\bA)$ represent its minimum and maximum eigenvalues. Let  $\|\bA\|_{\max}$ and $\|\bA\|$  denote its  element-wise sup-norm and operator norm, given by $\|\bA\|_{\max}=\max_{i,j}|A_{ij}|$  and $\|\bA\|=\lambda_{\max}^{1/2}(\bA'\bA)$ respectively.
\section{Estimation of Portfolio's Risks}

Let$\{\bR_t\}_{t=1}^T$ be a time series of an $N\times 1$ vector of observed asset returns and $\Sig=\cov(\bR_t)$,  often known as the volatility matrix.  The portfolio risk of a given allocation vector $\bw$ is given by $\sqrt{\var(\bw'\bR_t)}$, which is $\sqrt{\bw'\Sig\bw}$. How to estimate the risk of a large portfolio? A straightforward answer is $\sqrt{\bw'\hSig\bw}$  with an estimator $\hSig$. But, how good is it? How to assess the accuracy of this estimator?  We address the problem or risk estimation in this section. The assessment of the estimation accuracy will be discussed in Section 3.

The problem of estimating  the risk of a given portfolio  is challenging   due to the high dimensionality of $\Sig$. Often the number of assets can be of hundreds or even thousands. On the other hand, to adapt to the current market condition,  a short period of financial data    are often  used. For example,  the number of daily returns in three months is only of tens. Hence $N$ can be much larger than $T$.   We assume $\Sig$ to be time-invariant within a short period, which holds approximately for locally stationary  time series.

We consider three estimators  for estimating $\var(\bw'\bR_t)$ for a given $\bw$, based on three different estimators $\hSig$: sample covariance estimator, factor analysis with either observed or unobserved factors. Recently, Chang and Tsay (2010) proposed a Cholesky decomposition approach to estimate the large covariance matrix, and used simulation to assess its performance. On the other hand, the assets' returns are usually driven by a few market factors. Due to the presence of these common factors, $\Sig$ itself is not sparse. Moreover, as pointed out by Stock and Watson (2002) and Bai (2003), common factors are usually pervasive, so the factor loading matrix is not sparse either. Hence the factor-based risk estimators are widely applicable in analyzing financial data, whose asymptotic properties (as both $T, N\rightarrow\infty$) will be also presented below.

 \subsection{Sample-covariance-based estimator}

The first estimator   $\hSig=\bS$ is the conventional sample covariance matrix based on $\{\bR_t\}_{t=1}^T$. The asymptotic impact of using $\bS$ on the risk management has been studied by Fan et al. (2008, 2012) when $N$ is much larger than $T$. The sample covariance estimator does not require any structural assumption on the assets' returns.  It was shown by the aforementioned authors that for a given portfolio $\bw$ with a bounded gross exposure (that is, $\|\bw\|_1$ is bounded),
$$\bw'(\bS-\Sig)\bw=O_p(\sqrt{\frac{\log N}{T}}).$$
However, when $N>T$, it is well known that $\bS$ is singular, and therefore may result in an estimated risk being zero for certain portfolios.

\subsection{Estimating  risks based on factor analysis}
To overcome the problem of singularity of the sample covariance under high dimensionality, we assume that $\bR_t$ satisfies an ``approximate factor model" (Chamberlain and Rothschild 1983):
\begin{equation}
\bR_{t}=\bB \bff_t+\bu_t, t\leq T,
\end{equation}
where $\bB$ is an $N\times K$ matrix of factor loadings; $\bff_t$ is a $K\times 1$ vector of common factors, and $\bu_t$ is an $N\times 1$ vector of idiosyncratic error components.  In contrast to $N$ and $T$,  here $K$ is assumed to be fixed. The common factors may or may not be observable. For example, Fama and French (1992, 1993) identified three known factors that have successfully described the U.S. stock market. In addition, macroeconomic and financial market variables have been thought to capture systematic risks as observable factors. On the other hand, in an empirical study, Bai and Ng (2002) determined two unobservable factors for stocks traded on the New York Stock Exchange during 1994-1998.

Let $\cov(\bff_t)$ and $\Sig_u=\cov(\bu_t)$ denote the covariance matrices of $\bff_t$ and $\bu_t$, $K\times K$ and $N\times N$ respectively. Suppose  $\bff_t$ and $\bu_t$ are uncorrelated. The factor model then implies the following   decomposition of $\Sig$:
$$
\Sig=\bB  \cov(\bff_t) \bB'+\Sig_u.
$$
Sparsity is one of the most common  structures for large covariance estimation,  which assumes many off-diagonal elements of the covariance to be either zero or nearly so. 
 In the approximate factor model, a natural  assumption is to place a sparse structure on $\Sig_u$.  The rationale is, after the common factors are taken out, the remaining idiosyncratic components should be mostly weakly correlated with each other. Such a condition is called \textit{conditional sparsity}. We now propose new risk estimators based on the conditional sparsity assumption.

\subsubsection{Factor-based estimator}

We first  assume that the common factors are observable, and construct an estimator of $\Sig$ based on thresholding on the covariance matrix of idiosyncratic errors.   Suppose $\hB$ is the least squares estimator of $\bB$. The residual sample covariance matrix of $\bu_t$ is then given by $$\bS_u=T^{-1}\sum_{t=1}^T\hu_t\hu_t'=(S_{u,ij})_{N\times N}, \quad \hu_t=\bR_t-\hB\bff_t.$$
 Let $s_{ij}(\cdot):\mathbb{R}\rightarrow\mathbb{R}$ be an entry-dependent adaptive thresholding function and for some thresholding parameter  $\tau_{ij}^f>0$,
 \begin{equation}\label{e2.2}
s_{ij}(z)=0 \text{ when } |z|\leq \tau_{ij}^f, \quad \text{ and } |s_{ij}(z)-z|\leq \tau_{ij}^f.
\end{equation}
A simple example is that $s_{ij}(z) = z I(|z| \geq \tau_{ij}^f)$ with $\tau_{ij}^f = \tau (s_{u, ii}
s_{u, jj})^{1/2}$, namely, setting all correlation coefficients smaller than $\tau$ to zero.  This rule is called hard thresholding in the literature.  The soft-thresholding rule is given by
$s_{ij}(z) = (z - \tau_{ij}^f)_+$.
Let
$$
\widehat{\Sigma}_{u,ij}=\begin{cases}
S_{u,ii}, & i=j\\
s_{ij}(S_{u,ij}), & i\neq j.
\end{cases}
$$
Let $\hcov(\bff_t)$ denote the sample covariance of the common factors.
Define the estimated covariance matrices as
 \begin{equation}\label{eq2.3}
  \hSig_f=\hB\hcov(\bff_t)\hB'+\hSig_u, \quad \hSig_u=(\widehat{\Sigma}_{u,ij})_{N\times N}.
 \end{equation}
 The first condition of  the thresholding function plays a role of thresholding. When applied to a sample covariance,  it thresholds off most of the small entries that are likely due to the estimation errors. The second condition in (\ref{e2.2}) is used for ``shrinkage", which helps to produce a positive definite covariance estimator for a given finite sample.
  Commonly used examples of $s_{ij}(\cdot)$ include hard-thresholding, soft-thresholding, SCAD thresholding, etc. See Antoniadis and Fan (2001), Rothman et al. (2009) and Cai and Liu (2011) for details.

The cut-off is taken to be, for some $C>0$,
$$
\tau_{ij}^f=C\sqrt{S_{u,ii}S_{u,jj}}\sqrt{\frac{\log N}{T}},
$$
which corresponds to applying the thresholding with parameter $C\sqrt{\log N/T}$ to the correlation matrix of $\bS_u$. One can adjust $C$ to gain the strictly positive definiteness of $\hSig_f$ for any given finite sample (see the discussion in Fryzlewicz 2012).

\subsubsection{POET estimator}

When the common factors are unobservable, we estimated $\Sig$ by ``principal  orthogonal complements thresholding" (POET), recently proposed by Fan et al. (2013). The POET works as follows: Let $\hlam_1\geq \cdots \geq\hlam_N$ be the ordered eigenvalues of the sample covariance $\bS$, whose corresponding eigenvectors are denoted by $\{\hxi_j\}_{j=1}^N$.   We then estimate $\Sig$ by
$$
\hSig_{P}=\sum_{j=1}^K\hlam_j\hxi_j\hxi_j'+ \bOmega,\quad \bOmega=(\Omega_{ij})_{N\times N},   \quad \Omega_{ij}=\begin{cases}
\sum_{k=K+1}^N\hlam_k\widehat{\xi}_{k,i}^2, & i=j\\
s_{ij}(  \sum_{k=K+1}^N\hlam_k\widehat{\xi}_{k, i} \widehat{\xi}_{k, j}  ), & i\neq j.
\end{cases}
$$
 where $s_{ij}(\cdot)$ is the same  adaptive thresholding function as before, based on an entry-dependent threshold $\tau_{ij}^P$:
$$
\tau_{ij}^P=C\sqrt{\Omega_{ii}\Omega_{jj}}\left(\sqrt{\frac{\log N}{T}}+\frac{1}{\sqrt{N}}\right).
$$
Recall that $K$ denotes the number of common factors.  Here $C$ is a user-specified constant to maintain the finite sample positive definiteness.  Thanks to the thresholding,  even when $T=o(N)$,   there is $C^*>0$ such that for any $C>C^*$,  both $\hSig_f$ and $\hSig_P$ are strictly positive definite with probability approaching to one.
Simulated and empirical studies suggested that $C=0.5$ is a good choice when $s_{ij}$ is the soft thresholding.

Based on the factor analysis, our  proposed risk estimator is either  $
\sqrt{\bw'\hSig_f\bw}$ or $ \sqrt{\bw'\hSig_P\bw}
$ for a given portfolio allocation vector $\bw$, depending on whether $\bff_t$ is observable. Note that Fan et al. (2013)  is concerned only about the covariance estimation. In contrast,  this  paper focuses on the asymptotic behaviors of these risk estimators and their assessment for a given diversified $\bw$, which  have never been addressed before.   We will see that under high dimensionality, the factor-based estimators have the same asymptotic variance, and is smaller than that of the sample covariance-based estimator. The effect of estimating the unknown factors on the limiting distributions is asymptotically negligible.

\section{Assessment of the Risk Estimation}
This section proposes a new method to assess the estimated risks for a given portfolio allocation vector $\bw$. We will assume $\|\bw\|_1\leq c$ for some $c>0$, where $\|\bw\|_1$ is the gross exposure of the portfolio.  This prevents extreme positions.
\subsection{Measuring risks using full confidence bound}

As described in Section 2, we use a covariance estimator to form a risk estimator    $\bw'\hSig\bw$. 
A natural question then arises: how close is the estimated risk to the true risk? In other words, how do we assess $\Delta=|\bw'(\hSig-\Sig)\bw|$? Technically this question is challenging under high dimensionality.  A simple inequality as $\Delta\leq \|\bw\|^2\|\hSig-\Sig\|$ would not give a convergence upper bound when $N$ is large. 
An alternative (and commonly used) upper bound for $\Delta$ is based on the following inequality:
\begin{equation} \label{eq2.2}
\Delta\leq\|\bw\|_1^2\|\hSig-\Sig\|_{\max}\equiv \xi_T,
\end{equation}
which is usually tighter for risk assessment.

However, for the purposes of  statistical inference, $\xi_T$ is infeasible as it depends on the true $\Sig$. As a result, this upper bound cannot be evaluated in practice for a given data set. In addition, our simulation results  have shown that the upper bound $\xi_T$ is actually too crude to be useful.  Let us consider the following toy example.
\begin{exm}
Consider three stocks with annualized returns that jointly follow a multivariate Gaussian distribution $N_3(0,\bSigma)$ where $\bSigma=0.04\cdot \mathbf{I}_3$. An equally weighted portfolio $\bw=(1/3,1/3,1/3)'$ is constructed and the task is to estimate the portfolio risk using the sample covariance matrix $\bS$ based on the simulated 21-day (one month) returns.

The theoretical value of portfolio variance is  $\bw'\bSigma\bw=0.0133$, which corresponds to a true risk of 11.55\% per annum. Based on a typical simulated data, the estimated portfolio variance $\bw\hSig\bw=0.0131$, equivalent to a perceived risk of 11.43\% per annum. Moreover $\xi_T = \|\bSigma-\hSig\|_{\max} = 0.0248$.  Based on this upper bound, a simple calculation shows that $\bw'\bSigma\bw\in[0, 0.0379]$, that is, the true  risk $\sqrt{\bw'\Sig\bw}$ lies in $[0, 19.46\%]$, an interval that is too wide to be meaningful.
 $\square$
\end{exm}

Note that the inequality (\ref{eq2.2}) holds for every sampling sequence $\{\bR_t\}_{t=1}^T$. Hence $\xi_T$ is in fact an upper bound of full confidence, that is,
$$
P(|\bw'(\hSig-\Sig)\bw|\leq \xi_T)=1.
$$
The toy example is typical in the sense that $\xi_T$ is already too crude for small portfolios. In statistical inference, often people use bounds of high confidence levels instead, e.g., quantities that bound $\Delta$ with a high probability. This paper  pursues such a high-confidence-level upper bound (H-CLUB) based on the confidence interval.

\subsection{H-CLUB}

We propose a new confidence upper bound for $\Delta=|\bw'(\hSig-\Sig)\bw|$ to assess the estimation error of the portfolio risks. More specifically, for each proposed matrix estimator $\hSig$ and  any given $\tau>0$, we find a quantity $\hU(\tau)$  such that  for all large $N$ and $T$,
$$
P(|\bw'(\hSig-\Sig)\bw|\leq \hU(\tau))\geq1-\tau.
$$
  Therefore, $\hU(\tau)$ is an asymptotic $(1-\tau)100\%$ confidence upper bound for $\Delta$. In addition, it is data-driven (up to  user-specified tuning parameters), hence can be easily  calculated in practice and used to construct confidence intervals for the true risks.


Before proceeding, we make a technical comment that one needs to be careful about the limiting behaviors of $T$ and $N$. In this paper, we will treat $N$  as an increasing function of $T$. Hence $N$ grows via a fixed trajectory, e.g., $N=N_T=T^{\alpha}$ for some $\alpha>0$, and can be faster than   $T$, namely, $\alpha>1$.   As a result, we need to apply the triangular array central limit theorem with weakly dependent time series data.

\subsection{Sample covariance based risk estimator}

Let us start with  the sample covariance matrix of $\bR_t$. For simplicity and exposition, let us assume that the returns have mean zero and $\bS=T^{-1}\sum_{t=1}^T\bR_t\bR_t'.$ We make the following assumptions, under which the serial dependence  across $t$ is allowed.

\begin{assum} \label{a31}
(i) $\{\bR_t\}_{t=1}^T$  is strictly stationary with $E\bR_t=0$ and $\cov(\bR_t)=\Sig$.\\
(ii)  There is $M>0$ such that $\max_{i\leq N}E|R_{it}|^8\leq M.$
\end{assum}

Let us introduce the strong mixing condition. Let $\mathcal{F}_{-\infty}^0(R)$ and $\mathcal{F}_{T}^{\infty}(R)$ denote the $\sigma$-algebras generated by $\{\bR_t: -\infty\leq t\leq 0\}$ and  $\{\bR_t: T\leq t\leq \infty\}$ respectively. In addition, define the mixing coefficient $\alpha_R(T)=\sup_{A\in\mathcal{F}_{-\infty}^0(R), B\in\mathcal{F}_{T}^{\infty}(R)}|P(A)P(B)-P(AB)|.$


Define the autoregressive function $\gamma_T(h)=\cov((\bw'\bR_t)^2, (\bw'\bR_{t+h})^2)$, which depends on $T$ through $\dim(\bw)=N=N_T.$ 
Let
\begin{equation}\label{e3.2}
  \sigma^2_T   =\gamma_T(0)+2\sum_{h=1}^{\infty}\gamma_T(h).
  \end{equation}

 \begin{assum} \label{a32new} (i) There exists   $r_0>0$ and $M>0$ such that:  for all $T\in\mathbb{Z}^+$,
$$\alpha_R(T)\leq \exp(-MT^{r_0}).$$
 (ii) $\sum_{h=1}^{\infty}|\gamma_T(h)|=O(1)$,  $ \sum_{h=1}^{T}h\gamma_T(h)/T=o(  \sigma^2_T   )$ and $\alpha_R(T)=o(\gamma_T(0))$.
 \end{assum}
 Assumption \ref{a32new} requires the weak dependence of the time series. Strong mixing condition is assumed. The first two conditions in (ii) are usually mild. When a diversified $\bw$ is used, the last condition in (ii) is easy to satisfy as long as the dimensionality is not  exponentially large in $T$ because the mixing coefficient is assumed to decay exponentially fast.     To illustrate its meaning, consider a simple case where $\bw=(1/N,\cdots,1/N)$,  then $\gamma_T(0)=\var((\frac{1}{N}\sum_{i=1}^NR_{it})^2)$, which is in general no smaller than $O(N^{-c})$ for some $c>0$. Due to the strong mixing condition in Assumption \ref{a32new}(i),  $\alpha_R(T)=o(\gamma_T(0))$  if $N=N_T$ grows at a polynomial rate of $T$.

 We are now ready to define the H-CLUB for the estimation error $\bw'(\bS-\Sig)\bw$. Let
 $$
    \hgamma(h)=T^{-1}\sum_{t=1}^{T-h}((\bw'\bR_t)^2-\bw'\bS\bw)((\bw'\bR_{t+h})^2-\bw'\bS\bw).
 $$
  In particular, $\hgamma(0)=T^{-1}\sum_{t=1}^T(\bw'\bR_t)^4-(\bw'\bS\bw)^2.$  Let $z_{\tau/2}$ denote the upper $\tau/2$ quantile of the standard normal distribution.
 For some increasing sequence $L=L(T)\rightarrow\infty$, let
\begin{equation}\label{e32}\hsig^2=\hgamma(0)+2\sum_{h=1}^L\hgamma(h), \quad \hU_S(\tau)=z_{\tau/2}\sqrt{\hsig^2/T}.
\end{equation}
Here $L$ is a truncation parameter, and as $L$ slowly increases, $\hsig^2$ consistently estimates $\sigma_T^2.$

\begin{lem}  \label{l31}
  Under Assumptions  \ref{a31}-\ref{a32new},
$$
|\hsig^2-  \sigma^2_T   |=O_p(L^{3/2}T^{-1/2}+\sum_{h>L}\gamma_T(h)),
$$
If in addition $L^{3}=o(T\sigma^4_T)$ and $\sum_{h>L}\gamma(h)=o(  \sigma^2_T   )$, then
$$
|\hsig^2-  \sigma^2_T   |=o_p(\sigma_T^2) \quad \mbox{and} \quad
\hU_S(\tau)=o\left(\sqrt{\frac{\log N}{T}}\right).
$$
\end{lem}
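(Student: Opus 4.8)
The plan is to recognize $\hsig^2$ as a lag-$L$ truncated long-run-variance estimator for the scalar stationary sequence $Y_t:=(\bw'\bR_t)^2$. Since $\bw'\bS\bw=T^{-1}\sum_{t=1}^{T}Y_t=:\bar Y$ and $\mu:=EY_t=\bw'\Sig\bw$, the quantity $\hgamma(h)$ is exactly the sample autocovariance of $\{Y_t\}$ centered at $\bar Y$, $\gamma_T(h)=\cov(Y_t,Y_{t+h})$ is its population version, and $\sigma_T^2$ is its long-run variance. I would begin from the split
\[
\hsig^2-\sigma_T^2
=\Big(\hsig^2-\big[\gamma_T(0)+2\sum_{h=1}^{L}\gamma_T(h)\big]\Big)
-2\sum_{h>L}\gamma_T(h),
\]
in which the last term is the deterministic truncation bias, already matching the $\sum_{h>L}\gamma_T(h)$ in the target rate, so everything reduces to controlling the stochastic error in the first parenthesis.

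Next I would introduce the infeasible autocovariance $\widetilde\gamma(h)=T^{-1}\sum_{t=1}^{T-h}(Y_t-\mu)(Y_{t+h}-\mu)$ and decompose each $\hgamma(h)-\gamma_T(h)$ into three pieces: the demeaning effect $\hgamma(h)-\widetilde\gamma(h)$, the centered fluctuation $\widetilde\gamma(h)-E\widetilde\gamma(h)$, and the edge bias $E\widetilde\gamma(h)-\gamma_T(h)=-(h/T)\gamma_T(h)$. The edge bias sums to $\sum_{h=1}^{L}(h/T)|\gamma_T(h)|\le (L/T)\sum_h|\gamma_T(h)|=O(L/T)$ by Assumption \ref{a32new}(ii), which is negligible against $L^{3/2}T^{-1/2}$. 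For the demeaning effect, note $\var(\bar Y)=\sigma_T^2/T\,(1+o(1))$ so $\bar Y-\mu=O_p(\sigma_T T^{-1/2})$; expanding the square gives $\hgamma(h)-\widetilde\gamma(h)=O_p(\sigma_T^2/T)$ uniformly in $h$, hence $O_p(L\sigma_T^2/T)$ after summation, again of lower order.

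The crux is the aggregated fluctuation $\sum_{h=0}^{L}(\widetilde\gamma(h)-E\widetilde\gamma(h))$, which I would bound through its variance and Chebyshev. Writing each summand as a time average of the products $Z_t^{(h)}=(Y_t-\mu)(Y_{t+h}-\mu)$, the variance becomes a quadruple sum of fourth-order covariances $\cov(Z_s^{(h)},Z_t^{(h')})$. Here Assumption \ref{a31}(ii) ($\max_{i}E|R_{it}|^8\le M$) supplies the fourth moments of $Y_t$, hence of the products, and the exponential strong mixing of Assumption \ref{a32new}(i) lets a mixing covariance inequality bound each $\cov(Z_s^{(h)},Z_t^{(h')})$ by a summable function of the separations among the four time indices. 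Summing over the $O(T)$ effective time indices and the $O(L^2)$ lag pairs yields a variance of order no larger than $L^3/T$, so this term is $O_p(L^{3/2}T^{-1/2})$; combined with the two paragraphs above this gives the first display. I expect the \emph{main obstacle} to be exactly this fourth-order bookkeeping: verifying, uniformly over all lag pairs $(h,h')$, which of the four time points are separated enough for the mixing bound to apply while the 8th-moment control absorbs the overlapping configurations.

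For the second statement, the added condition $L^3=o(T\sigma_T^4)$ is precisely $L^{3/2}T^{-1/2}=o(\sigma_T^2)$, while $\sum_{h>L}\gamma(h)=o(\sigma_T^2)$ is assumed; hence both terms in the first display are $o_p(\sigma_T^2)$, giving $|\hsig^2-\sigma_T^2|=o_p(\sigma_T^2)$, i.e. $\hsig^2=\sigma_T^2(1+o_p(1))$. Finally $\hU_S(\tau)=z_{\tau/2}\sqrt{\hsig^2/T}=O_p\big(\sqrt{\sigma_T^2/T}\big)$; because the portfolio has bounded gross exposure, $\gamma_T(0)=\var((\bw'\bR_t)^2)=O(1)$, so with Assumption \ref{a32new}(ii) one has $\sigma_T^2=O(1)=o(\log N)$, whence $\sqrt{\sigma_T^2/T}=o\big(\sqrt{\tfrac{\log N}{T}}\big)$ and the stated bound on $\hU_S(\tau)$ follows.
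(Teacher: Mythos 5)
Your proposal is correct and follows essentially the same route as the paper: the same split into the truncation bias $2\sum_{h>L}\gamma_T(h)$, demeaning/edge effects, and a centered fluctuation term controlled by a Chebyshev-type second-moment bound resting on the 8th-moment and strong-mixing assumptions, with an identical argument for the second display ($L^3=o(T\sigma_T^4)$ being exactly $L^{3/2}T^{-1/2}=o(\sigma_T^2)$, and $\sigma_T^2=O(1)=o(\log N)$). The only difference is bookkeeping in how the fluctuation is aggregated over lags: the paper (Lemmas \ref{la.2} and \ref{la.3}) establishes the uniform bound $\max_{h\leq L}|\hgamma(h)-\gamma_T(h)|=O_p(\sqrt{L/T})$ via a union bound and then multiplies by $2L$, whereas you bound the variance of the summed fluctuation directly by $O(L^3/T)$ through fourth-order covariance bookkeeping; both yield the same rate $O_p(L^{3/2}T^{-1/2})$.
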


The following theorem gives the limiting distribution of the estimated risk. It also demonstrates that $\hU_S(\tau)$ is a valid H-CLUB for $|\bw'(\bS-\hSig)\bw|$.

\begin{thm}\label{t31}
 Under the assumptions of Lemma \ref{l31},  as $T, N\rightarrow\infty$,
 $$
\left[\var\left(\sum_{t=1}^T(\bw'\bR_t)^2\right)\right]^{-1/2}   T\bw'(\bS-\Sig)\bw\rightarrow^d\mathcal{N}(0,1),
$$
and for any $\tau>0,$
 $$
 P\left(|\bw'(\bS-\Sig)\bw|\leq  \hU_S(\tau)\right)\rightarrow1-\tau.
 $$
 \end{thm}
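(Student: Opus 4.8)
The plan is to recast the statement as a central limit theorem for a standardized sum of weakly dependent random variables, and then to read off the coverage probability by Slutsky's theorem. Set $X_t=(\bw'\bR_t)^2$ and $Y_t=X_t-\bw'\Sig\bw$. Since $E\bR_t=\bzero$ and $\cov(\bR_t)=\Sig$, we have $EX_t=\bw'\Sig\bw$, so that $T\bw'(\bS-\Sig)\bw=\sum_{t=1}^TY_t=:S_T$ and $\var(\sum_{t=1}^TX_t)=\var(S_T)=:V_T$. Thus the first display is precisely $S_T/\sqrt{V_T}\rightarrow^d\mathcal{N}(0,1)$. The $Y_t$ are mean-zero and, for each fixed $T$, strictly stationary; because $Y_t$ is a measurable function of $\bR_t$ alone, the sequence inherits strong mixing with $\alpha_Y(h)\leq\alpha_R(h)\leq\exp(-Mh^{r_0})$. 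Moreover $\|\bw\|_1\leq c$ together with Assumption \ref{a31}(ii) yields, by Jensen's inequality applied to the weights $|w_i|/\|\bw\|_1$, the uniform bound $\sup_{T,t}E|\bw'\bR_t|^8\leq c^8M$, hence $\sup_{T,t}EY_t^4\leq C<\infty$. The essential subtlety is that $\{Y_t\}$ is a \emph{triangular array}: its law changes with $T$ through $N=N_T$ and $\bw$, and its per-term variance $\gamma_T(0)$ may tend to zero for a diversified $\bw$.

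First I would pin down the normalization. By stationarity, $V_T=T\gamma_T(0)+2\sum_{h=1}^{T-1}(T-h)\gamma_T(h)=T\big[\gamma_T(0)+2\sum_{h=1}^{T-1}\gamma_T(h)\big]-2\sum_{h=1}^{T-1}h\gamma_T(h)$. Assumption \ref{a32new}(ii) ($\sum_h|\gamma_T(h)|=O(1)$ and $\sum_{h=1}^Th\gamma_T(h)/T=o(\sigma_T^2)$), together with the tail negligibility $\sum_{h\geq T}\gamma_T(h)=o(\sigma_T^2)$ guaranteed by the extra hypotheses of Lemma \ref{l31}, then gives $V_T=T\sigma_T^2(1+o(1))$, so the normalizing constant is $\sqrt{T}\,\sigma_T$ up to $1+o(1)$.

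The core is a triangular-array central limit theorem under strong mixing, which I would establish by Bernstein's big-block/small-block method. Partition $\{1,\dots,T\}$ into alternating big blocks of length $p=p_T$ and small blocks of length $q=q_T$ with $q/p\to0$, $p/T\to0$, and $k\approx T/(p+q)$ blocks, where $q_T$ grows slowly so that $k\,\alpha_Y(q_T)\to0$ (which the exponential mixing rate permits, e.g. $q_T\asymp(\log T)^{1/r_0}$). The argument has three ingredients. (a) The small blocks are negligible: their total variance is $O\big(kq[\gamma_T(0)+\sum_h|\gamma_T(h)|]\big)$, which divided by $V_T\asymp T\sigma_T^2$ tends to $0$ once $q/p\to0$, by Assumption \ref{a32new}(ii). (b) The big-block partial sums are asymptotically independent: a Volkonskii--Rozanov type characteristic-function bound shows that the joint characteristic function of the standardized big-block sums factorizes up to an error of order $k\,\alpha_Y(q_T)\to0$. (c) For the resulting independent big blocks I would apply the Lyapunov central limit theorem, verifying $\sum_j EB_j^4/(\sum_j\var B_j)^2\to0$, where $B_j$ is the $j$-th big-block sum; here $\var B_j\gtrsim p\,\sigma_T^2$ from the variance relation, while $EB_j^4$ is controlled by a Rosenthal-type moment inequality for strong-mixing sequences using $\sup_{T,t}EY_t^4\leq C$.

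Given the CLT $S_T/\sqrt{V_T}\rightarrow^d\mathcal{N}(0,1)$, the second assertion is routine. Combining it with $V_T=T\sigma_T^2(1+o(1))$ gives $\sqrt{T}\,\bw'(\bS-\Sig)\bw/\sigma_T\rightarrow^d\mathcal{N}(0,1)$, and Lemma \ref{l31} supplies $\hsig^2/\sigma_T^2\rightarrow^p1$; Slutsky's theorem then yields $\sqrt{T}\,\bw'(\bS-\Sig)\bw/\hsig\rightarrow^d\mathcal{N}(0,1)$. Since $\hU_S(\tau)=z_{\tau/2}\sqrt{\hsig^2/T}$ and $z_{\tau/2}$ is the upper $\tau/2$ quantile, it follows that $P(|\bw'(\bS-\Sig)\bw|\leq\hU_S(\tau))=P(|\sqrt{T}\,\bw'(\bS-\Sig)\bw/\hsig|\leq z_{\tau/2})\to P(|\mathcal{N}(0,1)|\leq z_{\tau/2})=1-\tau$. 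I expect the main obstacle to be step (c): because $\gamma_T(0)=\var(Y_t)$ can vanish, the Rosenthal bound on $EB_j^4$ must be sharp enough to separate the leading $(\var B_j)^2$ term from the fourth-cumulant contribution, so that the Lyapunov ratio beats the $\sigma_T^4$ in the denominator. This is the quantitative counterpart of the condition $L^3=o(T\sigma_T^4)$ appearing in Lemma \ref{l31}, and it forces a careful tuning of $p_T$ and $q_T$.
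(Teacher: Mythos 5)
Your proposal is correct in outline and shares the paper's overall skeleton: the same recasting via $Z_{T,t}=(\bw'\bR_t)^2-E(\bw'\bR_t)^2$, the same moment bound $E(\bw'\bR_t)^4\leq\max_i ER_{it}^4\|\bw\|_1^4=O(1)$, the same variance asymptotics $V_T=T\sigma_T^2(1+o(1))$ via the condition $\sum_{h\leq T}h\gamma_T(h)/T=o(\sigma_T^2)$, and the same Slutsky step using Lemma \ref{l31} to replace $\sigma_T^2$ by $\hsig^2$. Where you genuinely diverge is the core CLT: the paper does not run a blocking argument at all, but instead verifies the hypotheses of Theorem 2.1 of Peligrad (1996) --- a ready-made CLT for triangular arrays of strong-mixing sequences --- using Davydov's inequality to get $|\gamma_T(h)|\leq M_2\exp(-Mh^{r_0}/4)$, the assumption $\alpha_R(T)=o(\gamma_T(0))$ to control the correlation condition $|\mathrm{Corr}(Z_{T,t},Z_{T,t+T})|=o(1)$, and the eighth-moment bound for the Lindeberg condition. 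Your Bernstein big-block/small-block construction is essentially a self-contained re-derivation of what Peligrad's theorem packages away; it buys transparency (one sees exactly where $T\sigma_T^4\to\infty$, equivalently $L^3=o(T\sigma_T^4)$ with $L\to\infty$, enters through the Lyapunov ratio $\approx C/(T\sigma_T^4)+C/k$), at the cost of the delicate tuning you flag at the end. The paper's route is shorter but hides the degenerate-variance difficulty inside the cited theorem, whose conditions it must (and does) check explicitly.

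One step of yours is stated too loosely: in (a) you claim the small blocks are negligible ``once $q/p\to0$,'' but their variance contribution relative to $V_T\asymp T\sigma_T^2$ is of order $q\bigl[\gamma_T(0)+\sum_h|\gamma_T(h)|\bigr]/(p\sigma_T^2)$, and since $\gamma_T(0)+\sum_h|\gamma_T(h)|=O(1)$ need not be comparable to $\sigma_T^2$ (which may vanish for diversified $\bw$), you actually need $q/(p\sigma_T^2)\to0$. This is repairable --- choose $p\gg q/\sigma_T^2$, which is compatible with $p/T\to0$ because $T\sigma_T^2\geq T\sigma_T^4/C\to\infty$ under the assumptions of Lemma \ref{l31} --- but it is an additional constraint on the block sizes, of the same nature as the fourth-moment issue you already identify in step (c), and it should be folded into the ``careful tuning'' explicitly rather than attributed to $q/p\to0$ alone.
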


As a result  $\hU(\tau)=z_{\tau/2}\sqrt{\hsig^2/T}$ is an H-CLUB with confidence level $(1-\tau)100\%$ and is   data-driven once a user-specified $L$ is determined.  Compared to the traditional bound $\xi_T$, $\hU_S(\tau)$ can be easily calculated for any given time series data.  The scale of  $\hU_S(\tau)$ is  much smaller  than that of $\xi_T.$ Our simulation results show that even for a small $\tau$ (e.g., $\tau=0.01$), the magnitude of $\hU(\tau)$ is much smaller than the crude bound $\xi_T$. (See Table \ref{tab:3} in Section  \ref{simulation}.)

By the $\delta$-method, we have the following corollary for the risk estimation.
Define $$\hat R(\bw)=\sqrt{\bw'\bS\bw},\quad R(\bw)=\sqrt{\bw'\Sig\bw}.$$
\begin{cor}\label{co31}
Under  the assumptions of Lemma \ref{l31},  for any $\tau>0,$ as $T, N\rightarrow\infty$,
$$P\left(|\hat R(\bw)-R(\bw)|\leq \hU_S(\tau)/\sqrt{4\bw'\bS\bw}\right)\rightarrow1-\tau.$$
\end{cor}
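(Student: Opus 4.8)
The plan is to apply the $\delta$-method to the map $g(x)=\sqrt{x}$, which turns the variance-scale central limit theorem of Theorem \ref{t31} into a risk-scale one for $\hat R(\bw)-R(\bw)$, and then to Studentize using the consistent variance estimate of Lemma \ref{l31}. First I would reformulate Theorem \ref{t31}: because $\sum_{h=1}^{T}h\gamma_T(h)/T=o(\sigma^2_T)$ and $\sum_h|\gamma_T(h)|=O(1)$ by Assumption \ref{a32new}(ii), one has $\var\big(\sum_{t=1}^T(\bw'\bR_t)^2\big)=T\sigma_T^2(1+o(1))$, so Theorem \ref{t31} is equivalent to
$$
\frac{\sqrt{T}\,\bw'(\bS-\Sig)\bw}{\sigma_T}\rightarrow^d\mathcal{N}(0,1).
$$
Since $\sqrt{4\bw'\bS\bw}=2\hat R(\bw)$ and $\hU_S(\tau)=z_{\tau/2}\sqrt{\hsig^2/T}$, the event in the corollary is exactly $\{|Z_T|\leq z_{\tau/2}\}$ with
$$
Z_T=\frac{2\hat R(\bw)\sqrt{T}\,(\hat R(\bw)-R(\bw))}{\hsig},
$$
so the claim reduces to showing $Z_T\rightarrow^d\mathcal{N}(0,1)$, since $z_{\tau/2}$ is the upper $\tau/2$ standard-normal quantile and hence $P(|Z_T|\leq z_{\tau/2})\to 1-\tau$.

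Next I would expand $g$ about $\bw'\Sig\bw=R(\bw)^2$ in mean-value form,
$$
\hat R(\bw)-R(\bw)=\frac{\bw'(\bS-\Sig)\bw}{2R(\bw)}-\frac{\big(\bw'(\bS-\Sig)\bw\big)^2}{8\,\xi^{3/2}},
$$
where $\xi$ lies between $\bw'\bS\bw$ and $\bw'\Sig\bw$. Multiplying through by $2R(\bw)\sqrt{T}/\sigma_T$, the first term reproduces the standard normal quantity of the previous paragraph, so it remains to control the scaled remainder. Using $\bw'(\bS-\Sig)\bw=O_p(\sigma_T/\sqrt{T})$ from the reformulated theorem, the remainder is of order
$$
\frac{R(\bw)\sqrt{T}}{\sigma_T}\cdot\frac{\sigma_T^2/T}{\xi^{3/2}}=O_p\!\left(\frac{R(\bw)\,\sigma_T}{\sqrt{T}\,\xi^{3/2}}\right).
$$
This is where the main work lies: one must keep $\xi$ bounded away from $0$. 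For a diversified portfolio with $\|\bw\|_1\leq c$ the systematic risk does not vanish, so $\bw'\Sig\bw=R(\bw)^2$ is bounded below by a positive constant; together with $\sigma_T=O(1)$ this forces the scaled remainder to be $O_p(T^{-1/2})=o_p(1)$, giving $2R(\bw)\sqrt{T}(\hat R(\bw)-R(\bw))/\sigma_T\rightarrow^d\mathcal{N}(0,1)$.

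Finally I would pass from this limit to $Z_T$ by Slutsky's theorem, writing
$$
Z_T=\frac{2R(\bw)\sqrt{T}\,(\hat R(\bw)-R(\bw))}{\sigma_T}\cdot\frac{\hat R(\bw)}{R(\bw)}\cdot\frac{\sigma_T}{\hsig}.
$$
The first factor converges to $\mathcal{N}(0,1)$ by the previous step; the third tends to $1$ in probability by Lemma \ref{l31}, which gives $\hsig^2/\sigma_T^2\to^p 1$; and the second also tends to $1$ in probability, since $\bw'(\bS-\Sig)\bw=O_p(\sigma_T/\sqrt{T})=o_p(1)$ implies $\hat R(\bw)/R(\bw)=\sqrt{\bw'\bS\bw/\bw'\Sig\bw}\to^p 1$ by the continuous mapping theorem, again using that $\bw'\Sig\bw$ is bounded away from $0$. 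Slutsky then yields $Z_T\rightarrow^d\mathcal{N}(0,1)$, and the probability statement follows. The one genuine obstacle throughout is the uniform lower bound on $\bw'\Sig\bw$, which the diversification constraint $\|\bw\|_1\leq c$ together with the factor structure supplies, and which is needed both to bound the $\delta$-method remainder and to obtain consistency of $\hat R(\bw)$.
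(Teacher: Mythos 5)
Your proposal is correct and takes the same route as the paper: the paper's entire proof of Corollary \ref{co31} is the single remark that it ``follows straightforward from applying the delta method'' to the Studentized limit $\sqrt{T/\hsig^2}\,\bw'(\bS-\Sig)\bw\rightarrow^d\mathcal{N}(0,1)$ established at the end of the proof of Theorem \ref{t31}, and your argument is exactly that delta-method computation written out in full (mean-value remainder, Slutsky, and consistency of $\hsig^2$ from Lemma \ref{l31}). One small caveat: the lower bound on $\bw'\Sig\bw$ that you correctly identify as essential is implicitly assumed by the paper as well, but it cannot be attributed to ``the factor structure'' here, since the sample-covariance result of this section assumes no factor model --- it has to be taken as an implicit regularity condition on the portfolio (e.g.\ $\bw'\Sig\bw$ bounded away from zero, or at least $\sigma_T=o(\sqrt{T}\,\bw'\Sig\bw)$).
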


\subsection{Factor-based risk estimator}

Let us now approach the problem via factor analysis.  We assume
\begin{equation}\label{e3.1}
\bR_t=\bB\bff_t+\bu_t,
\end{equation}
where in this section, $\{\bff_t\}_{t=1}^T$ are observed common factors.    
 In the approximate factor model, the idiosyncratic covariance $\Sig_u$ is non-diagonal. However, the risk component $\bw'(\hSig_u-\Sig_u)\bw$  introduced by the idiosyncratic error can be diversified away by a selected portfolio allocation vector. Hence the estimation error of the risk only comes from the systematic error brought by the common factors.
 Compared to the sample covariance based risk estimator, factor analysis always gives strictly positive risk estimators even  when $N>T$ for any nonzero allocation vector $\bw$.


For the factor-based risk estimation,  a different set of assumptions are needed instead of those in Section 3.3. First of all, the factor model is assumed to be conditionally sparse, in the sense that $\Sig_u$ is a sparse matrix.  We employ the approximate sparsity assumption in Bickel and Levina (2008) as follows:

\begin{assum}\label{ass3.5}
 There is $q\in[0,1)$ such that
$$
s_N\equiv\max_{i\leq N}\sum_{j=1}^N|\Sigma_{u,ij}|^q=o(\min\{(T/\log N)^{(1-q)/2}, N^{(1-q)/2}\}).
$$
\end{assum}
When $q=0$ we define $s_N=\max_{i\leq N}\sum_{j=1}^NI(\Sigma_{u,ij}\neq0)$ as the maximum number of nonvanishing elements in each row, and the assumption requires that $s_N = o((T/\log N)^{1/2}, N^{1/2})$.   Assumption \ref{ass3.5}, though slightly stronger than those in Chamberlain and Rothschild (1983),   is quite  meaningful  in practice.  For example,  when the idiosyncratic components represent firms' individual shocks, they are either uncorrelated  or weakly correlated among the firms across different industries, because  the industry specific
components are not pervasive for the whole economy (Connor and Korajczyk 1993).

\begin{assum} \label{a3.5}(i) $\{\bu_t, \bff_t\}_{t=1}^T$ is strictly stationary, $\{\bu_t\}_{t=1}^T$ and $\{\bff_t\}_{t=1}^T$ are independent,  and $Eu_{it}=Ef_{jt}=0$ for all $i, j $.\\
 (ii)  There exist $r_1, r_2>0$ and $b_1, b_2>0$, such that for any $s>0$, $i\leq p$ and $j$,
\begin{equation*}
P(|u_{it}|>s)\leq\exp(-(s/b_1)^{r_1}), \quad P(|f_{jt}|>s)\leq \exp(-(s/b_2)^{r_2}).
\end{equation*}
(iii) There is $C>0$ such that $C^{-1}<\lambda_{\min}(\Sig_u)\leq\lambda_{\max}(\Sig_u)<C$, $\max_{i\leq N}ER_{it}^2<C$, $\|\bB\|_{\max}<C$ and $\lambda_{\min}(\cov(\bff_t))>C^{-1}$.
 \end{assum}

 Let $\mathcal{F}_{-\infty}^0$ and $\mathcal{F}_{T}^{\infty}$ denote the $\sigma$-algebras generated by $\{(\bff_t,\bu_t): -\infty\leq t\leq 0\}$ and  $\{(\bff_t,\bu_t): T\leq t\leq \infty\}$ respectively.  Define the mixing coefficient
$
\alpha_f(T)=\sup_{A\in\mathcal{F}_{-\infty}^0, B\in\mathcal{F}_{T}^{\infty}}|P(A)P(B)-P(AB)|.
$
Let $$\gamma_f(h)=\cov((\bw'\bB\bff_t)^2, (\bw'\bB\bff_{t+h})^2)$$ for $h\geq 0.$ It follows from the $\alpha$-mixing condition that $\sum_{h=1}^{\infty}|\gamma_f(h)|=O(1)$ (see Lemma  \ref{la.7} in the appendix). In addition, define
 \begin{equation}\label{eq3.3}
 \sigma_f^2=\gamma_f(0)+2\sum_{h=1}^{\infty}\gamma_f(h).
 \end{equation}

\begin{assum} \label{a32}
(i) There exists   $r_3>0$  and $M>0$ satisfying:  for all $T\in\mathbb{Z}^+$,
$$\alpha_f(T)\leq \exp(-MT^{r_3}).$$
(ii)  $ \sum_{h=1}^{\infty}h\gamma_f(h)/T=o(\sigma_f^2)$   and $\alpha_f(T)=o(\gamma_f(0))$ as $T, N\rightarrow\infty.$
\end{assum}


\begin{assum} \label{a37}
    $\bw'\Sig_u\bw=o(\sigma_f^4 + \sigma_f T^{-q/2}(\log N)^{-(1-q)/2}s_N^{-1})$.
\end{assum}
Note that $\sigma_f^4=O(1)$. This assumption allows $\sigma_f^4$ to decay as $N$ increases due to diversified allocation vectors.
Recall that $q$ is defined in Assumption \ref{ass3.5}.  Assumption \ref{a37} requires $\|\bw\|=o(1)$, which assumes a diversified portfolio to reduce the idiosyncratic risk. To illustrate the intuition, consider the following simple example.
\begin{exm}\label{ex3.2}
Consider a one-factor model on the asset returns:
$$
R_{it}=b_if_t+u_{it}
$$
where $\var(f_t^2)>0$ and $\Sig_u$ is a diagonal matrix. Hence $q=0$ and $s_N=1$. For simplicity, suppose $\{\bff_t\}_{t=1}^T$ are independent across $t$, and thus  $\sigma_f^2=\gamma_f(0)=(\bw'\bB)^4\var(f_t^2)$. As the eigenvalues of $\Sig_u$ are bounded away from   zero and infinity, Assumption \ref{a37} is equivalent to:
\begin{equation}\label{e3.4}
\bw'\bw=o((\bw'\bB)^8 + (\bw'\bB)^2/\sqrt{\log N}),
\end{equation}
which holds if $\bw$ is ``diversified" enough. For example, the  equal-weight allocation $\bw=(1/N,\cdots,1/N)$ gives $\bw'\Sig_u\bw=O(N^{-1})$.  Writing  $C_N=|N^{-1}\sum_{i=1}^Nb_i|$, then (\ref{e3.4}) holds as long as $N^{-1}\sqrt{\log N}=o(C_N^8).$ This is true since $C_N$ is often bounded away from zero. $\square$
 \end{exm}

To construct H-CLUB, we need to first estimate $\gamma_f(h)$.  For $\hcov(\bff_t)=T^{-1}\sum_{t=1}^T\bff_t\bff_t', $   define
$$
    \hgamma_f(h)=T^{-1}\sum_{t=1}^{T-h} [(\bw'\hB\bff_{t+h})^2-\bw'\hB\hcov(\bff_t)\hB'\bw][(\bw'\hB\bff_{t})^2-\bw'\hB\hcov(\bff_t)\hB'\bw],
$$
where $\hB$ is the least squares estimator of $\bB$. For some $L=L(T)$,  define
\begin{equation}\label{e37}
\hsig_f^2=\hgamma_f(0)+2\sum_{l=1}^L\hgamma_f(h), \quad  \hU_f(\tau)=z_{\tau/2}\sqrt{\hsig_f^2/T}.
\end{equation}
 Let $\beta=3r_1^{-1}+1.5r_2^{-1}+r_3^{-1}$.
\begin{lem}\label{l32}
Suppose $ (\log N)^{2\beta+2}=o(T)$ and $L\sqrt{(L+\log N)/T}+\sum_{h>L}\gamma_f(h)=o(  \sigma^2_f   )$. Under Assumptions \ref{ass3.5}-\ref{a37},
 $$
|\hsig_f^2-  \sigma^2_f   |=O_p(L\sqrt{\frac{L+\log N}{T}}+\sum_{h>L}\gamma_f(h)),
$$
and
 $$\hU_f(\tau)=o\left(\sqrt{\frac{\log N}{T}}\right).
 $$
\end{lem}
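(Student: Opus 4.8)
The plan is to compare the feasible statistic $\hsig_f^2$, which uses the estimated loadings $\hB$, against an infeasible version built from the true loadings, and then to dispatch the latter by a standard long-run-variance (HAC) argument. Write $g_t=\bw'\bB\bff_t$ and $\hat g_t=\bw'\hB\bff_t$, and set $\widetilde\gamma_f(h)=T^{-1}\sum_{t=1}^{T-h}(g_{t+h}^2-\bar m)(g_t^2-\bar m)$ with $\bar m=T^{-1}\sum_{t=1}^T g_t^2$, and $\widetilde\sigma_f^2=\widetilde\gamma_f(0)+2\sum_{h=1}^L\widetilde\gamma_f(h)$. I would split
$$\hsig_f^2-\sigma_f^2=(\hsig_f^2-\widetilde\sigma_f^2)+(\widetilde\sigma_f^2-\sigma_f^2),$$
and bound the two pieces separately: the first captures the cost of estimating $\bB$, the second is a purely time-series estimation error for the scalar stationary sequence $g_t^2$.

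For the infeasible term $\widetilde\sigma_f^2-\sigma_f^2$ I would follow the same route as Lemma \ref{l31}. The truncation at lag $L$ contributes a deterministic bias of order $\sum_{h>L}\gamma_f(h)$, while the stochastic error $\widetilde\gamma_f(h)-\gamma_f(h)$ is controlled for each $h\le L$ by a Bernstein-type inequality for $\alpha$-mixing sequences (Assumption \ref{a32}(i)); summing the $L$ lag terms yields the $O_p(L^{3/2}T^{-1/2})$ contribution. Here I would use that $g_t$ inherits sub-Weibull tails from $\bff_t$ through Assumption \ref{a3.5}(ii) and $\|\bB\|_{\max}<C$, so that all fourth-order moments of $g_t$ needed to bound $\var(\widetilde\gamma_f(h))$ are finite, and the condition $(\log N)^{2\beta+2}=o(T)$ makes the associated tail-truncation corrections negligible.

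The heart of the argument is the loading term $\hsig_f^2-\widetilde\sigma_f^2$. Writing $d_t=\hat g_t-g_t=\bw'(\hB-\bB)\bff_t$ and expanding $\hat g_t^2=g_t^2+2g_td_t+d_t^2$, the difference decomposes into cross terms of the form $T^{-1}\sum_t g_td_t(g_{t+h}^2-\bar m)$, the smaller quadratic terms in $d_t$, and the analogous discrepancy between the centerings $\bw'\hB\hcov(\bff_t)\hB'\bw$ and $\bar m$. Since the factors are observed, $\bw'(\hB-\bB)=\big[T^{-1}\sum_s(\bw'\bu_s)\bff_s'\big]\hcov(\bff_t)^{-1}$; using the independence of $\{\bu_t\}$ and $\{\bff_t\}$ (Assumption \ref{a3.5}(i)) together with the uniform bound $\|\hB-\bB\|_{\max}=O_p(\sqrt{\log N/T})$ and $\|\bw\|_1\le c$, I would establish $\|(\hB-\bB)'\bw\|=O_p(\sqrt{\log N/T})$. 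Writing $\ba=(\hB-\bB)'\bw$, each cross term equals $\ba'\big[T^{-1}\sum_t\bff_tg_t(g_{t+h}^2-\bar m)\big]$, hence is bounded by $\|\ba\|$ times an $O_p(1)$ average, i.e.\ $O_p(\sqrt{\log N/T})$ uniformly in $h\le L$; summing over the $L+1$ lags produces the $O_p(L\sqrt{\log N/T})$ part of the rate. Combining the two pieces gives $|\hsig_f^2-\sigma_f^2|=O_p(L\sqrt{(L+\log N)/T}+\sum_{h>L}\gamma_f(h))$.

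I expect the main obstacle to be the uniform-in-$h$ control of the stochastic cross terms under only weak (mixing) dependence and sub-Weibull tails: one must bound $\max_{h\le L}\|T^{-1}\sum_t\bff_tg_t(g_{t+h}^2-\bar m)\|$ and the centering discrepancies simultaneously, which requires a maximal Bernstein/Rosenthal inequality for products of dependent variables and careful bookkeeping of the tail exponents collected in $\beta=3r_1^{-1}+1.5r_2^{-1}+r_3^{-1}$. Finally, for the second conclusion, the hypothesis $L\sqrt{(L+\log N)/T}+\sum_{h>L}\gamma_f(h)=o(\sigma_f^2)$ upgrades the first bound to $\hsig_f^2=\sigma_f^2(1+o_p(1))$; since $\sigma_f^4=O(1)$ forces $\sigma_f^2=O(1)=o(\log N)$, we obtain $\hU_f(\tau)=z_{\tau/2}\sqrt{\hsig_f^2/T}=O_p(\sqrt{\sigma_f^2/T})=o_p(\sqrt{\log N/T})$, as claimed.
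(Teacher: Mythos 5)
Your proposal is correct and follows essentially the same route as the paper: the paper also splits each $\hgamma_f(h)-\gamma_f(h)$ into an infeasible part (true loadings, bounded via Chebyshev uniformly in $h\leq L$ as in Lemma \ref{la.2}(ii)) plus a loading-estimation part controlled through $\|\bw'(\hB-\bB)\|=O_p(\sqrt{\log N/T})$, together with the centering/edge corrections and the truncation bias $\sum_{h>L}\gamma_f(h)$, then multiplies the per-lag bound by $L$ and deduces the $\hU_f(\tau)$ claim from $\hsig_f^2=o_p(\log N)$. The only difference is bookkeeping: you split $\hsig_f^2$ globally into feasible-minus-infeasible and infeasible-minus-true, whereas the paper bounds $\max_{h\leq L}|\hgamma_f(h)-\gamma_f(h)|$ lag by lag, which is the same argument in a different order.
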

Hence   the H-CLUB has a smaller stochastic order than that of the crude bound $\xi_T.$

The following theorem shows that $\hU_f(\tau)$ is a valid H-CLUB for the risk estimation error, and can be computed easily from the data in practice.  Technically,  Theorems \ref{t32} and \ref{t33} (to be introduced in the next subsection) below are not simple applications of the triangular array central limit theorem.    We need to show that after thresholding, the idiosyncratic risk can be diversified away by the portfolio vector $\bw$, and the estimation error for the factor loadings  is asymptotically negligible even under high dimensionality.

\begin{thm}\label{t32} Suppose that the common factors are observable, and that the thresholded $\hSig_f$ (\ref{eq2.3}) is used as the covariance estimator. Under the assumptions of Lemma \ref{l32}, as $T, N\rightarrow\infty$,
 $$
\left[\var\left(\sum_{t=1}^T(\bw'\bB\bff_t)^2\right)\right]^{-1/2}   T\bw'(\hSig_f-\Sig)\bw\rightarrow^d\mathcal{N}(0,1),
$$
and for any $\tau>0,$
$$
 P\left(|\bw'(\hSig_f-\Sig)\bw|\leq \hU_f(\tau)\right)\rightarrow1-\tau.
 $$
\end{thm}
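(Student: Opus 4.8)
The plan is to isolate a single leading term that is a normalized partial sum of a stationary, strongly mixing scalar sequence, and then show every other contribution is of smaller order after multiplication by $T$. Writing $\bC=\cov(\bff_t)$, $\widehat\bC=\hcov(\bff_t)$ and $\bD=\hB-\bB$, I would start from the exact decomposition obtained by substituting $\hB=\bB+\bD$ into $\hSig_f=\hB\widehat\bC\hB'+\hSig_u$:
\begin{align*}
\bw'(\hSig_f-\Sig)\bw
&=\underbrace{\bw'\bB(\widehat\bC-\bC)\bB'\bw}_{\text{(I)}}
+\underbrace{2\,\bw'\bD\,\widehat\bC\,\bB'\bw}_{\text{(II)}} \\
&\quad+\underbrace{\bw'\bD\,\widehat\bC\,\bD'\bw}_{\text{(III)}}
+\underbrace{\bw'(\hSig_u-\Sig_u)\bw}_{\text{(IV)}}.
\end{align*}
Setting $g_t=(\bw'\bB\bff_t)^2$ and using $E\bff_t=0$, term (I) equals $T^{-1}\sum_{t=1}^T(g_t-Eg_t)$; this is the leading term, since multiplying by $T$ produces exactly $\sum_{t=1}^T(g_t-Eg_t)$, whose variance $\var(\sum_t(\bw'\bB\bff_t)^2)$ is the normalizing quantity in the statement.

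The crucial observation for term (II) is that the portfolio projection collapses the high-dimensional loading error onto a $K$-dimensional object, so that high dimensionality does not enter. Using the OLS identity $\bD=(T^{-1}\sum_t\bu_t\bff_t')\widehat\bC^{-1}$, the factor $\widehat\bC^{-1}\widehat\bC$ cancels and (II) reduces \emph{exactly} to $2T^{-1}\sum_t(\bw'\bu_t)(\bw'\bB\bff_t)$. By the independence of $\{\bu_t\}$ and $\{\bff_t\}$ in Assumption \ref{a3.5}(i) this is a mean-zero sum whose variance after scaling by $T$ is of order $(\bw'\Sig_u\bw)(\bw'\bB\bC\bB'\bw)$; Assumption \ref{a37} is calibrated precisely so that this is $o(\var(\sum_t g_t))$. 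Term (III) is smaller: it equals $\ba'\widehat\bC^{-1}\ba$ with $\ba=T^{-1}\sum_t(\bw'\bu_t)\bff_t$, which is $O_p(T^{-1}\bw'\Sig_u\bw)$ and hence $o_p(\sigma_f/\sqrt T)$ after scaling. For the idiosyncratic term (IV), I would invoke the thresholding consistency of $\hSig_u$ under the approximate-sparsity Assumption \ref{ass3.5} (as in Bickel--Levina and Fan--Liao--Mincheva), giving $\|\hSig_u-\Sig_u\|=O_p(s_N(\log N/T)^{(1-q)/2})$ and thus $|\text{(IV)}|\le\|\bw\|^2\|\hSig_u-\Sig_u\|$; Assumption \ref{a37} again forces $T\cdot|\text{(IV)}|=o_p(\sqrt T\,\sigma_f)$. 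The main obstacle is exactly this bookkeeping of (II)--(IV): one must check that the single diversification budget in Assumption \ref{a37} simultaneously dominates the cross term, the quadratic loading term, and the thresholding error, and does so uniformly as $N=N_T$ grows polynomially in $T$.

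Once (II)--(IV) are shown to be $o_p(\sqrt{\var(\sum_t g_t)}/T)$, it remains to prove $[\var(\sum_t g_t)]^{-1/2}\sum_{t=1}^T(g_t-Eg_t)\to^d\mathcal N(0,1)$. For this I would apply a triangular-array central limit theorem for strongly mixing sequences (necessary because $g_t$ depends on $N=N_T$), using the exponential mixing rate and the moment bounds of Assumptions \ref{a3.5}(ii) and \ref{a32}, together with the long-run-variance conditions of Assumption \ref{a32}(ii) that guarantee $\var(\sum_t g_t)/T\to\sigma_f^2$ and preclude variance degeneracy. Slutsky's theorem then yields the asserted limit for $T\bw'(\hSig_f-\Sig)\bw$. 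Finally, for the H-CLUB claim I would combine this with Lemma \ref{l32}, which gives $\hsig_f^2/\sigma_f^2\to^p1$; rewriting the event $\{|\bw'(\hSig_f-\Sig)\bw|\le\hU_f(\tau)\}$ as $\{\sqrt T\,|\bw'(\hSig_f-\Sig)\bw|/\hsig_f\le z_{\tau/2}\}$ and applying Slutsky once more, the probability converges to $P(|Z|\le z_{\tau/2})=1-\tau$ for standard normal $Z$.
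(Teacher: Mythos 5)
Your proposal follows essentially the same route as the paper's own proof: the identical four-term decomposition (the paper's $d_1,\dots,d_4$ in Appendix B.2), a strong-mixing triangular-array CLT (Peligrad, 1996) for the leading term $\bw'\bB(\hcov(\bff_t)-\cov(\bff_t))\bB'\bw$, negligibility of the cross, quadratic and thresholding terms under Assumption \ref{a37}, and Slutsky combined with Lemma \ref{l32} for the H-CLUB claim. The only substantive difference is in bookkeeping: where you compute the variance of the cross term directly and claim order $(\bw'\Sig_u\bw)(\bw'\bB\cov(\bff_t)\bB'\bw)$, the paper bounds $\|\bF\bE'\bw\|$ via Davydov's inequality, obtaining the power $(\bw'\Sig_u\bw)^{1/4}$ rather than your $(\bw'\Sig_u\bw)^{1/2}$ (your exact scaling is not available because $\{\bu_t\}$ is only mixing, not serially independent, so the off-diagonal covariances carry fixed higher-moment factors), but both bounds are absorbed by the same assumption and the conclusion is unaffected.
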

 \begin{remark}
 Similar to Corollary \ref{co31}, if we use $\hat R_f(\bw)=\sqrt{\bw'\hSig_f\bw}$ to estimate $R(\bw)=\sqrt{\bw'\Sig\bw}$, then applying a delta method gives
 $$
 P\left(|\hat R_f(\bw)-R(\bw)|\leq \hU_f(\tau)/\sqrt{4\bw'\hSig_f\bw}\right)\rightarrow1-\tau.
 $$Hence $ \hU_f(\tau)/\sqrt{4\bw'\hSig_f\bw}$ is a valid H-CLUB for $|\hat R_f(\bw)-R(\bw)|$.
 \end{remark}

 It is interesting to compare $\hU_f(\tau)$ with $\hU_S(\tau)$ and see if knowing the factor structure results in  a reduced upper bound.    This is equivalent to comparing $\hsig^2$ in (\ref{e32}) with $\hsig_f^2$ in (\ref{e37}). Essentially we are to compare the asymptotic variances of the estimated risks between a pure nonparametric risk estimator (sample covariance) and an estimator based on factor analysis. We will see  in the following section that when the factor structure  is specified, the factor-based risk estimator indeed gives a slightly smaller asymptotic variance.

 \subsection{Risk estimation with unknown factors}

Often the market assets' returns are driven by a few unknown factors. Hence the common factors $\bff_t$ may not be observable which makes the analysis more practical and challenging. In this case, we apply the  POET  estimator for $\Sig$ to handle the difficulty of not knowing the factors:\begin{equation}\label{eq3.8}
\hSig_P=\sum_{j=1}^K\hlam_j\hxi_j\hxi_j'+\bOmega
\end{equation}
with  $K$ being the number of common factors.  For simplicity, we will assume $K$ to be known, and in practice it can be estimated consistently using the BIC method (Bai and Ng 2002). 
Then $K$ in the above estimator can be replaced with its consistent estimator.

Under the conditional sparsity condition, Fan et al. (2011, 2013) showed that
\begin{equation}\label{e3.9}
\|\hSig_P^{-1}-\Sig^{-1}\|=O_p\left(s_N\left(\frac{\log N}{T}+\frac{1}{N}\right)^{1/2-q/2}\right)
\end{equation}
and when common factors are observable,
\begin{equation}\label{e3.10}
\|\hSig_f^{-1}-\Sig^{-1}\|=O_p\left(s_N\left(\frac{\log N}{T}\right)^{1/2-q/2}\right)
\end{equation}
 where $q$ and $s_N$ are defined in Assumption \ref{ass3.5}.
The term $1/N$  in (\ref{e3.9})  is the price for not  knowing $\bff_t$.  When $T=o(N\log N)$, the above convergence rates are the same.  Intuitively, as the dimensionality increases, more information about the common factors is collected, and eventually 
 the common factors can be treated as though they are known.  Moreover,  both  $\hSig_P$ and $\hSig_f$ are strictly positive definite for all large $N$ and $T$.

We will see that with large enough pool of assets and  a diversified portfolio allocation, the effect of estimating the unknown factors on the estimated risk is negligible. As a result, $\bw'\hSig_P\bw$ and $\bw'\hSig_f\bw$ have the same asymptotic limiting distribution.  For this purpose,  we impose  additional  conditions.

\begin{assum}\label{ass3.11}
As $N\rightarrow\infty$, the eigenvalues of $\bB'\bB/N$ are bounded away from both zero and infinity.
\end{assum}

 Intuitively, Assumption \ref{ass3.11}  means that the common factors should be pervasive, that is,  impact on a non-vanishing proportion of individual time series. It implies that the first $K$ eigenvalues of $\Sig$ are growing with rate $O(N)$, which are well separated from  the eigenvalues of $\Sig_u$. 
For identification, we  assume
$
\cov(\bff_t)=\bI_K$ and $\bB'\bB$ to be diagonal.
Consequently, $$\Sig=\bB\bB'+\Sig_u.$$ Write $\bB=(\bb_1,\cdots,\bb_N)'$.  The following assumption is common in the literature of high-dimensional factor analysis, e.g., Bai and Ng (2002), Bai (2003).

\begin{assum}\label{ass3.12} There is $M>0$ such that
 $E[N^{-1/2}(\bu_s'\bu_t-E\bu_s'\bu_t)]^4<M$ and $E\|N^{-1/2}\sum_{i=1}^N\bb_iu_{it}\|^4<M$.
\end{assum}

Motivated by  (\ref{e3.9}) and (\ref{e3.10}), we require $T=o(N\log N)$  so that the effect of estimating the common factors is first-order negligible. This is often true for the asset returns' time series data.  In addition, the portfolio vector $\bw$ should still be diversified enough. This leads to the following assumption:

\begin{assum}\label{a310}
$\sigma_f^2T\rightarrow\infty$, $\sigma_f^2N/T\rightarrow\infty$, and  $\bw'\Sig_u\bw=o(\sqrt{\sigma_f^2}s_N^{-1}N^{1/2-q/2}T^{-1/2}).$
\end{assum}
Assumption \ref{a310} can be verified similarly by an example like Example \ref{ex3.2}.

To define an H-CLUB for a factor model with unknown factors,  we first
apply the principal components method (Stock and Watson 2002 and Bai 2003)   to estimate $\sigma_f^2$.  Let $\hF=(\hf_1,\cdots,\hf_T)$ be a $K\times T$ matrix such that the rows of $\hF/\sqrt{T}$ are the eigenvectors corresponding to the $K$ largest eigenvalues of the $T\times T$ matrix $\bR'\bR$, where $\bR=(\bR_1,\cdots,\bR_T)$. Let $\hB=\bR\hF'/T$. Define
$$
\hgamma_P(h)=T^{-1}\sum_{t=1}^{T-h} [(\bw'\hB\hf_{t+h})^2-\bw'\hB\hB'\bw][(\bw'\hB\hf_{t})^2-\bw'\hB\hB'\bw].
$$
For some $L=L(T)\rightarrow\infty$, let
\begin{equation}\label{e38}
\hsig_P^2=\hgamma_P(0)+2\sum_{h=1}^L\hgamma_P(h), \quad  \hU_P(\tau)=z_{\tau/2}\sqrt{\hsig_P^2/T}.
\end{equation}

\begin{lem}  \label{l33} Suppose $L=o(\sqrt{N}\sigma_f^2)$. Under Assumptions \ref{ass3.5}-\ref{a310},
$$
|\hsig_P^2-  \sigma^2_f   |=O_p(L\sqrt{\frac{L+\log N}{T}}+\frac{L}{\sqrt{N}}+\sum_{h>L}\gamma_f(h))=o_p(\sigma_f^2).
$$
 and
 $$ \hU_P(\tau)=o\left(\sqrt{\frac{\log N}{T}}\right).
 $$
\end{lem}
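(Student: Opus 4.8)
The plan is to bridge $\hsig_P^2$ and $\sigma_f^2$ through an infeasible oracle built from the true but unobserved common components. Write $g_t=\bw'\bB\bff_t$ and $\widehat g_t=\bw'\hB\hf_t$, set $\bar g^2=T^{-1}\sum_{t=1}^Tg_t^2$, and define the oracle quantities
$$
\tilde\gamma_f(h)=T^{-1}\sum_{t=1}^{T-h}\big(g_{t+h}^2-\bar g^2\big)\big(g_t^2-\bar g^2\big),\qquad \tilde\sigma_f^2=\tilde\gamma_f(0)+2\sum_{h=1}^L\tilde\gamma_f(h).
$$
Then $|\hsig_P^2-\sigma_f^2|\le|\hsig_P^2-\tilde\sigma_f^2|+|\tilde\sigma_f^2-\sigma_f^2|$. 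The second term involves only the scalar, strictly stationary, exponentially $\alpha$-mixing series $g_t$ (Assumptions \ref{a3.5} and \ref{a32}) with sub-exponential tails, so it is handled by the same lag-window argument as in Lemma \ref{l32} — indeed more simply, since $\bB$ and $\bff_t$ are not estimated here — and is at worst $O_p(L\sqrt{(L+\log N)/T}+\sum_{h>L}\gamma_f(h))$. Introducing $\tilde\sigma_f^2$ serves precisely to quarantine this already-understood sampling error from the genuinely new effect of not observing the factors.

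The core of the argument is the first term $|\hsig_P^2-\tilde\sigma_f^2|$, the cost of substituting the principal-components estimate $\widehat g_t$ for $g_t$. Here $\hB\hf_t$ is the estimated common component of $\bB\bff_t$, which is identified without rotation ambiguity (the $K\times K$ rotation cancels in the product) under $\cov(\bff_t)=\bI_K$, $\bB'\bB$ diagonal, and the pervasiveness Assumption \ref{ass3.11}, so that $\widehat g_t-g_t=\bw'(\hB\hf_t-\bB\bff_t)$ is a weighted common-component error. Using
$$
\widehat g_t-g_t=\bw'(\hB\hf_t-\bB\bff_t),\qquad \hB=\bR\hF'/T=(\bB\bF+\bU)\hF'/T ,
$$
the standard principal-components algebra (Bai 2003, Fan et al. 2013) expresses this error through the factor-space estimation error and an idiosyncratic-projection term $\bw'(\bU\hF'/T)\hf_t$. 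The two $N^{-1/2}$-normalized fourth-moment bounds of Assumption \ref{ass3.12}, together with the diversification forced by Assumption \ref{a310} (so that $\bw'\Sig_u\bw$, and hence $\|\bw\|$, is small), are exactly what reduce these contributions from the naive $O(1)$ down to $O_p(N^{-1/2})$. Propagating this through $\widehat g_t^2-g_t^2=(\widehat g_t-g_t)(\widehat g_t+g_t)$, through the cross terms in $\hgamma_P(h)-\tilde\gamma_f(h)$ that pair a true component with an estimation error, and through the centering difference $\bw'\hB\hB'\bw-\bar g^2$, yields $\max_{0\le h\le L}|\hgamma_P(h)-\tilde\gamma_f(h)|=O_p(N^{-1/2})$; summing over the $L+1$ lags gives $|\hsig_P^2-\tilde\sigma_f^2|=O_p(L/\sqrt N)$.

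Combining the two bounds yields the stated rate. For the $o_p(\sigma_f^2)$ refinement: the hypothesis $L=o(\sqrt N\sigma_f^2)$ makes $L/\sqrt N=o(\sigma_f^2)$; the exponential mixing of Assumption \ref{a32}(i) forces $\sum_{h>L}\gamma_f(h)$ to decay geometrically in $L$, hence to be $o(\sigma_f^2)$ once $L\to\infty$ and $\sigma_f^2T\to\infty$; and the slow growth imposed on $L$ makes $L\sqrt{(L+\log N)/T}=o(\sigma_f^2)$ as in Lemma \ref{l32}. Finally, from $\hsig_P^2=\sigma_f^2(1+o_p(1))$ we get $\hU_P(\tau)=z_{\tau/2}\sqrt{\hsig_P^2/T}=O_p(\sqrt{\sigma_f^2/T})$, and since $\sigma_f^2=O(1)$ (because $\sigma_f^4=O(1)$, as noted below Assumption \ref{a37}) while $\log N\to\infty$, this is $o(\sqrt{\log N/T})$.

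The main obstacle is establishing $\max_{0\le h\le L}|\hgamma_P(h)-\tilde\gamma_f(h)|=O_p(N^{-1/2})$: one must carry the rotation $\bH$ cleanly through a centered quadratic statistic and show that the idiosyncratic-projection error $\bw'(\bU\hF'/T)\hf_t$ — which would be $O(1)$ for a single undiversified coordinate — collapses to $O_p(N^{-1/2})$ precisely because $\bw$ is diversified and because $\hf_t$ is asymptotically orthogonal to the idiosyncratic space in the averaged sense quantified by Assumption \ref{ass3.12}. Obtaining this uniformly in $h$, so that the $L$ lags cost only a factor $L$ rather than a larger power, is the delicate step; it is exactly this new $L/\sqrt N$ term, absent from the observed-factor Lemma \ref{l32}, that the hypothesis $L=o(\sqrt N\sigma_f^2)$ is tailored to render $o_p(\sigma_f^2)$.
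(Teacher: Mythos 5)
Your proposal is correct and follows essentially the same route as the paper: the paper's Lemma \ref{lb4} bounds $\max_{h\leq L}|\hgamma_P(h)-\gamma_f(h)|$ by exactly your two error sources --- the oracle sampling error for the true common components $(\bw'\bB\bff_t)^2$ (handled as in Lemma \ref{l32}) and the cost of substituting $\bw'\hB\hf_t$ for $\bw'\bB\bff_t$ including the centering difference (handled via the principal-components bounds of Lemma \ref{la.10}, i.e.\ the Bai (2003) identity and Fan et al.\ (2013) lemmas together with Assumption \ref{ass3.12} and diversification) --- and then sums over the $L$ lags, just as you do. The one imprecision is your intermediate claim that the substitution cost is purely $O_p(N^{-1/2})$ uniformly in $h$: since $\|\bw'(\hB-\bB\bH^{-1})\|=O_p(N^{-1/2}+\sqrt{\log N/T})$ (Lemma \ref{la.10}(i)), the per-lag substitution error also carries a $\sqrt{\log N/T}$ component, but this is harmless because after multiplying by $L$ it is dominated by the $L\sqrt{(L+\log N)/T}$ term already present in your (and the paper's) final bound, so the stated rate and the $o_p(\sigma_f^2)$ conclusion are unaffected.
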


The following theorem shows that $\hU_P(\tau)$ is an H-CLUB for $\bw'(\hSig_P-\Sig)\bw$, and  can be computed easily from the data.  Interestingly,  $\bw'\hSig_P\bw$ and $\bw'\hSig_f\bw$ have the same asymptotic limiting distribution.  The price  paid for not knowing the   factors is asymptotically negligible. 

\begin{thm}\label{t33} Suppose the common factors are unobservable, and $\hSig_P$ (\ref{eq3.8}) is used as the covariance estimator. Under the assumptions  of Lemma  \ref{l33}, as $T, N\rightarrow\infty$,
 $$
\left[\var\left(\sum_{t=1}^T(\bw'\bB\bff_t)^2\right)\right]^{-1/2}   T\bw'(\hSig_P-\Sig)\bw\rightarrow^d\mathcal{N}(0,1),
$$
and for any $\tau>0,$  $$
 P\left(|\bw'(\hSig_P-\Sig)\bw|\leq \hU_P(\tau)\right)\rightarrow1-\tau.
 $$
\end{thm}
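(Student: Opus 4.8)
The plan is to reduce the unknown-factor problem to the observable-factor problem already settled in Theorem~\ref{t32}, by arguing that the extra error from estimating the factors by principal components is asymptotically negligible. Since by Assumption~\ref{a32} the normalization satisfies $\var(\sum_{t=1}^T(\bw'\bB\bff_t)^2)=T\sigma_f^2(1+o(1))$, it suffices to prove that $T^{1/2}\sigma_f^{-1}\bw'(\hSig_P-\Sig)\bw\rightarrow^d\mathcal{N}(0,1)$. Throughout I use the stated identification $\cov(\bff_t)=\bI_K$, so that $\Sig=\bB\bB'+\Sig_u$ and $E(\bw'\bB\bff_t)^2=\bw'\bB\bB'\bw$.

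First I would record an exact algebraic identity. Because the $\hxi_j$ are eigenvectors of $\bS=\bR\bR'/T$ while the rows of $\hF/\sqrt{T}$ are eigenvectors of the dual matrix $\bR'\bR$, and $\hB=\bR\hF'/T$ with $\hF\hF'/T=\bI_K$, one has $\sum_{j=1}^K\hlam_j\hxi_j\hxi_j'=\hB\hB'=T^{-1}\sum_{t=1}^T\hB\hf_t\hf_t'\hB'$, hence $\bw'\hB\hB'\bw=T^{-1}\sum_{t=1}^T(\bw'\hB\hf_t)^2$. This yields the decomposition
\begin{align*}
\bw'(\hSig_P-\Sig)\bw
&=\underbrace{T^{-1}\sum_{t=1}^T\big[(\bw'\bB\bff_t)^2-\bw'\bB\bB'\bw\big]}_{\mathrm{(I)}}\\
&\quad+\underbrace{T^{-1}\sum_{t=1}^T\big[(\bw'\hB\hf_t)^2-(\bw'\bB\bff_t)^2\big]}_{\mathrm{(II)}}
+\underbrace{\bw'(\bOmega-\Sig_u)\bw}_{\mathrm{(III)}},
\end{align*}
in which (I) is exactly the term that drove the observable-factor limit in Theorem~\ref{t32}.

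I would then dispose of (I) and (III). For (III), the adaptive thresholding in $\bOmega$ together with the sparsity Assumption~\ref{ass3.5} controls the idiosyncratic estimation error at the POET rate, which carries the extra $1/\sqrt{N}$ term reflecting the price of unknown factors; Assumption~\ref{a310}, whose bound on $\bw'\Sig_u\bw$ is calibrated against exactly the $N^{1/2-q/2}T^{-1/2}$ rate, then forces $\bw'(\bOmega-\Sig_u)\bw=o_p(\sigma_f T^{-1/2})$, so the idiosyncratic risk is diversified away. For (I), I would invoke a triangular-array central limit theorem for the strictly stationary, $\alpha$-mixing sequence $\{(\bw'\bB\bff_t)^2\}$: the exponential mixing of Assumption~\ref{a32}(i), the exponential tails of Assumption~\ref{a3.5}(ii), $\sigma_f^2 T\rightarrow\infty$, and $\sum_h h\gamma_f(h)/T=o(\sigma_f^2)$ together yield $T^{1/2}\sigma_f^{-1}\,\mathrm{(I)}\rightarrow^d\mathcal{N}(0,1)$ through a Bernstein-blocking argument.

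The crux is (II), the price for not observing $\bff_t$. Factoring $(\bw'\hB\hf_t)^2-(\bw'\bB\bff_t)^2=(\bw'\hB\hf_t-\bw'\bB\bff_t)(\bw'\hB\hf_t+\bw'\bB\bff_t)$, the key fact from principal-components theory is that the common component is recovered up to a rotation that cancels: there is an invertible $\bH$ with $\hf_t=\bH\bff_t+\mathrm{err}$ and $\hB=\bB\bH^{-1}+\mathrm{err}$, so $\hB\hf_t$ tracks $\bB\bff_t$ and each $\bw'\hB\hf_t-\bw'\bB\bff_t$ is small. Using the pervasiveness Assumption~\ref{ass3.11} (which separates the top $K$ eigenvalues, of order $N$, from the remainder) and the fourth-moment bounds of Assumption~\ref{ass3.12}, I would bound $T^{-1}\sum_t(\bw'\hB\hf_t-\bw'\bB\bff_t)^2$ and, by Cauchy--Schwarz, the cross term. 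The main obstacle is to show that this \emph{aggregated} estimation error, once multiplied by $T^{1/2}\sigma_f^{-1}$, is $o_p(1)$: this is delicate because it entangles the joint loading/factor estimation error with the diversification rate of $\bw$, and it is exactly here that $\sigma_f^2 N/T\rightarrow\infty$ and $T=o(N\log N)$ in Assumption~\ref{a310} are used to annihilate the $1/N$ contribution. Establishing $T^{1/2}\sigma_f^{-1}\,\mathrm{(II)}=o_p(1)$ shows $\bw'\hSig_P\bw$ and $\bw'\hSig_f\bw$ share the same limit and completes the CLT. Finally, for the coverage claim I would combine the CLT with Lemma~\ref{l33}, which gives $\hsig_P^2/\sigma_f^2\rightarrow^p1$; by Slutsky's theorem $T^{1/2}\hsig_P^{-1}\bw'(\hSig_P-\Sig)\bw\rightarrow^d\mathcal{N}(0,1)$, whence $P(|\bw'(\hSig_P-\Sig)\bw|\le z_{\tau/2}(\hsig_P^2/T)^{1/2})\rightarrow1-\tau$.
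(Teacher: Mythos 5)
Your overall architecture coincides with the paper's own proof: your term (I) is exactly the paper's leading term $g_1=\bw'\bB\bigl(T^{-1}\sum_{t=1}^T\bff_t\bff_t'-\bI_K\bigr)\bB'\bw$, handled by the same mixing CLT (the paper invokes Peligrad's triangular-array theorem, i.e.\ Lemma \ref{la.8}); your (III) is the paper's $g_4=\bw'(\bOmega-\Sig_u)\bw$, killed by the POET operator-norm rate together with Assumption \ref{a310}; your (II) equals the paper's $g_2+g_3+g_5$ in (\ref{eqa.1}); and the coverage statement is finished by Lemma \ref{l33} plus Slutsky exactly as you describe.

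However, there is a genuine gap in your treatment of (II), and it is not a matter of routine detail. Cauchy--Schwarz reduces (II) to controlling $T^{-1}\sum_t(\bw'\hB\hf_t-\bw'\bB\bff_t)^2$, and since the systematic component $T^{-1}\sum_t(\bw'\bB\bff_t)^2$ is of order $\bw'\bB\bB'\bw=O(1)$ (it cannot be diversified away), you need $\sqrt{T}\,\sigma_f^{-1}\bigl(T^{-1}\sum_t(\bw'\hB\hf_t-\bw'\bB\bff_t)^2\bigr)^{1/2}=o_p(1)$. The generic principal-components rates you invoke (pervasiveness, Assumption \ref{ass3.11}, plus the fourth-moment bounds, Assumption \ref{ass3.12}) deliver only the element-wise bound $\|\bw'(\hB-\bB\bH^{-1})\|\leq \sqrt{K}\|\hB-\bB\bH^{-1}\|_{\max}\|\bw\|_1=O_p(N^{-1/2}+\sqrt{\log N/T})$ (Lemma \ref{la.10}(i)), whence after scaling the cross term is $O_p\bigl(\sqrt{T/N}\,\sigma_f^{-1}+\sqrt{\log N}\,\sigma_f^{-1}\bigr)$. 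The first piece vanishes under $\sigma_f^2N/T\rightarrow\infty$, but the second never does: $\sigma_f^2=O(1)$ while $\log N\rightarrow\infty$, so $\sqrt{\log N}\,\sigma_f^{-1}\rightarrow\infty$, and neither $T=o(N\log N)$ nor anything else in Assumption \ref{a310} rescues it. What is missing is the paper's finer argument (Lemma \ref{la.11}): write $\hB-\bB\bH^{-1}=\bB\bH^{-1}(\bH\bF-\hF)\hF'/T+\bE(\hF-\bH\bF)'/T+\bE\bF'\bH'/T$, bound the first two pieces by $O_p(N^{-1/2}+T^{-1})$ using $\|\hF-\bH\bF\|^2/T=O_p(N^{-1}+T^{-2})$, and --- crucially --- bound the third via Davydov's inequality (Lemma \ref{la.4}), which gives $\|\bw'\bE\bF'\|/T=O_p\bigl(T^{-1/2}(\bw'\Sig_u\bw)^{1/4}\bigr)$. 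It is this estimate, which converts the diversification of $\bw$ against $\Sig_u$ into the rate instead of an element-wise $\sqrt{\log N/T}$ loading rate times $\|\bw\|_1$, that makes $\sqrt{T}\,\sigma_f^{-1}\,\mathrm{(II)}=o_p(1)$ under Assumptions \ref{a37} and \ref{a310}; the crude bound suffices for the variance-consistency Lemma \ref{l33} but not for the CLT. Your sketch stalls precisely at the point you yourself flag as ``the main obstacle.''
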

  \begin{remark}
 Similarly,  if we define $\hat R_P(\bw)=\sqrt{\bw'\hSig_P\bw}$, then   $ \hU_P(\tau)/\sqrt{4\bw'\hSig_P\bw}$ is a valid H-CLUB for $|\hat R_P(\bw)-R(\bw)|$.
 \end{remark}

Knowing the factor-structure of the return $\bR_t$ improves the  estimation  efficiency relative to the sample covariance estimator.  This is demonstrated by the following theorem. 

\begin{thm} \label{t44}   Under the assumptions of Theorem \ref{t33},
$$\var\left[\sum_{t=1}^T(\bw'\bR_t)^2\right]> \var\left[\sum_{t=1}^T(\bw'\bB\bff_t)^2\right].$$
\end{thm}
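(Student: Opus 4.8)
The plan is to decompose each portfolio return into its systematic and idiosyncratic parts and show that the variance of the squared returns decomposes accordingly, with a strictly positive idiosyncratic remainder. Write $g_t=\bw'\bB\bff_t$ and $e_t=\bw'\bu_t$, so that $\bw'\bR_t=g_t+e_t$ and $(\bw'\bR_t)^2=g_t^2+2g_te_t+e_t^2$. By Assumption \ref{a3.5}(i) the processes $\{\bff_t\}$ and $\{\bu_t\}$ are independent and mean zero, hence $\{g_t\}$ and $\{e_t\}$ are independent scalar processes with $Eg_t=Ee_t=0$. Setting $X=\sum_{t=1}^T g_t^2$ and $Y=\sum_{t=1}^T(2g_te_t+e_t^2)$, the target difference is exactly $\var(X+Y)-\var(X)$.

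The first step is to show $\cov(X,Y)=0$, so that $\var(X+Y)=\var(X)+\var(Y)$. The slick way is the law of total variance conditioning on $\{\bff_t\}$: since $X$ is $\{\bff_t\}$-measurable and, by independence, $E[Y\mid\{\bff_t\}]=\sum_t(2g_t\,Ee_t+Ee_t^2)=T\,\var(e_t)$ is a constant, one obtains $\var(X+Y)=\var(X)+E[\var(Y\mid\{\bff_t\})]$. Equivalently, a direct term-by-term computation shows every cross-covariance vanishes: $E[g_t^2 g_se_s]=E[g_t^2g_s]\,Ee_s=0$, $E[g_te_te_s^2]=Eg_t\cdot E[e_te_s^2]=0$, and $E[g_te_t]=0$, all using independence of the two processes together with $Eg_t=Ee_t=0$. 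Either way $\var(Y)=4\var\big(\sum_t g_te_t\big)+\var\big(\sum_t e_t^2\big)$. Writing $\Gamma_g$ and $\Gamma_e$ for the $T\times T$ autocovariance matrices of $\{g_t\}$ and $\{e_t\}$, independence gives $\var\big(\sum_t g_te_t\big)=\sum_{t,s}\cov(g_t,g_s)\cov(e_t,e_s)=\tr(\Gamma_g\Gamma_e)$, so that $\var(X+Y)-\var(X)=4\tr(\Gamma_g\Gamma_e)+\var\big(\sum_t e_t^2\big)\ge 0$, since $\Gamma_g,\Gamma_e\succeq 0$ and the last term is a variance.

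The main obstacle is upgrading this nonnegativity to the strict inequality claimed, i.e. ruling out the degenerate cases, and here I would lean on the standing hypotheses. First, Assumption \ref{a310} requires $\sigma_f^2T\to\infty$, so $\sigma_f^2>0$ for all large $T$; since $\sigma_f^2$ is the long-run variance of $g_t^2$, this forces $\var(g_t^2)>0$ and hence $\var(g_t)>0$ (as $Eg_t=0$), giving $\tr(\Gamma_g)=T\var(g_t)>0$. Second, Assumption \ref{a3.5}(iii) yields $\var(e_t)=\bw'\Sig_u\bw\ge\lambda_{\min}(\Sig_u)\|\bw\|^2>0$, so $\{e_t\}$ is a non-degenerate stationary process; combined with the $\alpha$-mixing of Assumption \ref{a32}, its finite autocovariance matrix $\Gamma_e$ is positive definite (a singular $\Gamma_e$ would impose an exact linear-deterministic relation among finitely many $e_t$, incompatible with mixing and $\var(e_t)>0$). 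The elementary bound $\tr(\Gamma_g\Gamma_e)\ge\lambda_{\min}(\Gamma_e)\,\tr(\Gamma_g)>0$ then already delivers $\var(X+Y)-\var(X)\ge 4\tr(\Gamma_g\Gamma_e)>0$, irrespective of the nonnegative term $\var\big(\sum_t e_t^2\big)$. The one point that needs real care is precisely this positive-definiteness of $\Gamma_e$; pinning it down cleanly from the mixing and moment conditions is where I expect to spend most of the effort, since everything else in the argument is purely algebraic.
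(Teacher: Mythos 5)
Your proposal is correct, and its core coincides with the paper's own argument: the paper proves exactly the orthogonal decomposition you derive, namely $\var[\sum_t(\bw'\bR_t)^2]=\var[\sum_t(\bw'\bB\bff_t)^2]+\var[\sum_t(\bw'\bu_t)^2]+\var[2\sum_t\bw'\bB\bff_t\,\bw'\bu_t]$, by expanding $(\bw'\bR_t)^2$ and killing the cross-covariances using independence of $\{\bff_t\}$ and $\{\bu_t\}$ together with the zero means --- which is your $\cov(X,Y)=0$ step in direct, term-by-term form. Where you genuinely go beyond the paper is on the strictness of the inequality: the paper stops at the decomposition and simply asserts that the theorem ``follows,'' never verifying that the two extra variance terms are not both zero (and indeed $\var[\sum_t e_t^2]$ alone can vanish, e.g.\ for Rademacher-type $e_t$, so this is not automatic). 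Your treatment supplies the missing step via $\var(\sum_t g_te_t)=\tr(\Gamma_g\Gamma_e)\geq\lambda_{\min}(\Gamma_e)\tr(\Gamma_g)$, with $\tr(\Gamma_g)>0$ extracted from $\sigma_f^2T\rightarrow\infty$ in Assumption \ref{a310} and $\Gamma_e\succ0$ from nondegeneracy plus mixing. Your sketched claim that $\Gamma_e$ must be nonsingular is sound and can be completed cleanly: if $\Gamma_e$ were singular, a minimal-length linear relation $\sum_{t=1}^m a_te_{s+t}=c$ a.s.\ would hold at every shift $s$ by strict stationarity, so iterating makes $e_n$, for arbitrarily large $n$, an a.s.\ affine function of $(e_1,\ldots,e_{m-1})$; choosing a Borel set $S$ with $P(e_n\in S)\in(0,1)$ (possible since $\var(e_t)=\bw'\Sig_u\bw\geq\lambda_{\min}(\Sig_u)\|\bw\|^2>0$ by Assumption \ref{a3.5}(iii)), the event $\{e_n\in S\}$ lies in both $\sigma$-algebras defining the mixing coefficient, forcing $\alpha_f(\cdot)\geq P(e_n\in S)(1-P(e_n\in S))>0$ uniformly, which contradicts Assumption \ref{a32}(i). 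So your proof is, on the one point the paper takes for granted, more complete than the paper's; the only caveat is that your conclusion, like the paper's assumptions, is asymptotic in nature ($\sigma_f^2>0$ is guaranteed only for large $T$).
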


The difference of the above two variances is    actually small when $\bw$ is diversified enough, and this fact is further verified by our simulation results (see Tables \ref{tab:3} and \ref{tab:4} in Section \ref{simulation}). The reason is that the systematic risk cannot be diversified. On the other hand, factor analysis gives a strictly positive definite covariance estimator,  whereas the sample covariance may produce a risk estimator being zero for certain portfolio allocation vectors. The positive definiteness is particularly important  to estimate the optimal portfolio allocation vector. Furthermore, factor analysis interprets the structure of portfolio's risks. It is clearly seen that  the idiosyncratic risks  can be diversified away by the portfolio allocation. 

\section{Monte Carlo Examples}\label{simulation}
In this section, we examine the finite-sample performance of both the full confidence upper bound $\xi_T$ defined in (\ref{eq2.2}) and H-CLUB, based on three covariance estimators $\hSig$, using  portfolios $\bw$ with different gross exposure constraints. Graphical and numerical results illustrate that $\xi_T$ is indeed a very crude bound and H-CLUB has much better performance in general. 
The number of factors and length of time are both fixed with $K=3$ and $T=300$  respectively. The dimensionality $N$ gradually increases from 20 to 600.

Excess returns of the $i$th stock of a portfolio over the risk-free interest rate, $y_{it}$, is assumed to follow the Fama-French three-factor model [Fama and French(1992)]:
$$
y_{it} = \lambda_{i1}f_{1t}+\lambda_{i2}f_{2t}+\lambda_{i3}f_{3t}+u_{it}.
$$
The first factor is the excess return of the whole equity market, while the second and third factors are SMB (``small minus big" cap) and HML (``high minus low" book/price) respectively.  Using  US equity market data, we calibrate a submodel to generate the loadings $\bb_i=(\lambda_{i1},\lambda_{i2}, \lambda_{i3})'$, the idiosyncratic noises $\bu_t$ and the factors $\bff_t=(f_{1t},f_{2t}, f_{3t})'$. 

\subsection{Calibration}
To calibrate parameters in the model, we use the data on daily returns of S\&P 500's top 100 constituents ranked by market capitalization (on June $29^{th}$ 2012),  the data on 3-month Treasury bill rates, and daily return data of the Fama-French factors.  They are obtained from COMPUSTAT database, the data library of Kenneth French's website, and CRSP database respectively. The excess returns ($\tilde{\by}_t, \tilde{\bff}_t$) are analyzed for the period from July $1^{st}$, 2008 to June $29^{th}$ 2012, approximately 1000 trading days.

(1) Calculate the least square estimator $\tilde{\bB}$ of $\tilde{\by}_t = \bB\tilde{\bff}_t+\bu_t$, and compute the sample mean vector $\bmu_B$ and sample covariance matrix $\bSigma_B$ of all the row vectors of $\tilde{\bB}$. These parameters are reported in Table~\ref{tab:B}. The factor loadings $\{\bb_i\}_{i=1}^N$ of the simulated models are then generated from a trivariate Gaussian distribution $\mathcal{N}_3(\bmu_B,\bSigma_B)$.
\begin{table}[ht]
 \centering
 \begin{tabular}{l |llr}
 \hline
 $\hspace{1em}\bmu_B$            &   & $\bSigma_B$ &   \\
\hline
0.9833 &0.0921 & -0.0178 & 0.0436  \\
-0.1233 &-0.0178 & 0.0862 &-0.0211   \\
0.0839&0.0436 & -0.0211 & 0.7624 \\
\hline
\end{tabular}
\caption{Mean and covariance used to generate $\bb_i$}
\label{tab:B}
\end{table}

(2) Assume that the factors follow the stationary vector autoregressive VAR(1) model $\bff_t = \bmu + \bPhi\bff_{t-1}+\beps_t$ for some $3\times 3$ matrix $\bPhi$, where $\beps_t$ follows i.i.d $\mathcal{N}_3(0,\bSigma_{\epsilon})$.  The model parameters $\bPhi,\bmu$ and $\bSigma_{\epsilon}$ are calibrated using the daily excess returns of the Fama-French factors $\tilde{\bff}_t$. The covariance matrix $\cov(\bff_t)$ is then obtained by solving the linear equation $\cov(\bff_t) = \bPhi\cov(\bff_t)\bPhi'+\bSigma_{\epsilon}$. Results are summarized in Table~\ref{tab:f}.
\begin{table}[ht]
 \centering
 \begin{tabular}{l|ccc|ccr}
 \hline
 $\hspace{1em}\bmu$    &   & $\boldsymbol\Phi$ & && $\cov(\bff_t)$&\\
\hline
0.0260 &-0.1006& 0.2803 & -0.0365 & 3.2351&0.1783&0.7783\\
0.0211 &-0.0191 & -0.0944 &0.0186   &0.1783&0.5069&0.0102\\
-0.0043&0.0116& -0.0272 & 0.0272  &0.7783&0.0102&0.6586\\
\hline
\end{tabular}
\caption{Parameters used to generate $\bff_t$}
\label{tab:f}
\end{table}

(3) The error covariance matrix is sparse in our setting. For each fixed $N$, it is created by $\bSigma_u = \bD\bSigma_0\bD$, where $\bD = \text{diag}(\sigma_1,\cdots,\sigma_p)$. To be more specific, $\sigma_1,\cdots,\sigma_p$ are generated independently from a Gamma distribution $G(\alpha,\beta)$, in which $\alpha$ and $\beta$ are  selected to match the sample mean and sample standard deviation of the 100 standard deviations of the   errors $\tilde{\bu}_t=\tilde{\by}_t-\tilde{\bB}\tilde{\bff}_t$ (recall that each $\tilde{\bu}_t$ is 100 dimensional). An additional restriction is imposed on $\sigma_i$   that only values in between the minimum and maximum of the standard deviation of $\tilde{\bu}_t$ are accepted.
We then generate the off-diagonal entries of the correlation matrix $\bSigma_0$ independently from a Gaussian distribution, with mean and standard deviation equal to those of the sample correlations of the estimated residuals.  Moreover, absolute values of the off-diagonal entries are set to no greater than 0.95. Finally the hard-thresholding is applied to make $\bSigma_0$ sparse, where the threshold is set to be the smallest constant that makes $\bSigma_0$ positive definite.

\subsection{Representative portfolios}\label{representative}

We examine the performance of H-CLUB based on $\bw$ with a couple of different gross exposures. For a given exposure $c$ and given number of assets $N$, there are infinitely many portfolios $\bw$ that satisfy
$\sum_{i=1}^N w_i = 1$ and $\sum_{i=1}^N |w_i| = c$.  In order to be representative, we take some portfolios randomly from this set.  This task, which generates uniformly for the above set  in $R^N$, is of independent interest for portfolio optimization and research.  It is also challenging.

Let $w^+$ be the total long position and $w^-$ be the total short position.  Then, $w^+ = (c+1)/2$ and $w^{-} = (c-1)/2$.  For $c = 1$, there are no-short positions.  For $c > 1$, there are both long and short positions.  The identities (or indices) of long and short positions are hard to identify, but the following sampling scheme is a reasonable approximation:  The positive positions are determined by a Bernoulli trial (N times) with probability of success $w^+/(w^+ + w^-) = (c+1)/(2c)$.  Once the identities are determined, we can normalize them and the problem reduces to the case with $c=1$.
For the case with $c = 1$, the uniform distribution on the set $\{w_i:  \sum_{i=1}^N w_i = 1, w_i \geq 0\}$ can be generated from a normalized exponential distribution:
$$
    w_i = \zeta_i/\sum_{i=1}^N \zeta_i, \qquad \zeta_i \sim_{i.i.d.} \mbox{standard exponential}.
$$

Combination of the the above two steps, we can generate a randomly selected portfolio from its feasible set
as follows.
\begin{enumerate}
\item Generate a positive integer $k$, the number of stocks with positive weights in $\bw$, from a binomial distribution $\text{Bin}(N,\frac{c+1}{2c})$.

\item Let $\bw_+ = (w_1^+,\cdots,w_k^+)$ be a temporary vector of the positive weights in $\bw$. Generate independently $\{\zeta_i\}_{i=1}^k$ from the standard exponential distribution and set each $w_i^+ = (c+1)\zeta_i/(2\sum_{j=1}^k\zeta_j)$.

\item The temporary negative weights in $\bw_-=(w^-_1,\cdots,w^-_{N-k})$ are generated analogously with each $w_i^- = (1-c)\zeta_i/2\sum_{j=1}^{N-k}\zeta_j$, where $\{\zeta_j\}_{j=1}^{N-k}$ are obtained independently from the standard exponential distribution.

\item Take the portfolio weights $\bw$ as a random permutation of the vector $(\bw_+,\bw_-)$.
\end{enumerate}

\subsection{Simulation}

For each simulation with a given $c$, we fix  $T=300$ and gradually increase $N$ from 20 to 600 in a multiple of 20. For each fixed $N$, we use 50 different model parameters and 200 testing portfolios for each given set of model parameters so that a total of 10,000 portfolios were actually used.  In other words, we iterate the following steps for $50$ times, record values of $R(\bw), \Delta=|\bw'(\hSig-\Sig)\bw|, \xi_T$ and $\widehat{U}(\tau)$, and compute their own means and standard deviations. The details are summarized as follows:
\begin{enumerate}
\item Generate $\{\bb_i\}_{i=1}^N$ independently from $\mathcal{N}_3(\bmu_B,\bSigma_B)$. Set $\bB=(\bb_1,\cdots,\bb_p)'$.

\item Generate $\{\bu_t\}_{t=1}^T$ independently from $N_p(\bzero,\bSigma_u)$.

\item Generate $\{\bff_t\}_{t=1}^T$ from a VAR(1) model  $\bff_t = \bmu + \bPhi\bff_{t-1}+\beps_t$ with parameters specified in the calibration part.

\item Calculate $\by_t = \bB\bff_t+\bu_t$ for $t=1,\cdots, T$.

\item Calculate the sample covariance matrix $\bS = T^{-1}\sum_{t=1}^T(\by_t-\bar{\by})(\by_t-\bar{\by})'$;  obtain the factor-based covariance estimator $\hSig_f$ by using the hard-thresholding rule with the threshold $\omega_T = 0.10K\sqrt{\log N/T}$;  and get the POET covariance estimator $\hSig_P$ using the soft-thresholding with thresholding parameter $0.5\sqrt{\Omega_{ii}\Omega_{jj}}(\sqrt{\frac{\log N}{T}}+\frac{1}{\sqrt{N}})$. 

\item Generate 200 $\bw$ according to the method described in Section~\ref{representative}.

\item Over the 200 generated portfolios $\bw$, compute the average of true risk $R(\bw) = \sqrt{\bw'\bSigma\bw}$; Also for $\hSig=\bS, \hSig_f$ and $\hSig_P$, compute their respective average of $\Delta = |\bw'(\hSig-\bSigma)\bw|$, $\xi_T = \|\bw\|_1^2\|\hSig-\bSigma\|_{\max}$ and $\widehat{U}(0.05) = 2\sqrt{\hsig^2/T}$. 
In our setting, the number of lags $L=5$.
\end{enumerate}

Under several gross exposure constraints $c$, we produce the graph of risk domain by plotting $R(\bw)$ as a function of $c$ and $N$ (20 to 600 in increments of 20).  Averages of $\Delta$, $\xi_T$ and $\widehat{U}(0.05)$ are also plotted against $N$, for all three types of covariance estimators $\hSig$.  We will observe from the graphs that portfolios with larger $c$ are exposed to have higher risks.

Finally, we fix the dimensionality $N=600$ and the number of simulation replications is now set to $500$. Values of two ratios are recorded, namely ratio of bounds
$$
\mathrm{RE}_1=\frac{\xi_T}{\hU(0.05)}=\frac{\|\bw\|_1^2\|\hSig-\Sig\|_{\max}}{2\sqrt{\hvar(\bw'\hSig\bw)}},
$$
and relative error
$$
\mathrm{RE}_2 = \frac{\hU(0.05)}{4\bw'\Sig\bw}=\frac{\sqrt{\hvar(\bw'\hSig\bw)}}{2\bw'\bSigma\bw}.
$$
This is computed for $c$ in a practical range of $[1, 2]$ and for several lengths of the time series. The means and standard deviations of the two ratios are summarized in tables below.

\subsection{Results}
In Figure~\ref{fig:riskdomain}, averages of annualized true risks $R(\bw)$ of 10000 portfolios $50$ sets of model parameters are plotted against dimensionality $N$. Multiple curves with different settings on $c$ are produced for comparison purpose. As shown in the figure, average of the actual risk ranges from less than 30\% to around 50\% per annum, as $c$ varies from 1 to 4 and $N$ gradually grows from 20 to 600.
\begin{figure}[ht]
\centering
\caption{Averages of annualized risks $R(\bw)$ with $\|\bw\|_1=$ 1, 2, 3 and 4, over $10000$ portfolios. }
\includegraphics[scale=0.7,angle=270]{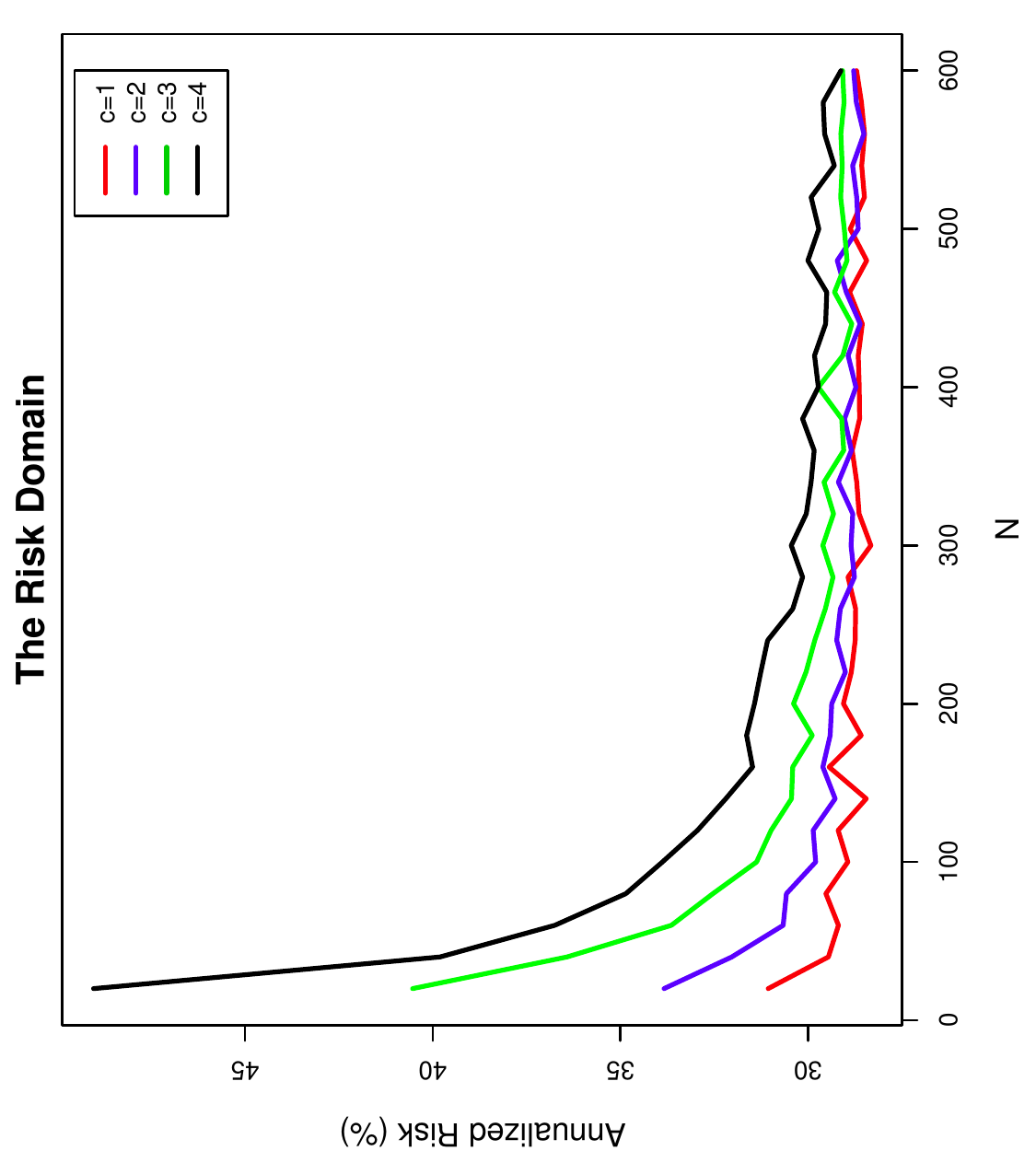}
\label{fig:riskdomain}
\end{figure}

The following two observations can be made from Figure~\ref{fig:riskdomain}:
\begin{itemize}
\item [(1)]
    The average risk is higher for a larger exposure parameter $c$. This is consistent with the fact that portfolios with greater gross exposure are more volatile, and hence incur higher risk.

\item [(2)]
    Given a gross exposure level $c$, as the portfolio size $N$ increases, the average risk decreases. The rate of decline is very fast until $N$ is around 150.  This is consistent with the theory that as $N$ increases, the portfolio becomes more diversified and the idiosyncratic risk is reduced through diversification.
\end{itemize}

\begin{figure}[ht]
 \centering
 \caption{Averages of $\Delta=|\bw'(\hSig-\bSigma)\bw|$ (blue curve), $\widehat{U}(0.05)=2\sqrt{\hvar(\bw'\hSig\bw)}$ (dashed curve) and $\xi_T = \|\bw\|_1^2\|\hSig-\bSigma\|_{\max}$ (red curve) for $c=1$ and 1.6 over 10,000 portifolios. }
\subfigure[c = 1]{
\includegraphics[scale=0.5,angle=270]{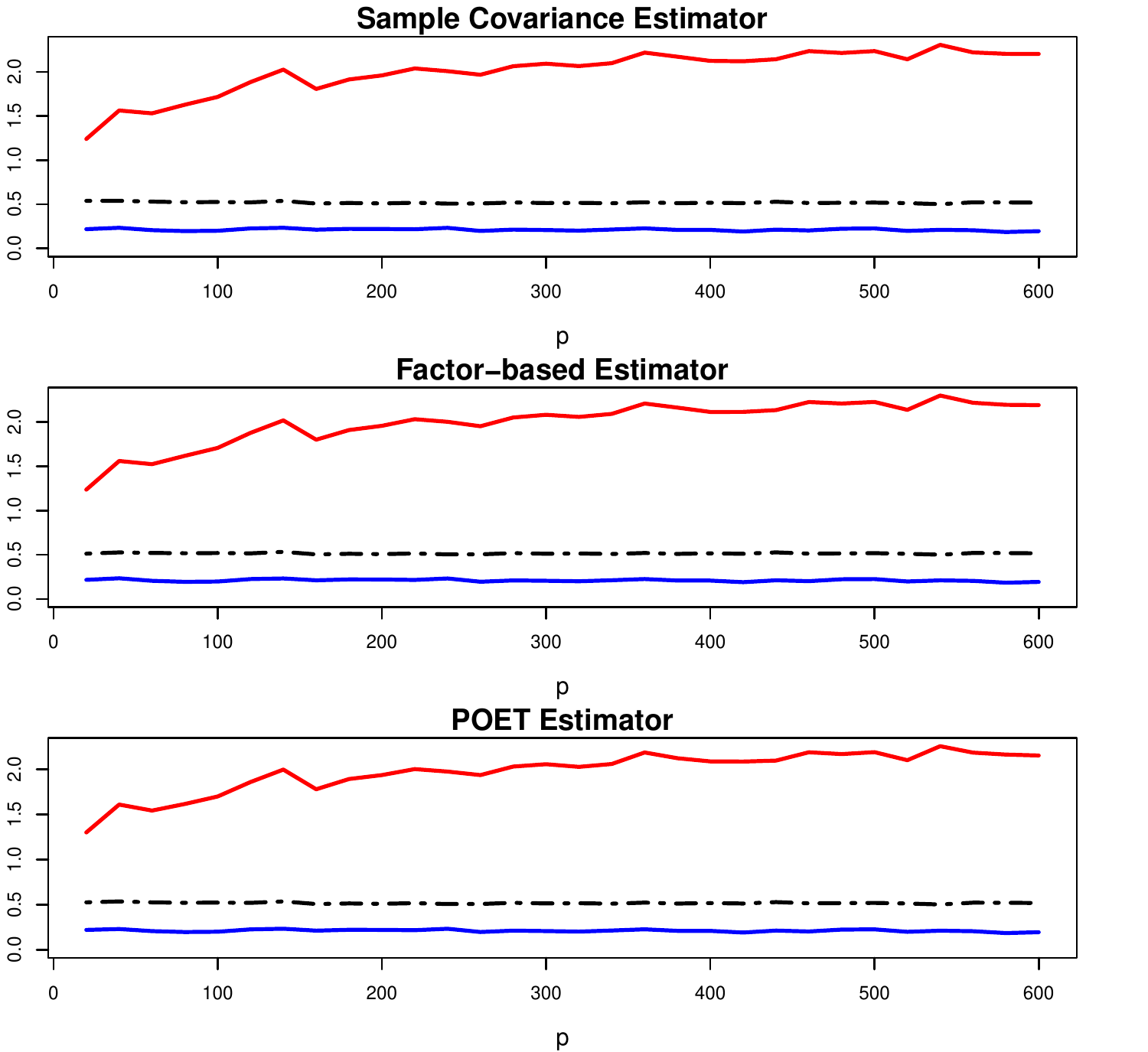}
\label{fig:c=1}
}
\subfigure[c = 1.6]{
\includegraphics[scale=0.5,angle=270]{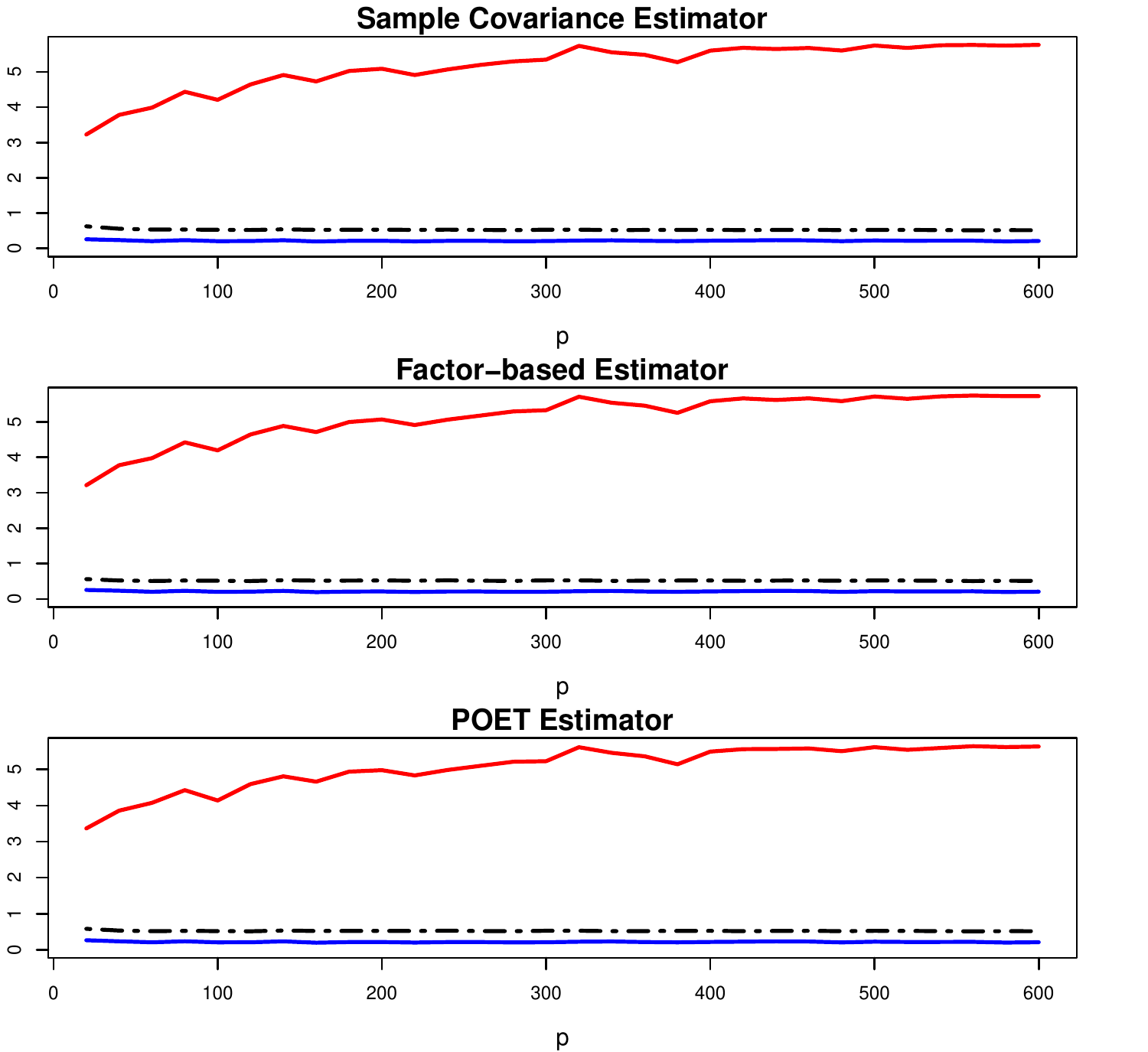}
\label{fig:c=1.6}
 }
\label{fig:2}
\end{figure}

\begin{figure}[ht]
\centering
\subfigure[c = 2]{
\includegraphics[scale=0.5,angle=270]{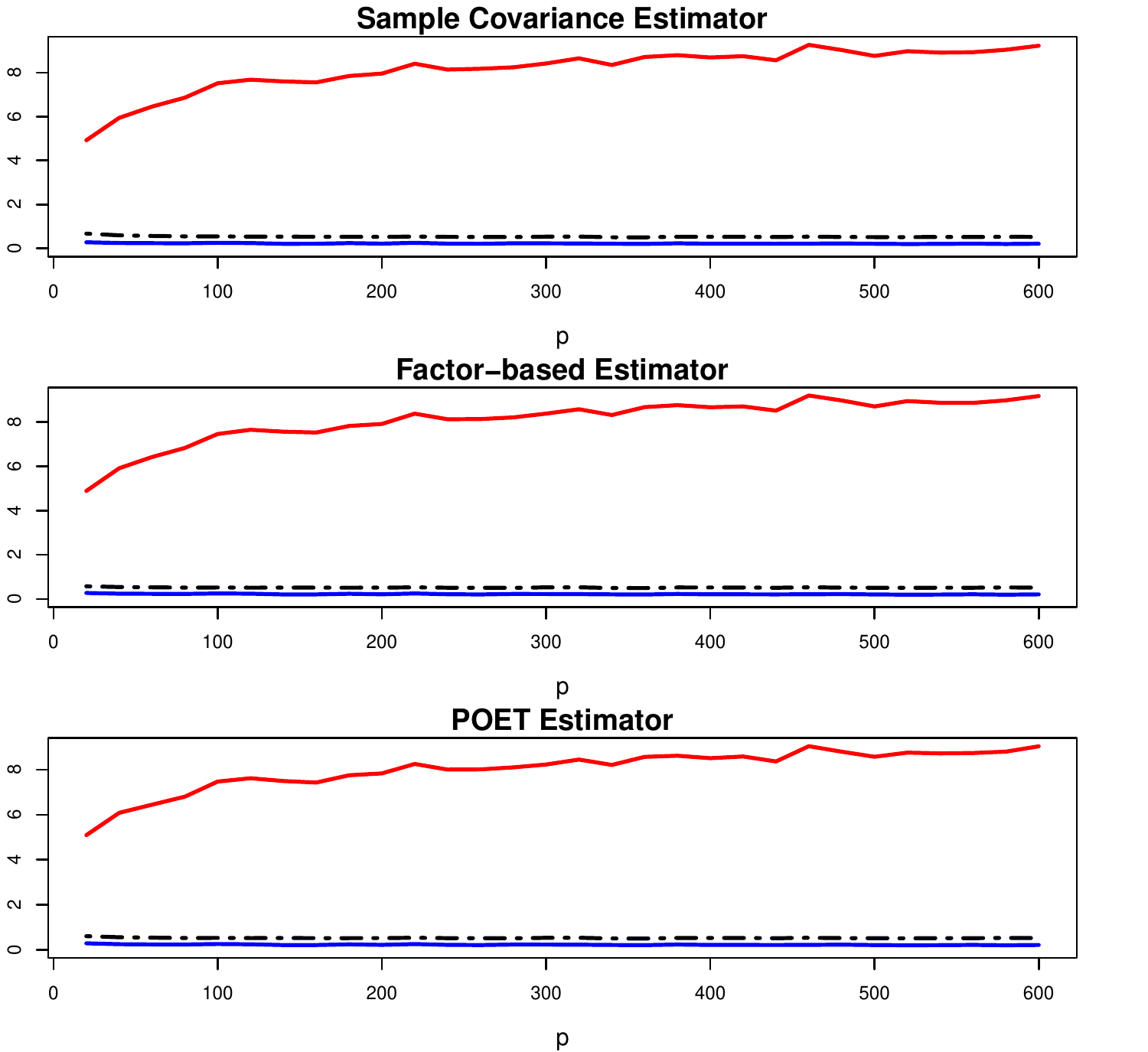}
\label{fig:c=2}
}
\subfigure[c = 3]{
\includegraphics[scale=0.5,angle=270]{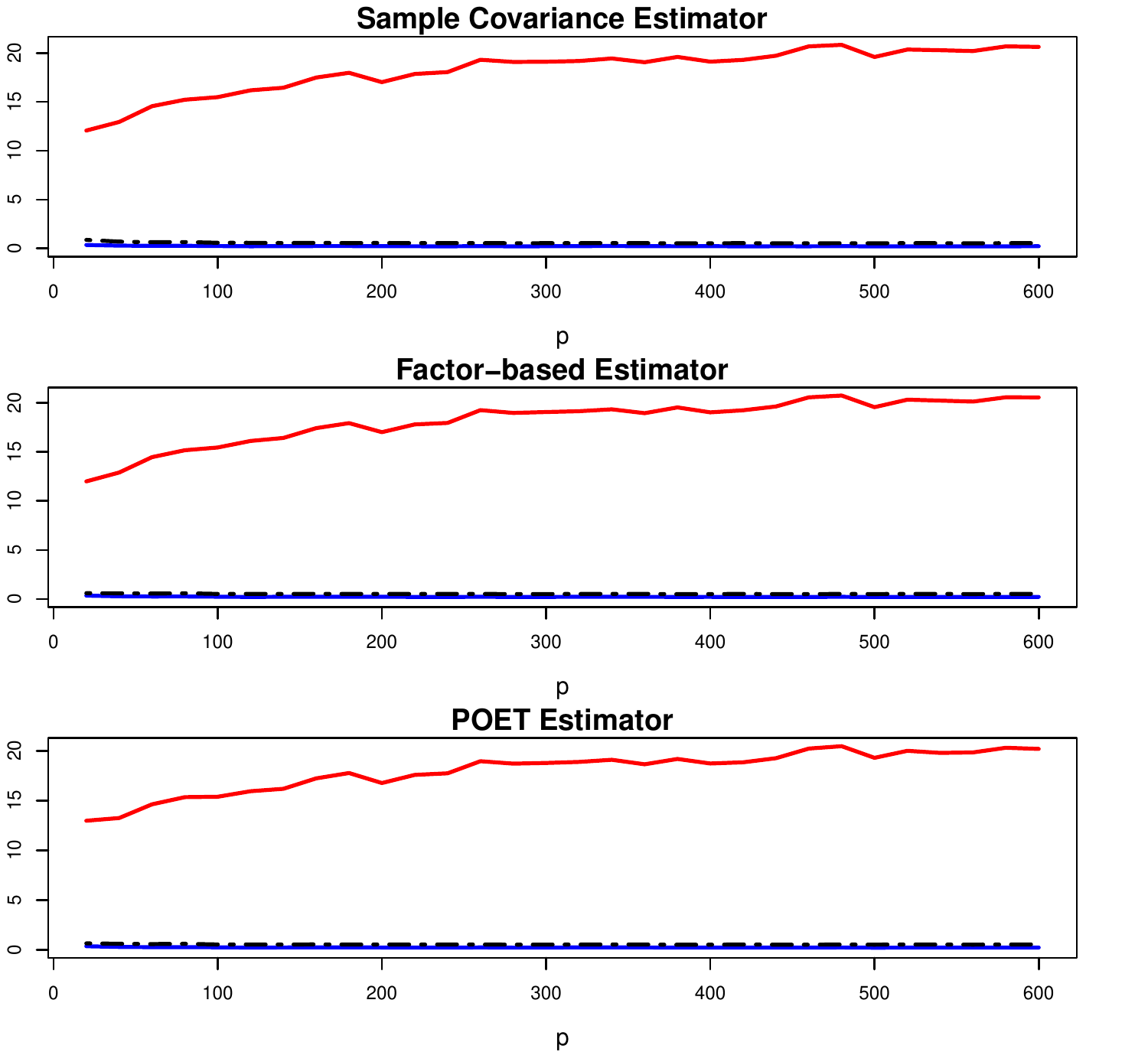}
\label{fig:c=3}
}

\caption{Same as in Figure~\ref{fig:2}, with $c$= 2 and 3.\label{fig:3}}
\end{figure}

In Figures~\ref{fig:2} and \ref{fig:3}, the average risk estimation errors are plotted along with their estimated error bounds for different exposure parameters $c$=1, 1.6, 2 and 3, using different estimators $\hSig=\bS, \hSig_f$ and $\hSig_P$. In particular, $c=1.6$ results in 130\% long positions and 30\% short positions (130/30 strategy). The 130/30 structure is popular in long-short funds.  In each of the small figure, the dashed curve corresponds to $\widehat{U}(0.05)$, the solid red curve corresponds to $\xi_T$ and the solid blue curve corresponds to $\Delta$. Based on these plots, we can observe the following features.

\begin{enumerate}
\item [(1)] The dashed curves lie entirely above the solid blue one, reflecting the validity of the 95\%-error bound of $\hU(0.05)$.

\item [(2)] The full confidence upper bound $\xi_T$ is indeed a very crude bound and is much larger than $\widehat{U}(0.05)$. The larger $c$ is, the worse the difference, which will be further detailed in Table~\ref{tab:3}.

\item [(3)]  H-CLUB  (dashed curve) slightly increases with larger $N$, but its degree of increases is much smaller than the crude bound $\xi_T$.
\end{enumerate}

Means and standard deviations (in parentheses) of $\mathrm{RE}_1=\xi_T/\widehat{U}(0.05) $ are summarized in Table~\ref{tab:3}, which quantifies the relation between the full confidence bound and the H-CLUB. Numerical results justify our observations in Figures~\ref{fig:2} and \ref{fig:3} in the sense that $\xi_T$ is in general many times greater than $\hU(\tau)$. Moreover, the ratio $\xi_T/\hU(\tau)$ increases dramatically as the gross exposure $\|\bw\|_1$ increases.

\begin{table}[h]
\centering
\caption{Averages and standard deviations (in parenthesis) of $\mathrm{RE}_1$ over $500$ iterations using three different estimators.}
\begin{tabular}{c|cccccc}
\hline
&$c=1$&$c=1.2$&$c=1.4$&$c=1.6$&$c=1.8$&$c=2$\\
\hline
&&&&&\\
$\mathrm{RE}_1$ &5.1280 &7.4632 &10.4257 &12.7665 &16.7107 &20.7675\\
$\bS$ &(2.1303)& (3.2754)&  (4.3942)& (5.5150)&  (7.1332) &(9.0050)\\
\hline
&&&&&\\
$\mathrm{RE}_1$ &5.1294 &7.4764& 10.4544 &12.7822 &16.8100 &20.9012\\
$\hSig_f$ &(2.1630)& (3.3155)&  (4.4402)&  (5.5731)&  (7.2942)&  (9.2622)\\
\hline
&&&&&\\
$\mathrm{RE}_1$ &5.0910 &7.3989& 10.3536 &12.6485 &16.6076& 20.6935\\
$\hSig_P$&(2.1672)& (3.3350)&  (4.5094)&  (5.5913)&  (7.3239)&  (9.3091)\\
\hline
\end{tabular}
\label{tab:3}
\end{table}

Averages and standard deviations of relative error $\mathrm{RE}_2 = \sqrt{\hvar(\bw'\hSig\bw)}/2\bw'\bSigma\bw$ with two choices of the length of time series $T$ are summarized in Table~\ref{tab:4}-\ref{tab:6}. $\mathrm{RE}_2$ measures the accuracy of the perceived risk $\widehat{R}(\bw)^{\frac{1}{2}}$ with respect to the actual risk $R(\bw)^{\frac{1}{2}}$, indeed $\mathrm{RE}_2\approx\text{ASD}(\widehat{R}(\bw)^{\frac{1}{2}})/R(\bw)^{\frac{1}{2}}$. From the tables, it is  not difficult to observe that standard deviations are small when compared to their corresponding means. The results also show that the relative error are negligible, at around $3\%\sim5\%$, ensuring the estimate of $R(\bw)$ a high level of accuracy. More interestingly, we realize that this ratio is approximately independent of the gross exposure $c$ but sensitive to the length of the time series. $\mathrm{RE}_2$ steadily decreases as $T$ grows. In addition, the asymptotic standard deviation of the factor-based estimators ($\hSig_f$ and $\hSig_P$) are slightly smaller than that of the sample covariance based risk estimator.
\begin{table}[h]
\centering
\caption{Averages and standard deviations of $\mathrm{RE}_2$ over $500$ iterations, with $T=200$.}
\begin{tabular}{c|ccccccc}
\hline
&$c=1$&$c=1.2$&$c=1.4$&$c=1.6$&$c=1.8$&$c=2$\\
\hline
&&&&&\\
$\mathrm{RE}_2$ &4.8381\%& 4.8076\%& 4.6563\%& 4.7499\%& 4.7267\%& 4.7723\%\\
$\bS$ &(1.0015\%)& (1.0147\%)& (0.9368\%)&(0.9989\%)& (0.9894\%)& (0.9648\%)\\
\hline
&&&&&\\
$\mathrm{RE}_2$ &4.8264\%& 4.8038\%& 4.6409\%& 4.7316\%& 4.7150\%& 4.7458\%\\
$\hSig_f$ &(0.9971\%)& (1.0110\%)& (0.9385\%)& (0.9933\%)& (0.9835\%)& (0.9646\%)\\
\hline
&&&&&\\
$\mathrm{RE}_2$ &4.8305\%& 4.8055\%& 4.6443\%& 4.7350\%& 4.7133\%& 4.7478\%\\
$\hSig_P$&(0.9993\%)& (1.0104\%)& (0.9368\%)& (0.9922\%)& (0.9859\%) &(0.9624\%)\\
\hline
\end{tabular}
\label{tab:4}
\end{table}


\begin{table}[ht]
\centering
\caption{Averages and standard deviations of $\mathrm{RE}_2$ over $500$ iterations, with $T=400$.}
\begin{tabular}{c|ccccccc}
\hline
&$c=1$&$c=1.2$&$c=1.4$&$c=1.6$&$c=1.8$&$c=2$\\
\hline
&&&&&\\
$\mathrm{RE}_2$ &3.4773\%& 3.4840\%& 3.4836\%& 3.4857\%& 3.5176\%& 3.4846\%\\
$\bS$ &(0.5081\%)& (0.4936\%)& (0.4783\%)& (0.5303\%)& (0.5117\%)& (0.5602\%)\\
\hline
&&&&&\\
$\mathrm{RE}_2$ &3.4693\%& 3.4759\%& 3.4699\%& 3.4708\%& 3.5009\%& 3.4588\%\\
$\hSig_f$ &(0.5081\%)& (0.4976\%)& (0.4775\%)& (0.5217\%)& (0.5127\%)& (0.5621\%)\\
\hline
&&&&&\\
$\mathrm{RE}_2$ &3.4744\%& 3.4773\%& 3.4737\%& 3.4737\%& 3.5029\%& 3.4619\%\\
$\hSig_P$&(0.5092\%)& (0.4964\%)& (0.4783\%)& (0.5229\%)& (0.5132\%)& (0.5629\%)\\
\hline
\end{tabular}
\label{tab:6}
\end{table}

Finally, we also observe from Tables \ref{tab:3}-\ref{tab:6} that the asymptotic variances (reflected by $\hU(\tau)$) of the estimators based on known and unknown factors are almost the same, and slightly smaller than that of the sample covariance estimator.

\section{Empirical Studies}

We assess the performance of H-CLUB in a portfolio allocation. We use the daily excess returns of 100 industrial portfolios formed on the size and book to market ratio from the website of Kenneth French. The study period is from July $1^{st}$ 2008 to June $29^{th}$ 2012 ($T=1000$). At the end of each month the covariance matrix is estimated by three estimators, the sample covariance, the factor-based estimator, and the POET estimator, using daily returns of the preceding 12 months ($T=252$). In particular, we employ the Fama-French three-factor model to construct the factor-based estimator.  Two types of strategies are tested, namely the equally weighted portfolio, and the minimum variance portfolio. The optimal portfolios are constructed under different exposure constraints ($c=1$ and $c=1.6$). The equally weighted portfolio is given by
$	
\widehat{\bw} = \left(1/N,\cdots,1/N\right).	
$	
The minimum variance  portfolio is given by
$$
\widehat{\bw}=\argmin_{\bw'\bone=1,\|\bw\|_1=c}\bw'\widehat{\bSigma}\bw.
$$
Portfolios are held for one month ($T=21$) and rebalanced at the beginning of the next month. Their actual risks in the holding month for $\widehat{\bw}$ defined above are
$$
R(\widehat{\bw}) = \left(\widehat{\bw}'\Sig \widehat{\bw}\right)^{1/2}, \quad \Sig=\frac{1}{21}\sum_{t=1}^{21}\by_t\by_t'.
$$
This is aggregated over the entirely testing period.

\begin{table}[ht]
\centering
\caption{True risk errors and estimated risk errors based on the 100 Fama-French Industrial Portfolios.\protect\\}
\begin{tabular}{l*{5}{c}}
\hline
\hline
&Average of& Average of &Average of&True &Estimated\\
Strategy&$\Delta(\times10^{-4})$ &$\widehat{U}(0.01)(\times10^{-4})$&True Risk&Risk Error& Risk Error\\
\hline
\multicolumn{6}{c}{{\it Sample Covariance Matrix Estimator}}\\
\hline
Equal weighted&2.356&2.757&20.81\%&11.18\%&11.37\%\\
Min variance ($c=1$)&1.006&1.232&14.38\%&7.00\%&7.45\%\\
Min variance ($c=1.6$)&0.497&0.622&11.58\%&4.69\%&5.18\%\\
\hline
\multicolumn{6}{c}{{\it Factor-Based Covariance Matrix Estimator}}\\
\hline
Equal weighted&2.352&2.693&20.81\%&11.16\%&11.22\%\\
Min variance ($c=1$)&0.999&1.234&14.45\%&6.95\%&7.48\%\\
Min variance ($c=1.6$)&0.475&0.607&11.79\%&4.52\%&5.07\%\\
\hline
\multicolumn{6}{c}{{\it POET Estimator}}\\
\hline
Equal weighted&2.353&2.757&20.81\%&11.17\%&11.38\%\\
Min variance ($c=1$)&1.005&1.171&14.38\%&6.99\%&7.07\%\\
Min variance ($c=1.6$)&0.490&0.572&11.59\%&4.61\%&4.64\%\\
\hline
\hline

\end{tabular}
\small\it
Here $\Delta = |\bw'(\bSigma-\widehat{\bSigma}){\bw}|$ and $\widehat{U}(0.01)=2.58(\widehat{\var}(\bw'\widehat{\bSigma}\bw))^{1/2}$.  True risk  is $R(\widehat{\bw})$. The True Risk Error and Estimated Risk Error are  $|(\bw'\bSigma\bw)^{1/2}-(\bw'\widehat{\bSigma}\bw)^{1/2}|$ and $\widehat{U}(0.01)/\sqrt{4\bw'\widehat{\bSigma}\bw}$  respectively.

\label{tab:7}
\end{table}

For each covariance matrix estimator and strategy, we study five quantities, whose respective averages over the whole study period are summarized in Table~\ref{tab:7}. In particular,  the estimated risk error $\widehat{U}(0.01)/\sqrt{4\bw'\widehat{\bSigma}\bw}$ is the H-CLUB for the true risk error.  See, for example, Corollalry~\ref{co31}. 
Here the risks are annualized. By comparing the first two columns in Table~\ref{tab:7}, we observe that $\widehat{U}(0.01)$ is uniformly greater than $\Delta$, regardless of the strategies and the covariance matrix estimators. This is in line with the expectation that $\widehat{U}(0.01)$ is a 99\% upper bound of the estimation error of portfolio variances. 
Moreover, as shown in the two rightmost columns, results are satisfactory in the sense that the estimated risk errors are close ($<1\%$ per annum) to the true risk error.

\section{Conclusions}
In this paper we address the estimation and assessment for the risk of a large portfolio. The risk is estimated by a substitution of a good estimator of the volatility matrix.   We propose factor-based risk estimators, based on  the approximate factor model with known factors and unknown factors.  For the first time in the literature,  we derive the limiting distribution of the estimated risks under high dimensionality.

Given that the existing upper bound for the risk estimation error is too crude and not applicable in practice, we  introduce a new method, H-CLUB,  to assess the accuracy of the risk estimation based on the confidence intervals.  Our numerical results demonstrate that the proposed upper bounds significantly outperform the traditional crude bounds, and provide insightful assessment of the estimation of the true portfolio risks.

It is demonstrated in the empirical study that the financial excess returns may not be globally stationary.  Our method also allows for  locally stationary  time series and can also allow slow-time varying covariance matrices  through localization in time (time-domain smoothing).  

\appendix

\section{Proofs for the Sample Covariance}

Define $Z_{T,t}=\bw'\bR_t\bR_t'\bw-E\bw'\bR_t\bR_t'\bw$, where $Z$ depends on $T$ through $\dim(\bR_t)=N=N_T$ and allocation vector $\bw$. Then $\gamma_T(h)=EZ_{T, t}Z_{T, t+h}$.   In particular, $\gamma_T(0)=\var(Z_{T,t}).$

\subsection{Proof of  Lemma \ref{l31}}

\begin{lem}\label{la.2}
(i) $|(\bw'\bS\bw)^2-(\bw'\Sig\bw)^2|=O_p(T^{-1/2}  \sigma_T).$\\
(ii) $\max_{h\leq L}|T^{-1}\sum_{t=1}^{T-h}(\bw'\bR_t)^2(\bw'\bR_{t+h})^2-E(\bw'\bR_t)^2(\bw'\bR_{t+h})^2|=O_p(\sqrt{L/T}).$\\
(iii) $\max_{h\leq L}|\bw'\bS\bw-T^{-1}\sum_{t=1}^{T-h}(\bw'\bR_t)^2|=O_p(L^2\bw'\Sig\bw/T)$.\\
(iv) $\max_{h\leq L}|\bw'\bS\bw-T^{-1}\sum_{t=1}^{T-h}(\bw'\bR_{t+h})^2|=O_p(L^2\bw'\Sig\bw/T)$.
\end{lem}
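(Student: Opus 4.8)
The plan is to reduce all four estimates to moment computations for the scalar stationary sequence $(\bw'\bR_t)^2$, after first establishing the one workhorse bound that makes everything uniform in $N$: the eighth moment of $\bw'\bR_t$ is bounded independently of $N$. Indeed, by Minkowski's inequality and the gross-exposure constraint $\|\bw\|_1\leq c$, one has $\|\bw'\bR_t\|_8 = \|\sum_i w_i R_{it}\|_8 \leq \sum_i |w_i|\,\|R_{it}\|_8 \leq \|\bw\|_1 \max_i \|R_{it}\|_8 \leq c M^{1/8}$, using Assumption \ref{a31}(ii). Hence $E(\bw'\bR_t)^8 \leq c^8 M$, and by Jensen all lower moments of $(\bw'\bR_t)^2$ are likewise bounded uniformly in $N$ and $t$; likewise $\bw'\Sig\bw \leq c^2\|\Sig\|_{\max}$ is bounded. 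This is what prevents the aggregated estimation error over the entries of $\bS$ from blowing up.

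Parts (iii) and (iv) are then immediate. Writing $\bw'\bS\bw = T^{-1}\sum_{t=1}^T(\bw'\bR_t)^2$, the difference in (iii) telescopes to $T^{-1}\sum_{t=T-h+1}^T (\bw'\bR_t)^2$, a sum of nonnegative terms that is monotone increasing in $h$; the maximum over $h\leq L$ is therefore attained at $h=L$, and since its expectation equals $L\,\bw'\Sig\bw/T$, Markov's inequality gives the bound (the stated $L^2$ is in fact slack). Part (iv) is identical after reindexing $\sum_{t=1}^{T-h}(\bw'\bR_{t+h})^2 = \sum_{s=h+1}^T (\bw'\bR_s)^2$, so the difference becomes $T^{-1}\sum_{s=1}^h (\bw'\bR_s)^2$. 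For (i) I would factor $(\bw'\bS\bw)^2 - (\bw'\Sig\bw)^2 = \bw'(\bS-\Sig)\bw \,\cdot\, (\bw'\bS\bw + \bw'\Sig\bw)$. The first factor is $T^{-1}\sum_t Z_{T,t}$, whose variance is $T^{-2}\sum_{t,s}\gamma_T(t-s) = T^{-1}[\gamma_T(0) + 2\sum_{h=1}^{T-1}(1-h/T)\gamma_T(h)]$; Assumption \ref{a32new}(ii), namely $\sum_h|\gamma_T(h)| = O(1)$ and $\sum_h h\gamma_T(h)/T = o(\sigma_T^2)$, reduces this to $T^{-1}\sigma_T^2(1+o(1))$, so that $\bw'(\bS-\Sig)\bw = O_p(T^{-1/2}\sigma_T)$. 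Since the first factor is $o_p(1)$ and $\bw'\Sig\bw$ is bounded, the second factor is $O_p(1)$, and the product is $O_p(T^{-1/2}\sigma_T)$.

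The substantive step is (ii), a uniform law of large numbers over the $L$ lags for the lag-$h$ product averages. For each fixed $h$, set $D_h = T^{-1}\sum_{t=1}^{T-h}W_t^{(h)} - E W_t^{(h)}$ with $W_t^{(h)} = (\bw'\bR_t)^2(\bw'\bR_{t+h})^2$. By Cauchy--Schwarz and the eighth-moment bound, $E(W_t^{(h)})^2 \leq E(\bw'\bR_t)^8 \leq c^8 M$ uniformly in $h,t,N$, so $W^{(h)}$ is square-integrable with a uniform $L^2$ bound. Using a covariance inequality for $\alpha$-mixing sequences (Davydov) together with the exponential decay in Assumption \ref{a32new}(i), the serial covariances $\cov(W_t^{(h)}, W_{t+k}^{(h)})$ are summable in $k$ uniformly in $h$ and $N$, yielding $\var(D_h) = O(1/T)$ uniformly over $h\leq L$. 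I would then pass from fixed $h$ to the maximum by the crude but sufficient aggregation $E[\max_{h\leq L}D_h^2] \leq \sum_{h=1}^L E D_h^2 = O(L/T)$, so that $\max_{h\leq L}|D_h| = O_p(\sqrt{L/T})$ by Markov's inequality applied to $\max_h D_h^2$. The main obstacle is precisely this uniform covariance summability: since only eight moments are assumed, the products $W^{(h)}$ sit exactly at the $L^2$ boundary, so the mixing covariance inequality must be applied carefully to extract decay without demanding moments of order strictly above two --- here the exponential (rather than merely polynomial) mixing rate is what rescues the argument, and checking that every constant stays uniform in both the lag $h$ and the dimension $N$ is the delicate bookkeeping that the gross-exposure moment bound is designed to support.
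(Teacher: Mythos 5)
Your argument is correct and, for parts (i) and (ii), it is essentially the paper's own proof: part (i) is the factorization $(\bw'\bS\bw)^2-(\bw'\Sig\bw)^2=[\bw'(\bS-\Sig)\bw]\,[\bw'\bS\bw+\bw'\Sig\bw]$ with Chebyshev applied to $T^{-1}\sum_t Z_{T,t}$, and part (ii) aggregates per-lag variance bounds over the $L$ lags (your bound $E[\max_{h\leq L}D_h^2]\leq\sum_{h\leq L}E D_h^2=O(L/T)$ is the same device as the paper's union bound $P(\max_{h\leq L}|D_h|>s)\leq L\max_{h\leq L}P(|D_h|>s)$). You genuinely depart from the paper in (iii) and (iv), and your route is both simpler and sharper: the paper bounds $T^{-1}\sum_{t=T-h+1}^T(\bw'\bR_t)^2$ by $(L/T)\max_{T-L+1\leq t\leq T}(\bw'\bR_t)^2$ and then applies a union bound to that maximum, which is where the factor $L^2$ comes from; in (iv) it even routes the argument through out-of-sample terms $(\bw'\bR_{T+t})^2$, $t\leq L$. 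Your observation that the difference in (iv) reindexes exactly to $T^{-1}\sum_{s=1}^{h}(\bw'\bR_s)^2$, combined with monotonicity in $h$ and Markov's inequality applied to the expectation $L\,\bw'\Sig\bw/T$, yields $O_p(L\,\bw'\Sig\bw/T)$, sharper than the stated rate by a factor of $L$ and free of any reference to observations outside the sample. One caveat on (ii): your claim that Davydov's inequality together with the exponential mixing rate delivers the uniform covariance summability is not correct as stated. Under $\max_{i\leq N}E|R_{it}|^8\leq M$ the products $W_t^{(h)}$ are exactly square-integrable, and Davydov's exponent $1-1/p-1/q$ vanishes at $p=q=2$, so no mixing rate, however fast, produces any decay from that inequality; a rigorous treatment needs either moments slightly beyond the eighth or a truncation argument. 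This defect is shared by the paper itself, which simply asserts $\max_{h\leq L}\sum_{t=1}^T\cov(X_{1,h},X_{t+1,h})=O(1)$ without justification, so it is not a gap relative to the paper's own standard of rigor, but your parenthetical suggestion that exponential mixing ``rescues'' the step should be deleted or replaced by one of these two fixes.
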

\begin{proof}
Note that for any $N\times N$ matrix $\bA=(a_{ij})$, $|\bw'\bA\bw|\leq \|\bA\|_{\max} \|\bw\|_1^2$. Thus $|(\bw'\bS\bw)^2-(\bw'\Sig\bw)^2|=O_p(|\bw'(\bS-\Sig)\bw|)=O_p(|T^{-1}\sum_{t=1}^TZ_{T,t}|)$. The Chebyshev inequality implies $|T^{-1}\sum_{t=1}^TZ_{T,t}|= O_p(T^{-1/2}\sqrt{  \sigma^2_T   })$.\\
(ii) Let $X_{t,h}=(\bw'\bR_t)^2(\bw'\bR_{t+h})^2$.  By the Chebyshev inequality,  for any $s>0,$
$$
P(\max_{h\leq L}|\frac{1}{T}\sum_{t=1}^TX_{t,h}-EX_{t,h}|>s)\leq L\max_{h\leq L}P(|\frac{1}{T}\sum_{t=1}^TX_{t,h}-EX_{t,h}|>s)\leq\frac{L\max_{h\leq L}\var(\sum_{t=1}^TX_{t,h})}{T^2s^2}.
$$
Note that
$\max_{h\leq L}\var(\sum_{t=1}^TX_{t,h})=O(T)$ since $\max_{h\leq L}\var(X_{t, h})=O(1)$ and \\
$\max_{h\leq L}\sum_{t=1}^T\cov(X_{1,h}, X_{t+1,h})=O(1)$. Therefore, for arbitrarily small $\epsilon>0$, by choosing  $s>\sqrt{LM/(\epsilon T)}$, $P(\max_{h\leq L}|\frac{1}{T}\sum_{t=1}^TX_{t,h}-EX_{t,h}|>s)<\epsilon$, which implies $\max_{h\leq L}|\frac{1}{T}\sum_{t=1}^TX_{t,h}-EX_{t,h}|=O_p(\sqrt{L/T})$.  The conclusion then follows from the adjustment of the $L$ terms in the summation.

(iii) The left hand side is   $\max_{h\leq L}T^{-1}\sum_{t=T-h+1}^T(\bw'\bR_t)^2=\max_{T-L+1\leq t\leq T}(\bw'\bR_t)^2L/T$. For any $s>0$,  $P(\max_{T-L+1\leq t\leq T}(\bw'\bR_t)^2>s)\leq LP((\bw'\bR_t)^2>s)\leq L\bw'\Sig\bw/s$, which then implies $\max_{T-L+1\leq t\leq T}(\bw'\bR_t)^2=O_p(L\bw'\Sig\bw)$. The desired result then follows.

(iv) A similar argument as above shows $\max_{T+1\leq t\leq T+L}(\bw'\bR_t)^2=O_p(L\bw'\Sig\bw)$. Hence\\
$\max_{h\leq L}T^{-1}\sum_{t=T-h+1}^T(\bw'\bR_{t+h})^2\leq \max_{T+1\leq t\leq T+L}(\bw'\bR_t)^2L/T=O_p(L^2\bw'\Sig\bw/T).$ This implies that the desired quantity is bounded by $a+O_p(L^2\bw'\Sig\bw/T)$ where
$$
a=\max_{h\leq L}|\frac{1}{T}\sum_{t=1}^{T}[(\bw\bR_t)^2-(\bw'\bR_{t+h})^2]|\leq |\frac{1}{T}\sum_{ t=1}^L(\bw'\bR_t)^2|+ |\frac{1}{T}\sum_{ t=1}^L(\bw'\bR_{T+t})^2|.
$$ Note that $|\frac{1}{T}\sum_{ t=1}^L(\bw'\bR_t)^2|\leq\max_{1\leq t\leq L}(\bw'\bR_t)^2L/T=O_p(L^2\bw'\Sig\bw/T)$. Similarly we have $|\frac{1}{T}\sum_{ t=1}^L(\bw'\bR_{T+t})^2|=O_p(L^2\bw'\Sig\bw/T).$
\end{proof}

\begin{lem}\label{la.3}
$\max_{h\leq L}|\hgamma(h)-\gamma_T(h)|=O_p(\sqrt{L/T})$.
\end{lem}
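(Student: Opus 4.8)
The plan is to reduce the claim to the four estimates already established in Lemma \ref{la.2} by expanding $\hgamma(h)$ around the \emph{sample} second moment and comparing it with the population centering in $\gamma_T(h)$. Write $y_t=(\bw'\bR_t)^2$, $a=\bw'\bS\bw$ and $m=\bw'\Sig\bw$, so that $Ey_t=m$ and $\gamma_T(h)=\cov(y_t,y_{t+h})=Ey_ty_{t+h}-m^2$. Expanding the definition gives
\begin{equation*}
\hgamma(h)=\frac1T\sum_{t=1}^{T-h}y_ty_{t+h}-a\cdot\frac1T\sum_{t=1}^{T-h}(y_t+y_{t+h})+a^2\frac{T-h}{T},
\end{equation*}
and subtracting $\gamma_T(h)$ yields the decomposition $\hgamma(h)-\gamma_T(h)=A_h+B_h$, where $A_h=T^{-1}\sum_{t=1}^{T-h}y_ty_{t+h}-Ey_ty_{t+h}$ is the centered product average and $B_h=-a\,T^{-1}\sum_{t=1}^{T-h}(y_t+y_{t+h})+a^2(T-h)/T+m^2$ collects the remaining plug-in and boundary terms.

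First I would dispatch the leading term: $A_h$ is exactly the centered product average bounded in Lemma \ref{la.2}(ii), so $\max_{h\le L}|A_h|=O_p(\sqrt{L/T})$ with no further work. This is where the genuine probabilistic content sits — the uniform variance bound $\max_{h\le L}\var(\sum_t y_ty_{t+h})=O(T)$, obtained from the eighth-moment control of Assumption \ref{a31}(ii) together with the exponential $\alpha$-mixing of Assumption \ref{a32new}(i) — but it is already available.

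Next I would show $\max_{h\le L}|B_h|=o_p(\sqrt{L/T})$, i.e. that replacing the population centering $m$ by the sample centering $a$ costs nothing at the relevant rate. By Lemma \ref{la.2}(iii)--(iv), $T^{-1}\sum_{t=1}^{T-h}(y_t+y_{t+h})=2a+O_p(L^2m/T)$ uniformly over $h\le L$; substituting this, together with $a=O_p(1)$ and $a^2(T-h)/T=a^2+O_p(L/T)$, collapses $B_h$ to $-(a^2-m^2)+O_p(L^2m/T+L/T)$. Lemma \ref{la.2}(i) gives $a^2-m^2=O_p(T^{-1/2}\sigma_T)$, and since $\sigma_T^2=O(1)$ by Assumption \ref{a32new}(ii) this is $O_p(T^{-1/2})$; moreover $m=\bw'\Sig\bw=O(1)$ under the bounded gross exposure, so $L^2m/T=O(L^2/T)$. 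As $T^{-1/2}\le\sqrt{L/T}$ and, in the slowly-growing regime $L^3=o(T)$ maintained throughout, $L^2/T=o(\sqrt{L/T})$, every piece of $B_h$ is $o_p(\sqrt{L/T})$ uniformly in $h$. Combining with the bound on $A_h$ gives $\max_{h\le L}|\hgamma(h)-\gamma_T(h)|=O_p(\sqrt{L/T})$.

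The only genuinely new difficulty beyond Lemma \ref{la.2} is the \emph{uniformity in $h$} of the plug-in correction $B_h$: the sample centering $a$ is a single random quantity, yet it multiplies averages whose end-effects must be controlled simultaneously over all lags $h\le L$. This is precisely what the end-effect estimates Lemma \ref{la.2}(iii)--(iv) supply (the $O_p(L^2m/T)$ boundary bounds), so once those are invoked the remaining argument is routine bookkeeping. Were Lemma \ref{la.2} not already in hand, the real obstacle would instead be the variance bound for the quartic process $\{y_ty_{t+h}\}_t$, which needs a Davydov-type covariance inequality under the exponential mixing and careful tracking of the dependence of the covariance decay on the lag $h$.
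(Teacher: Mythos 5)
Your proof is correct and takes essentially the same route as the paper: both expand $\hgamma(h)$ around the sample centering $\bw'\bS\bw$ and reduce everything to the four estimates of Lemma \ref{la.2}, with your $A_h$ being exactly the paper's $a_1$ and your $B_h$ collecting the paper's $a_2,a_3,a_4$ (together with the harmless $O_p(L/T)$ term $a^2h/T$, which you track explicitly and the paper's bookkeeping leaves implicit), both invoking $\sigma_T^2=O(1)$, $\bw'\Sig\bw=O(1)$ and $L^3=O(T)$ to conclude.
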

\begin{proof}
The triangular inequality implies $\max_{h\leq L}|\hgamma(h)-\gamma_T(h)|\leq \sum_{i=1}^4a_i$, where
\begin{eqnarray*}
a_1=\max_{h\leq L}|T^{-1}\sum_{t=1}^{T-h}(\bw'\bR_t)^2(\bw'\bR_{t+h})^2-E(\bw'\bR_t)^2(\bw'\bR_{t+h})^2|, a_2= |(\bw'\bS\bw)^2-(\bw'\Sig\bw)^2|\cr
a_3=\bw'\bS\bw\max_{h\leq L}|\bw'\bS\bw-T^{-1}\sum_{t=1}^{T-h}(\bw'\bR_t)^2|, a_4=\bw'\bS\bw\max_{h\leq L}|\bw'\bS\bw-T^{-1}\sum_{t=1}^{T-h}(\bw'\bR_{t+h})^2|.
\end{eqnarray*}
We have, $\bw'\bS\bw\leq |\bw'(\bS-\Sig)\bw|+\bw'\Sig\bw=O_p(\bw'\Sig\bw +T^{-1/2} \sigma^2_T )$. It then follows from Lemma \ref{la.2} and  $  \sigma^2_T   =O(1)$, $L^{3}=O(T)$,  $\bw'\Sig\bw=O(1)$ that $a_i=O_p(\sqrt{L/T})$ for $i=1...4$, which implies $\max_{h\leq L}|\hgamma(h)-\gamma_T(h)|=O_p(\sqrt{L/T})$.
\end{proof}

\textbf{Proof of  Lemma \ref{l31}}

By the triangular inequality, $|\hsig^2-  \sigma^2_T   |\leq\sum_{i=1}^3b_i$, where
$$
b_1=|\hgamma(0)-\gamma_T(0)|, \quad b_2=2\sum_{h=1}^L|\hgamma(h)-\gamma_T(h)|, \quad b_3=2\sum_{h>L}\gamma_T(h)
$$
Here $b_2\leq 2L\max_{h\leq L}|\hgamma(h)-\gamma_T(h)|=O_p(L\sqrt{L/T}).$
The convergence rate then follows from Lemma \ref{la.3}.  The second part $\hU_S(\tau)=o(\sqrt{\log N/T})$ is due to $\hsig^2=O_p(\sigma_T^2)$ as $|  \sigma^2_T   -\hsig^2|=o_p(  \sigma^2_T   )$ and $\sigma_T^2=O(1)=o(\log N)$, as $N\rightarrow\infty.$

\subsection{Proof of Theorem \ref{t31}}

\begin{lem} (i) $EZ_{T,1}^2=O(1)$ and $\max_{l\leq T}|\gamma_T(l)|=O(1)$.\\
(ii) For any $K\in[m, T]$,   $\var(\sum_{t=1}^KZ_{T,t})= K\gamma_T(0)+2K\sum_{h=1}^K(1-h/K)\gamma_T(h)=O(K)$.

\end{lem}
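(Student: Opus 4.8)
The plan is to handle the two parts separately, with part (i) feeding directly into part (ii); throughout, the guiding principle is that every bound must be uniform in $T$ because both $\bw$ and $N=N_T$ vary with $T$.

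For part (i), I would first note that $\bw'\bR_1\bR_1'\bw=(\bw'\bR_1)^2$ is a scalar, so $Z_{T,1}=(\bw'\bR_1)^2-E(\bw'\bR_1)^2$ and hence $EZ_{T,1}^2=\var((\bw'\bR_1)^2)\le E(\bw'\bR_1)^4$. The key step is to control the fourth moment of $\bw'\bR_1=\sum_{i=1}^N w_iR_{i1}$. By Minkowski's inequality,
$$
\big\|\bw'\bR_1\big\|_{L_4}\le \sum_{i=1}^N|w_i|\,\|R_{i1}\|_{L_4}\le \|\bw\|_1\max_{i\le N}(ER_{i1}^4)^{1/4}.
$$
Since $\|\bw\|_1\le c$ and, by Lyapunov's inequality, $\max_{i\le N}(ER_{i1}^4)^{1/4}\le \max_{i\le N}(ER_{i1}^8)^{1/8}\le M^{1/8}$ under Assumption \ref{a31}(ii), it follows that $E(\bw'\bR_1)^4\le c^4M^{1/2}=O(1)$, whence $EZ_{T,1}^2=O(1)$. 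For the lag covariances, strict stationarity gives $EZ_{T,t}^2=EZ_{T,1}^2$ for all $t$, so Cauchy--Schwarz yields $|\gamma_T(l)|=|EZ_{T,1}Z_{T,1+l}|\le EZ_{T,1}^2=O(1)$ uniformly in $l\le T$.

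For part (ii), the stated expression is the standard variance decomposition for a stationary sequence. Expanding,
$$
\var\Big(\sum_{t=1}^KZ_{T,t}\Big)=\sum_{t=1}^K\var(Z_{T,t})+2\sum_{1\le s<t\le K}\cov(Z_{T,s},Z_{T,t}).
$$
Strict stationarity makes each diagonal term equal to $\gamma_T(0)$ and each off-diagonal covariance depend only on the lag $h=t-s$; collecting by lag, there are $K-h$ pairs with a given lag, so the cross term equals $2\sum_{h=1}^{K-1}(K-h)\gamma_T(h)=2K\sum_{h=1}^K(1-h/K)\gamma_T(h)$ (the $h=K$ summand vanishes), giving the claimed identity. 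For the order bound I would pass to absolute values term by term,
$$
\var\Big(\sum_{t=1}^KZ_{T,t}\Big)\le K\gamma_T(0)+2K\sum_{h=1}^{\infty}|\gamma_T(h)|,
$$
and invoke part (i) for $\gamma_T(0)=EZ_{T,1}^2=O(1)$ together with Assumption \ref{a32new}(ii) for $\sum_{h=1}^{\infty}|\gamma_T(h)|=O(1)$, both uniform in $T$; this yields $\var(\sum_{t=1}^K Z_{T,t})=O(K)$.

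There is no genuine obstacle here—the argument is elementary—so the only point demanding care is the uniformity in $T$ just emphasized. It is secured precisely because the eighth-moment bound in Assumption \ref{a31}(ii) holds uniformly over $i\le N$ and all $T$, because the gross exposure $\|\bw\|_1\le c$ is bounded by a constant independent of $T$, and because the summability $\sum_h|\gamma_T(h)|=O(1)$ is postulated uniformly in $T$. The range restriction $K\in[m,T]$ plays no role in the derivation; the bound $O(K)$ in fact holds for every $K\ge 1$.
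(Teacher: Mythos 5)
Your proposal is correct and takes essentially the same approach as the paper: both reduce part (i) to bounding $E(\bw'\bR_1)^4$ by $\|\bw\|_1^4\max_{i\leq N}ER_{i1}^4$ (you via Minkowski's inequality in $L_4$, the paper by expanding the quadruple sum) and handle the lag covariances by Cauchy--Schwarz, and both obtain part (ii) from the standard stationary variance decomposition combined with $\gamma_T(0)=O(1)$ and the summability condition $\sum_{h=1}^{\infty}|\gamma_T(h)|=O(1)$ of Assumption \ref{a32new}(ii). The only cosmetic difference is that you derive the variance identity explicitly, whereas the paper cites it as well known.
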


\begin{proof}
(i) It suffices to show $E(\bw'\bR_t)^4=O(1)$. In fact by $ \max_{i\leq N}ER_{it}^4=O(1)$, \\
 $E(\bw'\bR_t)^4=\sum_{ijkl=1}^Nw_iw_jw_kw_lER_{it}R_{jt}R_{kt}R_{lt}\leq \max_{i\leq N}ER_{it}^4\|\bw\|_1^4=O(1)$.   The second part follows immediately.\\
(ii) It is well known that for a stationary process with zero mean, $\var(K^{-1}\sum_{t=1}^KZ_{T,t})=K^{-1}\gamma_T(0)+2K^{-1}\sum_{h=1}^K(1-h/K)\gamma_T(h)$, which implies the result.

\end{proof}

\begin{lem} Under the assumptions of Theorem \ref{t31},
 \begin{equation}\label{eqa1}
\left[\var\left(\sum_{t=1}^T(\bw'\bR_t)^2\right)\right]^{-1/2}   T\bw'(\bS-\Sig)\bw\rightarrow^d\mathcal{N}(0,1).
\end{equation}
\end{lem}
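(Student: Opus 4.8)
The plan is to recognize the statement as a central limit theorem for the partial sum of the triangular array $\{Z_{T,t}\}$. Since the returns have mean zero, $\bw'\bS\bw=T^{-1}\sum_{t=1}^T(\bw'\bR_t)^2$ and $\bw'\Sig\bw=E(\bw'\bR_t)^2$, so that
\[
T\bw'(\bS-\Sig)\bw=\sum_{t=1}^T\big[(\bw'\bR_t)^2-E(\bw'\bR_t)^2\big]=\sum_{t=1}^TZ_{T,t},
\]
and centering does not change the normalizing variance, $\var\big(\sum_{t=1}^T(\bw'\bR_t)^2\big)=\var\big(\sum_{t=1}^TZ_{T,t}\big)=:s_T^2$. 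Thus the claim reduces to $s_T^{-1}\sum_{t=1}^TZ_{T,t}\rightarrow^d\mathcal{N}(0,1)$. By the preceding variance lemma together with the condition $\sum_{h=1}^Th\gamma_T(h)/T=o(\sigma_T^2)$ in Assumption \ref{a32new}(ii), one has $s_T^2=T\sigma_T^2(1+o(1))$, so the effective normalization is $\sqrt{T}\,\sigma_T$.

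The difficulty is that $\{Z_{T,t}\}$ is a genuine triangular array (it depends on $T$ through $N=N_T$ and $\bw$) of weakly dependent variables whose per-observation variance $\sigma_T^2$ may vanish as $T\to\infty$, so no fixed-sequence CLT applies directly. I would therefore use Bernstein's big-block/small-block method. Partition $\{1,\dots,T\}$ into alternating big blocks of length $p=p_T$ and small blocks of length $q=q_T$ with $q=o(p)$, $p=o(T)$, and a number of block pairs $k=\lfloor T/(p+q)\rfloor\to\infty$, and write $\sum_{t=1}^TZ_{T,t}=\sum_{j=1}^k\eta_j+\sum_{j=1}^k\xi_j+\rho$, where $\eta_j$ are the big-block sums, $\xi_j$ the small-block sums, and $\rho$ the residual. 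Using the variance lemma and $\sum_{h=1}^\infty|\gamma_T(h)|=O(1)$ (Assumption \ref{a32new}(ii)), each $\var(\xi_j)=O(q\sigma_T^2)$, so $\var\big(\sum_j\xi_j+\rho\big)=O((kq+q)\sigma_T^2)=o(T\sigma_T^2)=o(s_T^2)$; hence the small-block and residual parts are $o_p(s_T)$ and can be dropped.

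Next I would show the big-block sums behave like independent summands. Invoking the exponential strong-mixing bound $\alpha_R(T)\le\exp(-MT^{r_0})$ of Assumption \ref{a32new}(i), a Volkonskii--Rozanov type characteristic-function estimate bounds the difference between the joint law of $(\eta_1,\dots,\eta_k)$ and that of independent copies by $C\,k\,\alpha_R(q)$, which tends to $0$ once $q$ grows polynomially in $\log T$, thanks to the exponential decay. It then remains to verify a Lyapunov (hence Lindeberg) condition for the independent big blocks after normalizing by $s_T$. Here the uniform eighth-moment bound $\max_{i\le N}E|R_{it}|^8\le M$ of Assumption \ref{a31}(ii), together with $\|\bw\|_1\le c$, yields $E(\bw'\bR_t)^8=O(1)$ and so $EZ_{T,t}^4=O(1)$ uniformly; combined with the summable mixing this gives $E\eta_j^4=O(p+p^2\sigma_T^4)$ and $\var(\eta_j)=O(p\sigma_T^2)$. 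Consequently
\[
\frac{\sum_{j=1}^kE\eta_j^4}{s_T^4}=O\!\left(\frac{k(p+p^2\sigma_T^4)}{T^2\sigma_T^4}\right)=O\!\left(\frac{p}{T}+\frac{1}{T\sigma_T^4}\right)\to0,
\]
since $p=o(T)$ and $T\sigma_T^4\to\infty$ (implied by the rate condition $L^3=o(T\sigma_T^4)$ of Lemma \ref{l31} as $L\to\infty$). This verifies the Lyapunov condition and, via the Lindeberg--Feller theorem for the decoupled big blocks, establishes the CLT.

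I expect the main obstacle to be the simultaneous balancing of the three block-length constraints under a possibly shrinking $\sigma_T^2$: the small blocks must be negligible ($kq\sigma_T^2=o(T\sigma_T^2)$), the mixing-decoupling error must vanish ($k\,\alpha_R(q)\to0$), and the Lyapunov ratio must vanish ($p/T+1/(T\sigma_T^4)\to0$). The assumption $\alpha_R(T)=o(\gamma_T(0))$ in conjunction with the exponential mixing rate is precisely what allows a choice of $p$ and $q$ meeting all three requirements at once, so the care lies in exhibiting such a choice rather than in any individual estimate.
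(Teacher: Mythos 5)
Your overall strategy---Bernstein big-block/small-block decomposition, Volkonskii--Rozanov decoupling, then Lyapunov---is the classical machinery that underlies triangular-array mixing CLTs, and your reduction to $s_T^{-1}\sum_t Z_{T,t}\rightarrow^d\mathcal{N}(0,1)$ with $s_T^2=T\sigma_T^2(1+o(1))$ is correct. This is, however, a genuinely different route from the paper, which does not re-derive the CLT at all: it invokes Theorem 2.1 of Peligrad (1996) and merely verifies its hypotheses (exponential decay of $\gamma_T(h)$ via Davydov's inequality and the mixing rate, the vanishing correlation condition from $\alpha_R(T)=o(\gamma_T(0))$, and the Lindeberg condition from the eighth-moment bound together with $T\sigma_T^4\rightarrow\infty$).

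The problem is that your two central quantitative claims are not justified, and they are where the real difficulty lives. You assert $\var(\xi_j)=O(q\sigma_T^2)$ for small blocks and $E\eta_j^4=O(p+p^2\sigma_T^4)$ for big blocks. What the assumptions actually give (via $|\gamma_T(h)|\leq\gamma_T(0)$, $\sum_h|\gamma_T(h)|=O(1)$, and Rosenthal-type inequalities for mixing sequences) is $\var(\xi_j)=O(q\gamma_T(0))$ and $E\eta_j^4=O(p+p^2\gamma_T(0)^2)$. These coincide with your bounds only if $\gamma_T(0)=O(\sigma_T^2)$, but nothing in Assumptions \ref{a31}--\ref{a32new} or Lemma \ref{l31} rules out $\sigma_T^2=o(\gamma_T(0))$: the long-run variance $\sigma_T^2$ can be nearly cancelled by negative autocovariances while $\gamma_T(0)$ stays of constant order (only $T\sigma_T^4\rightarrow\infty$ is imposed). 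In that admissible regime, a block of length $q$ has variance of order $\gamma_T(0)$, not $q\sigma_T^2$, so with the honest bounds your constraints read $p\gg q\gamma_T(0)/\sigma_T^2$ (small-block negligibility) and $p\ll T\sigma_T^4/\gamma_T(0)^2$ (Lyapunov), and this window can be empty --- take $\gamma_T(0)\asymp 1$ and $T\sigma_T^4=\log\log T$. Repairing the argument requires sharper block-level estimates, e.g. $\var(\xi_j)\leq q\sigma_T^2+O\bigl(\gamma_T(0)(\log(e+1/\gamma_T(0)))^{2/r_0}\bigr)$ and a fourth-cumulant bound $E\eta_j^4\leq 3(\var\,\eta_j)^2+O(p)$, none of which appear in your sketch. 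So the ``balancing of $p$ and $q$'' that you defer to the final paragraph is not a routine bookkeeping step given your stated estimates; it is the actual content of the theorem, and it is precisely what the paper avoids by citing Peligrad's result rather than reproving it.
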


\begin{proof}
The proof is based on Theorem 2.1 of Peligrad (1996).  We have $\sqrt{T}\bw'(\bS-\Sig)\bw=T^{-1/2}\sum_{t=1}^TZ_{T,t}$. Define $B_{T, K}^2=\var(\sum_{t=1}^KZ_{T,t})$ and $B_{T}^2=\var(\sum_{t=1}^TZ_{T,t})=O(T)$. Also let $  \sigma^2_T   =\gamma_T(0)+2\sum_{h=1}^{\infty}\gamma_T(h)$.   By Davydov's inequality (Proposition 2.5 of Fan and Yao, 2003 with $p=1/2$ and $q=1/4$),  there are  constants $M, M_1, M_2>0$ such that for any integer $h$,
 $$
| \gamma_T(|h|)|\leq 8\alpha_R(|h|)^{1/4}(E(\bw'\bR_t)^2)^{1/2}(E(\bw'\bR_t)^4)^{1/4}=M_2\exp(-M|h|^{r_3}/4) $$
where the last equality follows from the $\alpha$-mixing condition and that $E(\bw'\bR_t)^4=O(1)$. By the assumption that $\alpha_R(T)=o(\gamma_T(0))$, the correlation $|$Corr$(Z_{T,t}, Z_{T,t+T})|\leq |\gamma_T(T)|/\gamma_T(0)=o(1)$. Moreover, the Lindeberg condition holds given $\max_{i\leq N}ER_{it}^8<\infty$. Hence  the conditions of Theorem 2.1 of Peligrad (1996) are  satisfied, which implies $B_T^{-1}\sum_{t=1}^TZ_{T,t}\rightarrow^d\mathcal{N}(0,1)$, equivalent to (\ref{eqa1}).

\textbf{Proofs of Theorem \ref{t31} and Corollary \ref{co31}}

Now let $\xi(T)=-2\sum_{h=1}^{T}h\gamma_T(h)/T$.  By the assumption that  $\xi(T)=o(  \sigma^2_T   )$,
 we have $T^{-1/2}(  \sigma^2_T   )^{-1/2}\sum_{t=1}^TZ_{T,t}\rightarrow^d \mathcal{N}(0,1)$. This also implies
 \begin{equation} \label{ea1}
\sqrt{\frac{T}{  \sigma^2_T   }}\bw'(\bS-\Sig)\bw\rightarrow^d \mathcal{N}(0, 1).
\end{equation}

Due to the assumptions that $L^{3/2}T^{-1/2}=o(  \sigma^2_T   )$ and $\sum_{h>L}\gamma_T(h)=o(  \sigma^2_T   )$, and Lemma \ref{l31},   we have  $|  \sigma^2_T   -\hsig^2|=o_p(  \sigma^2_T   )$.  Since $\bw'(\bS-\Sig)\bw=O_p(T^{-1/2}\sqrt{  \sigma^2_T   }),$
$$
\sqrt{T}|\bw'(\bS-\Sig)\bw|\left|\frac{1}{\sqrt{  \sigma^2_T   }}-\frac{1}{\sqrt{\hsig^2}}\right|=o_p(1).
$$
It then follows from  (\ref{ea1}) that $
\sqrt{T/\hsig^2}\bw'(\bS-\Sig)\bw\rightarrow^d \mathcal{N}(0, 1),
$
which gives the H-CLUB.  Corollary \ref{co31} follows straightforward from applying the delta method.


\end{proof}

\section{Proofs for the Factor-based Estimation}

\subsection{Proof of Lemma \ref{l32}}

 \begin{lem}\label{la.9}
$\max_{h\leq L}|\hgamma_f(h)-\gamma_f(h)|=O_p(\sqrt{(L+\log N)/T}).$
\end{lem}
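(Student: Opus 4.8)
The plan is to separate the two sources of error in $\hgamma_f(h)$: the sampling error one would incur even with the true loadings, and the additional error from replacing $\bB$ by the least-squares estimator $\hB$. Write $g_t=\bw'\bB\bff_t$ and $\widehat g_t=\bw'\hB\bff_t$, so that $\gamma_f(h)=\cov(g_t^2,g_{t+h}^2)$ and $\hgamma_f(h)$ is the sample autocovariance of $\widehat g_t^2$, centered at $\overline{\widehat g^2}=\bw'\hB\hcov(\bff_t)\hB'\bw$. Let $\tilde\gamma_f(h)$ denote the infeasible sample autocovariance built from $g_t^2$ and centered at $\overline{g^2}=T^{-1}\sum_t g_t^2$. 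Then
$$\max_{h\le L}|\hgamma_f(h)-\gamma_f(h)|\le \max_{h\le L}|\tilde\gamma_f(h)-\gamma_f(h)|+\max_{h\le L}|\hgamma_f(h)-\tilde\gamma_f(h)|,$$
and I would bound the two pieces separately.

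For the first piece I would mimic the argument of Lemma \ref{la.3} (and its supporting Lemma \ref{la.2}) applied to the scalar process $g_t^2$. Because $\|\bB\|_{\max}$ is bounded and $\bff_t$ has sub-exponential tails (Assumption \ref{a3.5}), $Eg_t^4=O(1)$, and $g_t^2$ inherits the strong-mixing and summable-autocovariance structure from Assumptions \ref{a3.5} and \ref{a32}. A Chebyshev/maximal inequality with a union bound over the $L$ lags then yields $\max_{h\le L}|\tilde\gamma_f(h)-\gamma_f(h)|=O_p(\sqrt{L/T})$. This step carries no dependence on $N$, since $g_t$ is a fixed-dimensional object ($K$ is fixed).

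The second piece carries the $\log N$. Write $\delta_t=\widehat g_t-g_t=v'\bff_t$ with $v=(\hB-\bB)'\bw\in\mathbb{R}^K$. Using $\hB-\bB=(T^{-1}\sum_t\bu_t\bff_t')\hcov(\bff_t)^{-1}$, the boundedness of $\lambda_{\min}(\cov(\bff_t))$, and the standard entrywise loading rate $\|\hB-\bB\|_{\max}=O_p(\sqrt{\log N/T})$ (Bernstein plus a union bound over the $NK$ mean-zero sub-exponential averages $T^{-1}\sum_t u_{it}f_{jt}$), I get $\|v\|\le\|\bw\|_1\|\hB-\bB\|_{\max}\,O_p(1)=O_p(\sqrt{\log N/T})$ since $\|\bw\|_1\le c$. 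Expanding $\widehat g_t^2=g_t^2+2g_t\delta_t+\delta_t^2$ and the analogous expansion of the centering term, each term in $\hgamma_f(h)-\tilde\gamma_f(h)$ is a sample average from which the vector $v$ can be pulled out; for instance the leading cross term is $2v'\,[T^{-1}\sum_t\bff_{t+h}g_{t+h}(g_t^2-\overline{g^2})]$, whose bracket is an $O_p(1)$ average of sub-exponential products. Hence each contribution is $O_p(\|v\|)=O_p(\sqrt{\log N/T})$ uniformly over $h\le L$, and the residual $\delta_t^2$ and centering terms are of smaller order $O_p(\log N/T)$.

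Combining the two pieces gives $\max_{h\le L}|\hgamma_f(h)-\gamma_f(h)|=O_p(\sqrt{L/T})+O_p(\sqrt{\log N/T})=O_p(\sqrt{(L+\log N)/T})$, as claimed. The main obstacle is the second piece: showing that least-squares estimation of the $N\times K$ loading matrix contributes only at rate $\sqrt{\log N/T}$, uniformly across the $L$ lags, rather than something worse. The crux is to avoid crude bounds such as $\max_t|\delta_t|$ (which would introduce a spurious $\log T$ from the factor tails) and instead factor $v$ out of each sample average, so that the slowly concentrating $O_p(1)$ brackets carry no excess $N$- or $T$-dependence, while the diversification $\|\bw\|_1\le c$ together with the sub-exponential and mixing controls keeps those brackets bounded in probability uniformly over $h\le L$.
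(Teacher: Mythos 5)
Your proposal is correct and follows essentially the same route as the paper: the paper likewise splits the error into a pure sampling piece with the true loadings (bounded by $O_p(\sqrt{L/T})$ via Chebyshev plus a union bound over the $L$ lags, its term $a_{11}$) and a loading-estimation piece driven by $\|\hB-\bB\|_{\max}=O_p(\sqrt{\log N/T})$ together with $\|\bw\|_1\leq c$, so that $\|\bw'(\hB-\bB)\|=O_p(\sqrt{\log N/T})$ (its terms $a_{12}$, $a_2$), with the edge-effect/centering terms $a_3,a_4=O_p(L^2/T)$ handled as in Lemma \ref{la.2}. Your regrouping of the decomposition (infeasible-versus-true first, then feasible-versus-infeasible) is only cosmetically different from the paper's $a_1+a_2+a_3+a_4$ bookkeeping.
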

\begin{proof}

The triangular inequality implies $\max_{h\leq L}|\hgamma_f(h)-\gamma_f(h)|\leq\sum_{i=1}^4a_i$, where
\begin{eqnarray*}
a_1&=&\max_{h\leq L}|T^{-1}\sum_{t=1}^{T-h}(\bw'\hB\bff_{t+h})^2(\bw'\hB\bff_t)^2-E(\bw'\bB\bff_t)^2(\bw'\bB\bff_{t+h})^2|, \cr
a_2&= &|(\bw'\hB\hcov(\bff_t)\hB'\bw)^2-(\bw'\bB\cov(\bff_t)\bB'\bw)^2|\cr
a_3&=&\bw'\hB\hcov(\bff_t)\hB'\bw\max_{h\leq L}|\bw'\hB\hcov(\bff_t)\hB'\bw-T^{-1}\sum_{t=1}^{T-h}(\bw'\hB\bff_t)^2|, \cr
a_4&=&\bw'\hB\hcov(\bff_t)\hB'\bw\max_{h\leq L}|\bw'\hB\hcov(\bff_t)\hB'\bw-T^{-1}\sum_{t=1}^{T-h}(\bw'\hB\bff_{t+h})^2|.
\end{eqnarray*}
$a_1$ is bounded by $a_{11}+a_{12}$, where \\
 $a_{11}=
 \max_{h\leq L}|T^{-1}\sum_{t=1}^{T-h}(\bw'\bB\bff_{t+h})^2(\bw'\bB\bff_t)^2-E(\bw'\bB\bff_t)^2(\bw'\bB\bff_{t+h})^2|,
 $
and \\$a_{12}=\max_{h\leq L}|T^{-1}\sum_{t=1}^{T-h}(\bw'\hB\bff_{t+h})^2(\bw'\hB\bff_t)^2-(\bw'\bB\bff_{t+h})^2(\bw'\bB\bff_t)^2|$.

Given the assumption that $\max_{h\leq L}\sum_{t=1}^T\cov[(\bw'\bB\bff_1)^2(\bw'\bB\bff_{1+h})^2, (\bw'\bB\bff_{1+t})^2(\bw'\bB\bff_{1+t+h})^2]=O(1)$, the same argument of the proof of Lemma \ref{la.2}(ii) implies $a_{11}=O_p(\sqrt{L/T}).$  On the other hand, by (B.14) of Fan et al. (2011),  $\|\hB-\bB\|_{\max}=O_p(\sqrt{\log N/T})$, which implies $\|\bw'(\hB-\bB)\|=O_p(\sqrt{\log N/T})$. It is then easy to show that $a_{12}=O_p(\sqrt{\log N/T})$. It follows that $a_1=O_p(\sqrt{(L+\log N)/T}).$
By the triangular inequality, $a_2=O_p(\sqrt{\log N/T})$. Finally, by the same argument of the proof of Lemma \ref{la.2}, we have $a_3=O_p(L^2/T)=a_4$.
\end{proof}

\textbf{Proof of Lemma \ref{l32}}

We have $|\hsig_f^2-  \sigma^2_f   |\leq\sum_{i=1}^3b_i$, where
$
b_1=|\hgamma_f(0)-\gamma_f(0)|, $$ b_3=2\sum_{h>L}\gamma_f(h)
$, and $ b_2=2\sum_{h=1}^L|\hgamma_f(h)-\gamma_f(h)|.$
Lemma \ref{la.9} implies $b_2\leq 2L\max_{h\leq L}|\hgamma_f(h)-\gamma_f(h)|=O_p(L\sqrt{(L+\log N)/T})$, which gives the convergence rate.  The second statement  is due to $\hsig_f^2=o_p( \log N)$.

\subsection{Proof of Theorem \ref{t32}}\label{sb2}

 Write $\bR=(\bR_1,...,\bR_T)$ be $N\times T$; $\bF=(\bff_1,...,\bff_T)$ be $r\times T$, and $\hcov(\bff_t)=\bF\bF'/T.$ We have $\hB=\bR\bF'(\bF\bF')^{-1}$.
 Define $\bC_T=\hB-\bB$ and $\bD_T=\hcov(\bff_t)-\cov(\bff_t).$ The we have the following decomposition:
$\bw'(\hSig_f-\Sig)\bw=\sum_{i=1}^4d_i,
$
where
\begin{eqnarray*}
d_1=\bw'\bB\bD_T\bB'\bw; \quad d_2=2\bw'\bC_T\hcov(\bff_t)\bB'\bw;
\cr
d_3=\bw'\bC_T\hcov(\bff_t)\bC_T'\bw,\quad d_4=\bw'(\hSig_u-\Sig_u)\bw.
\end{eqnarray*}
 We now study each of the above four terms separately. Let $\bE=(\bu_1,...,\bu_T)$ be $N\times T$. Then  $\bC_T=\bE\bF'(\bF\bF')^{-1}$.

\begin{lem}  \label{la.4} (i) $\|\bF\bE'\bw\|=O_p(T^{1/2}(\bw'\Sig_u\bw)^{1/4}(E|\bw'\bu_t|^4)^{1/8})$.\\
(ii) $|d_2|=O_p(T^{-1/2}(\bw'\Sig_u\bw)^{1/4}(E|\bw'\bu_t|^4)^{1/8})  )$.
\end{lem}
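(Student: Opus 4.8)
The plan is to prove (i) by a direct second-moment computation combined with Markov's inequality, and then to deduce (ii) from (i) through an exact algebraic cancellation. For (i), I would set $g_t=\bw'\bu_t$ so that $\bF\bE'\bw=\sum_{t=1}^T\bff_t g_t$ is a $K$-vector with $Eg_t=0$. Since $\{\bff_t\}$ and $\{\bu_t\}$ are independent by Assumption \ref{a3.5}(i),
$$
E\|\bF\bE'\bw\|^2=\sum_{t,s=1}^T E[g_tg_s]\,E[\bff_t'\bff_s]=\sum_{t=1}^T E[g_t^2]\,E\|\bff_t\|^2+\sum_{t\neq s}\cov(g_t,g_s)\,E[\bff_t'\bff_s].
$$
The diagonal part is controlled by $E[g_t^2]=\bw'\Sig_u\bw$ and $E\|\bff_t\|^2=\tr(\cov(\bff_t))=O(1)$ (here $K$ is fixed and the factors have finite moments by Assumption \ref{a3.5}(ii)), so it is $O(T\,\bw'\Sig_u\bw)$; by Jensen's inequality $\bw'\Sig_u\bw\le (E|\bw'\bu_t|^4)^{1/2}$, whence this is already within the target order $T(\bw'\Sig_u\bw)^{1/2}(E|\bw'\bu_t|^4)^{1/4}$.

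The crux is the off-diagonal sum, and the key device I would use is Davydov's inequality with the \emph{unbalanced} exponents $p=2$ and $q=4$ (the fourth moment of $g_t$ being finite by the exponential tail in Assumption \ref{a3.5}(ii) together with $\|\bw\|_1\le c$). Because $g_t$ is measurable with respect to $\bu_t$, the mixing coefficients of $\{g_t\}$ are dominated by $\alpha_f$, and Davydov then yields $|\cov(g_t,g_s)|\le C\,\alpha_f(|t-s|)^{1/4}(\bw'\Sig_u\bw)^{1/2}(E|\bw'\bu_t|^4)^{1/4}$. Combining this with $|E[\bff_t'\bff_s]|\le E\|\bff_1\|^2=O(1)$ and the exponential decay of $\alpha_f$ from Assumption \ref{a32}(i) (so that $\sum_{h\ge1}\alpha_f(h)^{1/4}=O(1)$), the off-diagonal sum is $O(T(\bw'\Sig_u\bw)^{1/2}(E|\bw'\bu_t|^4)^{1/4})$. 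Hence $E\|\bF\bE'\bw\|^2=O(T(\bw'\Sig_u\bw)^{1/2}(E|\bw'\bu_t|^4)^{1/4})$, and Markov's inequality delivers (i). I expect this tuning of the Davydov exponents to be the only genuine obstacle, since it is precisely what produces the mixed moment $(\bw'\Sig_u\bw)^{1/4}(E|\bw'\bu_t|^4)^{1/8}$; the balanced choices $p=q=4$ or $p=q=2$ would give a strictly cruder rate.

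For (ii), I would exploit the explicit expressions $\bC_T=\bE\bF'(\bF\bF')^{-1}$ and $\hcov(\bff_t)=\bF\bF'/T$, which collapse the middle factor:
$$
\bw'\bC_T\hcov(\bff_t)=\bw'\bE\bF'(\bF\bF')^{-1}\frac{\bF\bF'}{T}=\frac1T\,\bw'\bE\bF'.
$$
Since $\bw'\bE\bF'=(\bF\bE'\bw)'$, this gives $d_2=\frac2T\,\bw'\bB(\bF\bE'\bw)$ and hence $|d_2|\le\frac2T\|\bB'\bw\|\,\|\bF\bE'\bw\|$. It then remains to note $\|\bB'\bw\|=O(1)$: from $\Sig=\bB\cov(\bff_t)\bB'+\Sig_u$ with $\Sig_u$ positive semidefinite we get $\|\bB'\bw\|^2\le\lambda_{\min}(\cov(\bff_t))^{-1}\bw'\bB\cov(\bff_t)\bB'\bw\le C\,\bw'\Sig\bw\le C\|\Sig\|_{\max}\|\bw\|_1^2=O(1)$, using Assumption \ref{a3.5}(iii) and $\|\bw\|_1\le c$. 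Plugging in the rate from (i) then yields $|d_2|=O_p(T^{-1/2}(\bw'\Sig_u\bw)^{1/4}(E|\bw'\bu_t|^4)^{1/8})$, as claimed.
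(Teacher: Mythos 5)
Your proposal is correct and follows essentially the same route as the paper's proof: a second-moment bound on $\|\bF\bE'\bw\|$ obtained by decoupling factors from errors via independence, Davydov's inequality with unbalanced exponents to get the mixed term $(\bw'\Sig_u\bw)^{1/2}(E|\bw'\bu_t|^4)^{1/4}$, summability of $\alpha_f(h)^{1/4}$, and Markov; then for (ii) the exact cancellation $(\bF\bF')^{-1}\hcov(\bff_t)=T^{-1}\bI$ together with Cauchy--Schwarz and $\|\bB'\bw\|=O(1)$. The only cosmetic differences are that the paper applies Davydov uniformly to all pairs $(t,s)$ rather than treating the diagonal separately, and it bounds $\|\bB'\bw\|$ directly from $\|\bB\|_{\max}<C$ and $\|\bw\|_1=O(1)$ instead of your detour through the positive semidefiniteness of $\Sig_u$; both variants are equally valid.
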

\begin{proof}  We have,
\begin{eqnarray*}
E\|\bF\bE'\bw\|^2&=&E[\tr(\bw'\bE\bF'\bF\bE'\bw)]=\tr[E(\bF\bE'\bw\bw'\bE\bF')]=\tr[E(\bF E(\bE'\bw\bw'\bE|\bF)\bF')]\cr
&=&\tr[E(\bF E(\bE'\bw\bw'\bE)\bF')].
\end{eqnarray*}
For the inner expectation, $
E(\bE'\bw\bw'\bE)=(E[\bu_t'\bw\bw'\bu_s])_{t\leq t, s\leq T}=( \cov(\bw'\bu_t, \bw'\bu_s)  )_{t\leq t, s\leq T}.
$
By Davydov's inequality, (see, e.g., Proposition 2.5 of Fan and Yao, 2003 with $p=1/2$ and $q=1/4$),  $| \cov(\bw'\bu_t, \bw'\bu_s)|\leq 8 \alpha_f(|t-s|)^{1/4}(\bw'\Sig_u\bw)^{1/2}(E|\bw'\bu_t|^4)^{1/4}$, where $\alpha(\cdot)$ denotes the $\alpha$-mixing coefficient. By $\sum_{t=1}^{\infty}\alpha_f(t)^{1/4}<\infty$, we have
\begin{eqnarray*}
E\|\bF\bE'\bw\|^2&=&\sum_{k=1}^r\sum_{t=1}^T\sum_{s=1}^T\cov(\bw'\bu_t, \bw'\bu_s) E(f_{kt}f_{ks})\cr
&=&O(1)(\bw'\Sig_u\bw)^{1/2}(E|\bw'\bu_t|^4)^{1/4}
\sum_{t=1}^T\sum_{s=1}^T\alpha_f(|t-s|)^{1/4}=O(T(\bw'\Sig_u\bw)^{1/2}(E|\bw'\bu_t|^4)^{1/4}),
\end{eqnarray*}
which then implies (i). For part (ii), we have$$
|d_2|=2|\bw'\bB\hcov(\bff_t)(\bF\bF')^{-1}\bF\bE'\bw|=\frac{2}{T}|\bw'\bB\bF\bE'\bw|\leq \frac{2}{T}\|\bw'\bB\|\|\bF\bE'\bw\|.
$$
Now write $\bB=(\lambda_{ij})_{i\leq N, j\leq r}$, then $\|\bw'\bB\|^2=\sum_{j=1}^r(\sum_{i=1}^Nw_i\lambda_{ij})^2\leq {\max_{i,j}}|\lambda_{ij}|r\|\bw\|_1^2=O(1)$.

\end{proof}

\begin{lem} \label{lb.2}For the factor-based thresholded error covariance matrix,
$$
\|\hSig_u-\Sig_u\|=O_p\left(s_N\left(\frac{\log N}{T}\right)^{1/2-q/2}\right)
$$
\end{lem}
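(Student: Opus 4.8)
The plan is to exploit that for a symmetric matrix the operator norm is dominated by the maximum absolute row sum, so that
$$
\|\hSig_u-\Sig_u\|\leq \max_{i\leq N}\sum_{j=1}^N|\widehat{\Sigma}_{u,ij}-\Sigma_{u,ij}|,
$$
and then to bound each row sum by combining a uniform entrywise rate for the residual sample covariance with a Bickel--Levina-type thresholding argument that trades the thresholding error against the $\ell_q$ sparsity of $\Sig_u$ supplied by Assumption \ref{ass3.5}. Throughout I write $\omega_T=\sqrt{\log N/T}$.

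The first ingredient is the uniform consistency $\max_{i,j\leq N}|S_{u,ij}-\Sigma_{u,ij}|=O_p(\omega_T)$. To obtain it I would write $\hu_t=\bu_t-(\hB-\bB)\bff_t$, so that $S_{u,ij}-\Sigma_{u,ij}$ decomposes into an ``oracle'' term $T^{-1}\sum_t(u_{it}u_{jt}-\Sigma_{u,ij})$ plus cross terms involving $\hB-\bB$. The oracle term is handled by a Bernstein-type inequality for $\alpha$-mixing sequences together with the exponential-type tail bounds of Assumption \ref{a3.5}(ii), which after a union bound over the $N^2$ pairs gives a uniform $O_p(\omega_T)$ rate. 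The cross terms are controlled by the maximum-norm loading rate $\|\hB-\bB\|_{\max}=O_p(\omega_T)$ ((B.14) of Fan et al.\ 2011, already invoked in the proof of Lemma \ref{la.9}), combined with $\|\bB\|_{\max}<C$ and boundedness of the sample second moments of $\bff_t$; one checks these contribute at most the same order.

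On the event where this rate holds, each diagonal $S_{u,ii}$ stays bounded away from $0$ and $\infty$ (Assumption \ref{a3.5}(iii)), so $\tau_{ij}^f\asymp\omega_T$, and I would bound a fixed row sum by splitting the off-diagonal indices according to the size of the true entry. For entries with $|\Sigma_{u,ij}|\geq\theta\omega_T$ there are at most $s_N(\theta\omega_T)^{-q}$ of them, since $\sum_j|\Sigma_{u,ij}|^q\leq s_N$, and for each the property $|s_{ij}(z)-z|\leq\tau_{ij}^f$ yields $|\widehat{\Sigma}_{u,ij}-\Sigma_{u,ij}|=O(\omega_T)$ (whether the entry is kept or thresholded to zero), contributing $O(s_N\omega_T^{1-q})$. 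For entries with $|\Sigma_{u,ij}|<\theta\omega_T$, choosing the threshold constant $C$ large enough forces $s_{ij}(S_{u,ij})=0$ on this event, because $|S_{u,ij}|\leq|\Sigma_{u,ij}|+O(\omega_T)<\tau_{ij}^f$; the error is then exactly $|\Sigma_{u,ij}|\leq|\Sigma_{u,ij}|^q(\theta\omega_T)^{1-q}$, and summing gives again $O(s_N\omega_T^{1-q})$. The diagonal adds only $O_p(\omega_T)$ per row, which is of smaller order. Since these bounds are uniform in $i$ and the underlying event has probability tending to one, the row-sum bound delivers $\|\hSig_u-\Sig_u\|=O_p(s_N(\log N/T)^{(1-q)/2})$.

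The main obstacle is the first ingredient: establishing the uniform entrywise rate while carrying the estimation error of the loadings, since the residuals $\hu_t$ replace the unobservable $\bu_t$. The delicate point is verifying that the cross terms built from $\hB-\bB$ do not inflate the rate beyond $\omega_T$ \emph{uniformly} over all $N^2$ entries; this is exactly where the exponential-type tail assumptions and the maximum-norm loading rate are indispensable, and it is the step most closely adapted from Fan et al.\ (2011, 2013).
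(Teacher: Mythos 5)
Your proof is correct and follows essentially the same route as the paper: the paper's own proof consists of citing Lemma 3.1 of Fan et al.\ (2011) for the uniform rate on the estimated residuals and Theorem A.1 of Fan et al.\ (2013) for the thresholding step, and your argument simply reconstructs what those two cited results contain --- uniform entrywise consistency of $\bS_u$ (oracle term plus cross terms driven by $\hB-\bB$), followed by the Bickel--Levina row-sum argument trading the threshold level against the $\ell_q$ sparsity of $\Sig_u$. The only cosmetic difference is that the paper phrases the first ingredient as $\max_{i\leq N}T^{-1}\sum_{t=1}^T(\hat{u}_{it}-u_{it})^2=O_p(\log N/T)$ rather than as an entrywise bound on $\bS_u-\Sig_u$, but that is exactly how the cited Theorem A.1 converts it internally, so there is no substantive gap.
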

\begin{proof}
By Lemma 3.1 in Fan et al. (2011), we have, $\max_{i\leq N}T^{-1}\sum_{t=1}^T(\hat{u}_{it}-u_{it})^2=O_p({\log N/T})$.  The result then follows from Theorem A.1 in Fan et al. (2013).
\end{proof}

\begin{lem}
\label{la.6}
(i) $|d_3|=O_p(T^{-1}(\bw'\Sig_u\bw)^{1/2}(E|\bw'\bu_t|^4)^{1/4})$.\\
(ii) $|d_4|=O_p(s_N(\log N/T)^{1/2-q/2}\bw'\Sig_u\bw)$
\end{lem}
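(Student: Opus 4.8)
The plan is to treat the two parts separately, exploiting the algebraic structure $\bC_T=\bE\bF'(\bF\bF')^{-1}$ and $\hcov(\bff_t)=\bF\bF'/T$ recorded above, together with Lemma \ref{la.4}(i) for part (i) and Lemma \ref{lb.2} for part (ii).

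For part (i), I would substitute these two expressions into $d_3=\bw'\bC_T\hcov(\bff_t)\bC_T'\bw$ and notice that the central factor telescopes, since
$$
(\bF\bF')^{-1}\frac{\bF\bF'}{T}(\bF\bF')^{-1}=\frac{1}{T}(\bF\bF')^{-1},
$$
so that $d_3=\frac{1}{T}(\bF\bE'\bw)'(\bF\bF')^{-1}(\bF\bE'\bw)$. The operator-norm bound then gives $|d_3|\leq \frac{1}{T}\|(\bF\bF')^{-1}\|\,\|\bF\bE'\bw\|^2$. The squared norm $\|\bF\bE'\bw\|^2$ is controlled by squaring Lemma \ref{la.4}(i), and since $\bF\bF'=T\hcov(\bff_t)$ with $\hcov(\bff_t)\rightarrow^p\cov(\bff_t)$ whose smallest eigenvalue is bounded away from zero by Assumption \ref{a3.5}(iii), one has $\|(\bF\bF')^{-1}\|=O_p(1/T)$. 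Multiplying the three factors $\frac{1}{T}\cdot O_p(1/T)\cdot O_p(T(\bw'\Sig_u\bw)^{1/2}(E|\bw'\bu_t|^4)^{1/4})$ yields precisely $O_p(T^{-1}(\bw'\Sig_u\bw)^{1/2}(E|\bw'\bu_t|^4)^{1/4})$.

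For part (ii), the naive estimate $|d_4|\leq\|\hSig_u-\Sig_u\|\,\|\bw\|^2$ would produce the crude factor $\|\bw\|^2$ rather than the sharper $\bw'\Sig_u\bw$ that can be diversified small, so the idea is to whiten. Setting $\bv=\Sig_u^{1/2}\bw$ so that $\|\bv\|^2=\bw'\Sig_u\bw$, I would rewrite
$$
d_4=\bv'\Sig_u^{-1/2}(\hSig_u-\Sig_u)\Sig_u^{-1/2}\bv,
$$
whence $|d_4|\leq\|\Sig_u^{-1/2}(\hSig_u-\Sig_u)\Sig_u^{-1/2}\|\,\bw'\Sig_u\bw\leq\lambda_{\min}(\Sig_u)^{-1}\|\hSig_u-\Sig_u\|\,\bw'\Sig_u\bw$. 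Since $\lambda_{\min}(\Sig_u)>C^{-1}$ is bounded away from zero by Assumption \ref{a3.5}(iii) and $\|\hSig_u-\Sig_u\|=O_p(s_N(\log N/T)^{1/2-q/2})$ by Lemma \ref{lb.2}, the stated bound follows at once.

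I expect the main obstacle to be the conjugation (whitening) step in part (ii): extracting the diversifiable factor $\bw'\Sig_u\bw$ instead of $\|\bw\|^2$ is exactly what makes the idiosyncratic contribution asymptotically negligible for a well-diversified $\bw$, and it is the one place where the bound of Lemma \ref{lb.2} must be applied in a norm relative to $\Sig_u$ rather than directly. The remainder of both arguments is routine operator-norm bookkeeping.
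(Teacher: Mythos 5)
Your proposal is correct and essentially identical to the paper's own proof: part (i) is the same computation, writing $d_3=T^{-1}\bw'\bE\bF'(\bF\bF')^{-1}\bF\bE'\bw$, invoking $\|(\bF\bF')^{-1}\|=O_p(T^{-1})$, and squaring Lemma \ref{la.4}(i). For part (ii), your whitening step is algebraically equivalent to the paper's one-line argument $|d_4|\leq\|\hSig_u-\Sig_u\|\,\|\bw\|^2\leq\lambda_{\min}^{-1}(\Sig_u)\|\hSig_u-\Sig_u\|\,\bw'\Sig_u\bw$ followed by Lemma \ref{lb.2} — note that the "naive" bound you dismiss is exactly the paper's first step, and since the eigenvalues of $\Sig_u$ are bounded away from zero and infinity (Assumption \ref{a3.5}(iii)), the factors $\|\bw\|^2$ and $\bw'\Sig_u\bw$ are of the same order, so nothing is lost by that direct route.
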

\begin{proof} (i) Because $\|(\bF\bF')^{-1}\|=O_p(T^{-1})$,\\
$
|d_3|=T^{-1}\bw'\bE\bF'(\bF\bF')^{-1}\bF\bE'\bw=O_p(T^{-2}\| \bF\bE'\bw \|^2).
$
It then follows from Lemma \ref{la.4}.\\
(ii) it follows from $|d_4|\leq \|\hSig_u-\Sig_u\|\|\bw\|^2\leq\lambda_{\min}^{-1}(\Sig_u)\|\hSig_u-\Sig_u\|\bw'\Sig_u\bw$ and Lemma \ref{lb.2}.
\end{proof}

 \begin{lem}\label{la.7}
$ \sum_{h=1}^{\infty}|\gamma_f(h)|<\infty$
 \end{lem}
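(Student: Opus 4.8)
The plan is to treat $\gamma_f(h)$ as the lag-$h$ autocovariance of the stationary scalar sequence $Y_t=(\bw'\bB\bff_t)^2$ and to bound it directly by Davydov's covariance inequality for $\alpha$-mixing sequences, exactly as in the proof of Lemma \ref{la.4}. First I would collect the two ingredients the covariance bound requires. For the moments, the proof of Lemma \ref{la.4} already establishes $\|\bw'\bB\|=O(1)$ uniformly in $N$; combined with the exponential tail bound on each $f_{jt}$ in Assumption \ref{a3.5}(ii) and the fixed number of factors $K$, this makes the scalar $\bw'\bB\bff_t$ possess finite moments of every order, uniformly in $N$ and $T$. In particular $EY_t^m=E(\bw'\bB\bff_t)^{2m}=O(1)$ for each fixed $m$. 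For the dependence, $\{\bff_t\}$ has $\alpha$-mixing coefficient $\alpha_f$, which decays exponentially by Assumption \ref{a32}(i).

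Next I would apply Davydov's inequality (Proposition 2.5 of Fan and Yao 2003) with exponents $p=q=1/4$ to $X=Y_t$ and $Z=Y_{t+h}$, so that the exponent on the mixing coefficient is $1-p-q=1/2$ and the moment factors are fourth moments of $Y_t$. This yields, for some constant $C>0$ independent of $h$, $N$ and $T$,
$$
|\gamma_f(h)|=|\cov(Y_t,Y_{t+h})|\le 8\,\alpha_f(h)^{1/2}\,(EY_t^4)^{1/4}(EY_{t+h}^4)^{1/4}=C\,\alpha_f(h)^{1/2},
$$
where I used $(EY_t^4)^{1/4}=(E(\bw'\bB\bff_t)^8)^{1/4}=O(1)$ from the moment bound together with stationarity.

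Finally I would sum over $h$ using the exponential mixing rate. Since Assumption \ref{a32}(i) gives $\alpha_f(h)\le\exp(-Mh^{r_3})$,
$$
\sum_{h=1}^{\infty}|\gamma_f(h)|\le C\sum_{h=1}^{\infty}\exp\!\big(-Mh^{r_3}/2\big)<\infty,
$$
which is the assertion. I expect the only point requiring genuine care to be the \emph{uniformity} of the moment bound on $\bw'\bB\bff_t$: because the row vector $\bw'\bB$ varies with $N$, one cannot treat $Y_t$ as a fixed function but must invoke the uniform control $\|\bw'\bB\|=O(1)$ before asserting that $EY_t^m$ stays bounded as $N,T\to\infty$. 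Everything else is a routine combination of Davydov's inequality with the geometric-type decay of $\alpha_f$.
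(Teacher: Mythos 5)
Your proposal is correct and follows essentially the same route as the paper: both bound $|\gamma_f(h)|$ via Davydov's inequality (Proposition 2.5 of Fan and Yao, 2003) applied to the squares $(\bw'\bB\bff_t)^2$, using the uniform moment control coming from $\|\bw'\bB\|=O(1)$ (equivalently $\|\bw\|_1=O(1)$ and $\|\bB\|_{\max}=O(1)$) together with the exponential tails of $\bff_t$, and then sum the resulting geometric-type bound $\exp(-cMh^{r_3})$ furnished by Assumption \ref{a32}(i). The only difference is cosmetic: you take symmetric exponents (yielding $\alpha_f(h)^{1/2}$ and eighth moments of $\bw'\bB\bff_t$), while the paper uses asymmetric exponents yielding $\alpha_f(h)^{1/4}$; your version is, if anything, stated more carefully, since it makes explicit the moment order actually needed on the squared variables.
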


 \begin{proof}
 By Davydov's inequality (Proposition 2.5 of Fan and Yao, 2003 with $p=1/2$ and $q=1/4$),  there are  constants $M_1, M_2>0$ such that for any integer $h$,
 $$
| \gamma_f(|h|)|\leq 8 \alpha_f(|h|)^{1/4}(E(\bw'\bB\bff_t)^2)^{1/2}(E(\bw'\bB\bff_t)^4)^{1/4}=M_2\exp(-M|h|^{r_3}/4) $$
where the last equality follows from the $\alpha$-mixing condition as well as the fact that $E(\bw'\bB\bff_t)^4=O(1)$ due to $\|\bw\|_1=O(1).$ The result the follows from $\sum_{h=1}^{\infty}\exp(-Ch^{r_3})<\infty $ for any $C, r_3>0.$
 \end{proof}

 \begin{lem}\label{la.8}
 $\sqrt{T/\sigma_f^2}d_1\rightarrow^d \mathcal{N}(0,1)$.
 \end{lem}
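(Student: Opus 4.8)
The plan is to reduce $d_1$ to a normalized partial sum of a strictly stationary, exponentially $\alpha$-mixing triangular array and then invoke the central limit theorem of Peligrad (1996), exactly as in the proof of the limiting distribution for the sample-covariance estimator (Theorem \ref{t31}). Writing $\ba=\bB'\bw$, the identity $\bw'\bB\bff_t\bff_t'\bB'\bw=(\bw'\bB\bff_t)^2$ gives
\begin{equation*}
d_1=\ba'(\hcov(\bff_t)-\cov(\bff_t))\ba=T^{-1}\sum_{t=1}^T g_{T,t},\qquad g_{T,t}=(\bw'\bB\bff_t)^2-E(\bw'\bB\bff_t)^2,
\end{equation*}
so that $\sqrt{T}\,d_1=T^{-1/2}\sum_{t=1}^T g_{T,t}$. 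For each fixed $T$ the sequence $\{g_{T,t}\}_t$ is strictly stationary with mean zero by Assumption \ref{a3.5}(i), has $\var(g_{T,t})=\gamma_f(0)$ and $\cov(g_{T,t},g_{T,t+h})=\gamma_f(h)$, and is $\alpha$-mixing with coefficient bounded by $\alpha_f$, which decays exponentially by Assumption \ref{a32}(i). As in the proof of Lemma \ref{la.7}, $\|\bw\|_1=O(1)$ together with the exponential tails of the factors yields $E(\bw'\bB\bff_t)^4=O(1)$, and in fact uniformly bounded moments of every order, which is what I will use to check the Lindeberg condition.

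Next I would control the normalizing variance. Setting $B_T^2=\var(\sum_{t=1}^T g_{T,t})=T\gamma_f(0)+2\sum_{h=1}^{T-1}(T-h)\gamma_f(h)$, Lemma \ref{la.7} gives $\sum_{h\geq 1}|\gamma_f(h)|<\infty$ and hence $B_T^2=O(T)$. Comparing with $T\sigma_f^2=T[\gamma_f(0)+2\sum_{h\geq1}\gamma_f(h)]$, the difference is
\begin{equation*}
B_T^2-T\sigma_f^2=-2\sum_{h=1}^{T-1}h\,\gamma_f(h)-2T\sum_{h\geq T}\gamma_f(h),
\end{equation*}
and both terms are $o(T\sigma_f^2)$: the first by the assumption $\sum_{h\geq1}h\gamma_f(h)/T=o(\sigma_f^2)$ in Assumption \ref{a32}(ii), and the second because the exponential decay of $\gamma_f(h)$ makes $T\sum_{h\geq T}\gamma_f(h)$ negligible relative to $\sigma_f^2$, which is bounded below at a polynomial rate (cf. Example \ref{ex3.2}). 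Thus $B_T^2/(T\sigma_f^2)\to1$.

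I would then apply Theorem 2.1 of Peligrad (1996) to the array $\{g_{T,t}\}$. The stationarity and exponential mixing are already in place; the Lindeberg condition follows from the uniformly bounded moments above; and the remaining ``$T$-apart'' decorrelation condition, $|\mathrm{Corr}(g_{T,t},g_{T,t+T})|\leq|\gamma_f(T)|/\gamma_f(0)\to0$, follows from Davydov's inequality, which bounds $|\gamma_f(T)|$ by a power of $\alpha_f(T)$, together with the assumption $\alpha_f(T)=o(\gamma_f(0))$ in Assumption \ref{a32}(ii). Peligrad's theorem then yields $B_T^{-1}\sum_{t=1}^T g_{T,t}\rightarrow^d\mathcal{N}(0,1)$, and combining this with $B_T^2/(T\sigma_f^2)\to1$ gives $(T\sigma_f^2)^{-1/2}\sum_{t=1}^T g_{T,t}=\sqrt{T/\sigma_f^2}\,d_1\rightarrow^d\mathcal{N}(0,1)$, as claimed.

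The main obstacle is that this is genuinely a triangular-array statement: $\gamma_f(0)$ and the whole law of $g_{T,t}$ vary with $T$ through $N=N_T$ and through the diversified weights $\bw$, and $\gamma_f(0)$ may shrink as $N$ grows, so the delicate point is verifying the Lindeberg and $T$-apart conditions \emph{uniformly} in $T$ rather than for a single fixed stationary sequence; this is precisely where the rate conditions of Assumption \ref{a32}(ii) enter.
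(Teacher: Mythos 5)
Your proposal is correct and follows essentially the same route as the paper: the same identification of $d_1$ as a centered partial sum of the stationary, exponentially $\alpha$-mixing array $(\bw'\bB\bff_t)^2-E(\bw'\bB\bff_t)^2$, the same appeal to Theorem 2.1 of Peligrad (1996) with the Lindeberg condition from the exponential tails and the $T$-apart decorrelation from Davydov's inequality together with $\alpha_f(T)=o(\gamma_f(0))$, and the same replacement of $B_T^2$ by $T\sigma_f^2$ via Assumption \ref{a32}(ii). If anything, you are slightly more explicit than the paper in bounding the tail term $T\sum_{h\geq T}\gamma_f(h)$, which the paper leaves implicit.
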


\begin{proof}
Let $Z_{T,t}=\bw'\bB (\bff_t\bff_t'-E\bff_t\bff_t')\bB'\bw$, which depends on $T$ through $\dim(\bw)=N_T$. Hence $d_1=T^{-1}\sum_{t=1}^TZ_{T,t}$.  Note that $\|\bw'\bB\|^2\leq r\|\bB\|_{\max}^2\|\bw\|_1^2=O(1)$. Hence $EZ_{T,1}^2=O(1)$. Similar to the proof of Theorem 3.1,  we define $B_{T, K}^2=\var(\sum_{t=1}^KZ_{T,t})$ and $B_{T}^2=\var(\sum_{t=1}^TZ_{T,t})=O(T)$.  By the assumption that $\alpha_f(T)=o(\gamma_f(0))$, the correlation $|$Corr$(Z_{T,t}, Z_{T,t+T})|\leq |\gamma_f(T)|/\gamma_f(0)=o(1)$. Moreover, the Lindeberg condition holds given the exponential tail of $\bff_t$. Hence  the conditions of Theorem 2.1 of Peligrad (1996) are  satisfied, which implies
\begin{equation}\label{ea2}
B_T^{-1}\sum_{t=1}^TZ_{T,t}\rightarrow^d\mathcal{N}(0,1).
\end{equation}


 For $\gamma_f(h)=\cov((\bw'\bB\bff_t)^2, (\bw'\bB\bff_{t+h})^2)$, we have $\gamma_f(h)=\cov(Z_{T,t}, Z_{T, t+h}).$
Now $B_T^2=T\gamma_f(0)+2T\sum_{h=1}^T\gamma_f(h)-2T\sum_{h=1}^Th\gamma_f(h)/T.$ Because $\sum_{h=1}^Th\gamma_f(h)/T=o(\gamma_f(0)+2\sum_{h=1}^{\infty}\gamma_f(h))$, we have $T^{-1/2}(\sigma_f^2)^{-1/2}\sum_{t=1}^TZ_{T,t}\rightarrow^d \mathcal{N}(0,1)$, where $\sigma_f^2=\gamma_f(0)+2\sum_{h=1}^{\infty}\gamma_f(h)$. The result then follows.
\end{proof}

\begin{lem}\label{lb.7}
 \begin{equation}\label{ea3}
  \sqrt{\frac{T}{\sigma_f^2}}\bw'(\hSig_f-\Sig)\bw\rightarrow^d \mathcal{N}(0,1).
 \end{equation}
\end{lem}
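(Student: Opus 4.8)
The plan is to exploit the decomposition $\bw'(\hSig_f-\Sig)\bw=d_1+d_2+d_3+d_4$ already recorded in Subsection \ref{sb2}, scale everything by $\sqrt{T/\sigma_f^2}$, and show that $d_1$ carries the entire limiting distribution while $d_2,d_3,d_4$ are asymptotically negligible. Here $d_1=\bw'\bB\bD_T\bB'\bw$ (with $\bD_T=\hcov(\bff_t)-\cov(\bff_t)$) is the genuine systematic-risk term, $d_2$ and $d_3$ collect the error from estimating $\bB$ by least squares, and $d_4=\bw'(\hSig_u-\Sig_u)\bw$ is the thresholded idiosyncratic contribution. Since Lemma \ref{la.8} already gives $\sqrt{T/\sigma_f^2}\,d_1\rightarrow^d\mathcal{N}(0,1)$, by Slutsky's theorem it suffices to prove $\sqrt{T/\sigma_f^2}(d_2+d_3+d_4)=o_p(1)$, after which (\ref{ea3}) follows.

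First I would dispose of the loading-estimation terms. By Lemma \ref{la.4}(ii), $|d_2|=O_p(T^{-1/2}(\bw'\Sig_u\bw)^{1/4}(E|\bw'\bu_t|^4)^{1/8})$, and because $\|\bw\|_1=O(1)$ together with the exponential tails in Assumption \ref{a3.5}(ii) give $E|\bw'\bu_t|^4=O(1)$, we obtain $\sqrt{T/\sigma_f^2}\,|d_2|=O_p(\sigma_f^{-1}(\bw'\Sig_u\bw)^{1/4})$. This is $o_p(1)$ precisely because the first part of Assumption \ref{a37} forces $\bw'\Sig_u\bw=o(\sigma_f^4)$. The term $d_3$ is of strictly smaller order: Lemma \ref{la.6}(i) gives $|d_3|=O_p(T^{-1}(\bw'\Sig_u\bw)^{1/2})$, so $\sqrt{T/\sigma_f^2}\,|d_3|=O_p(T^{-1/2}\sigma_f^{-1}(\bw'\Sig_u\bw)^{1/2})$, which is the scaled bound for $d_2$ multiplied by the vanishing factor $T^{-1/2}(\bw'\Sig_u\bw)^{1/4}$ (recall $\sigma_f^4=O(1)$, so $\bw'\Sig_u\bw=o(1)$), hence also $o_p(1)$.

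Next I would control the thresholded idiosyncratic term. Lemma \ref{la.6}(ii) yields $|d_4|=O_p(s_N(\log N/T)^{(1-q)/2}\,\bw'\Sig_u\bw)$, so after scaling $\sqrt{T/\sigma_f^2}\,|d_4|=O_p(\sigma_f^{-1}s_N(\log N)^{(1-q)/2}T^{q/2}\,\bw'\Sig_u\bw)$, using $1/2-(1-q)/2=q/2$. This is $o_p(1)$ exactly when $\bw'\Sig_u\bw=o(\sigma_f\,T^{-q/2}(\log N)^{-(1-q)/2}s_N^{-1})$, which is the second part of Assumption \ref{a37}. Combining the three displays gives $\sqrt{T/\sigma_f^2}(d_2+d_3+d_4)=o_p(1)$, and Slutsky's theorem completes the argument.

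The main obstacle is the control of the remainder rather than the central limit theorem itself, which is already supplied by Lemma \ref{la.8} through the triangular-array result of Peligrad (1996). The delicate point is that two distinct error channels must be suppressed at once: the $\sqrt{\log N/T}$ estimation error in $\hB$ entering through $d_2,d_3$, and the thresholding error $s_N(\log N/T)^{(1-q)/2}$ entering through $d_4$, each weighed against $\sigma_f$, which may itself decay as the portfolio is diversified. It is exactly the two-term diversification budget in Assumption \ref{a37} that guarantees $\bw'\Sig_u\bw$ shrinks fast enough to absorb both channels; working through $\bw'(\hSig_f-\Sig)\bw$ term by term, rather than bounding the crude operator norm $\|\hSig_f-\Sig\|$, is what keeps the regime $N\gg T$ tractable here.
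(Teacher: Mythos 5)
Your proof is correct and takes essentially the same route as the paper's: the same decomposition $\bw'(\hSig_f-\Sig)\bw=d_1+d_2+d_3+d_4$, Lemma \ref{la.8} for the central limit theorem on $d_1$, and Lemmas \ref{la.4} and \ref{la.6} combined with Slutsky's theorem to show $\sqrt{T/\sigma_f^2}(d_2+d_3+d_4)=o_p(1)$. Your reading of Assumption \ref{a37} as supplying the two separate conditions $\bw'\Sig_u\bw=o(\sigma_f^4)$ and $\bw'\Sig_u\bw=o(\sigma_f T^{-q/2}(\log N)^{-(1-q)/2}s_N^{-1})$ is exactly how the paper's own proof uses it.
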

\begin{proof} In fact,
 $$
 \sqrt{\frac{T}{\sigma_f^2}}\bw'(\hSig_f-\Sig)\bw= \sqrt{\frac{T}{\sigma_f^2}}d_1+ \sqrt{\frac{T}{\sigma_f^2}}(d_2+d_3+d_4).
 $$
 By Lemma \ref{la.8}, it suffices to show that $\sqrt{T/\sigma_f^2}(d_2+d_3+d_4)=o_p(1)$. By Lemma \ref{la.4}, $\sqrt{T/\sigma_f^2}|d_2|=O_p((\bw'\Sig_u\bw)^{1/4}(E|\bw'\bu_t|^4)^{1/8}) /\sqrt{\sigma_f^2} )=O_p((\bw'\Sig_u\bw)^{1/4}  /\sqrt{\sigma_f^2} )=o_p(1)$ since $E|\bw'\bu_t|^4=O(1)$. Moreover, Lemma \ref{la.6} implies $\sqrt{T/\sigma_f^2}|d_3|=O_p((\bw'\Sig_u\bw)^{1/2}(T\sigma_f^2)^{-1/2})=o_p(1)$ since $\bw'\Sig_u\bw=o(\sigma_f^4)=O(1)$. It also follows from Lemma \ref{la.6} that  $\sqrt{T/\sigma_f^2}|d_4|=O_p(\bw'\Sig\bw s_N (\sigma_f^2)^{-1/2}(\log N)^{1/2-q/2}T^{q/2})=o_p(1)$. This implies the desired result.
 \end{proof}

\textbf{Proof of Theorem \ref{t32}}

The first statement
 $
\left[\var\left(\sum_{t=1}^T(\bw'\bB\bff_t)^2\right)\right]^{-1/2}   T\bw'(\hSig_f-\Sig)\bw\rightarrow^d\mathcal{N}(0,1)
$
follows from (\ref{ea2}).    
By  the assumptions,  $L\sqrt{(L+\log N)/T}=o(  \sigma^2_f   )$ and $\sum_{h>L}\gamma_f(h)=o(  \sigma^2_f   )$ and Lemma \ref{l32}  imply $|  \sigma^2_f   -\hsig_f^2|=o_p(  \sigma^2_f   )$. Since $\bw'(\hSig_f-\Sig)\bw=O_p(T^{-1/2}\sqrt{  \sigma^2_f   })$, we have
$$
\sqrt{T}|\bw'(\hSig_f-\Sig)\bw|\left|\frac{1}{\sqrt{  \sigma^2_f  }}-\frac{1}{\sqrt{\hsig_f^2}}\right|=o_p(1).
$$
 Hence  Lemma \ref{lb.7} gives
$\sqrt{T/\hsig_f^2}\bw'(\hSig_f-\Sig)\bw\rightarrow^d N(0, 1),$
which validates the H-CLUB.

\section{Proofs for the POET-based Estimation}

   Let $\bV$ denote the $r\times r$ diagonal matrix of the first $r$ largest eigenvalues of $\bS$ in decreasing order. Let $\hF=(\hf_1,...,\hf_T)$ be an $r\times T$ matrix such that the rows of $\hF/\sqrt{T}$ are the eigenvectors corresponding to the $r$ largest eigenvalues of the $T\times T$ matrix $\bR'\bR$. Let $\hB=\bR\hF'/T.$ Define an $r\times r$ matrix
$$
\bH=\frac{1}{T}\bV^{-1}\hF\bF'\bB'\bB.
$$
Then $\hB$ and $\hf_t$ can be treated as estimators of $\bB\bH^{-1}$ and $\bH\bff_t$ respectively.

\subsection{Proof of Lemma \ref{l33}}

\begin{lem}\label{la.10}
(i) $\|\bw'\hB\|=O_p(1)$, and $\|\bw'(\hB-\bB\bH^{-1})\|=O_p(N^{-1/2}+(\log N/T)^{1/2})$\\
(ii) $\|\hF-\bH\bF\|^2/T=T^{-1}\sum_{t=1}^T\|\hf_t-\bH\bff_t\|^2=O_p(N^{-1}+T^{-2})$.\\
(iii) $\|\bw'\bE\|^2=O_p(T).  $\\
(iv) $\|T^{-1}\sum_{t=1}^T[\hf_t\hf_t'-\bH\bff_t(\bH\bff_t)']\|=O_p(N^{-1/2}+T^{-1})$.
\end{lem}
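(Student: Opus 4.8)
The plan is to treat Lemma \ref{la.10} as a package of standard principal-components facts for the approximate factor model, whose backbone is part (ii), the mean-square consistency of the estimated factors. Parts (iii) and (iv) will then follow almost immediately, and part (i) by projecting the loading-estimation error onto the diversified weight $\bw$. The starting point is the eigen-identity: since the rows of $\hF/\sqrt{T}$ are the top-$K$ eigenvectors of $\bR'\bR$ and $\bV$ collects the corresponding $K$ largest eigenvalues of $\bS=\bR\bR'/T$, we have $T^{-1}\bR'\bR\hF'=\hF'\bV$ together with the normalization $T^{-1}\hF\hF'=\bI$. Pervasiveness (Assumption \ref{ass3.11}) forces the top-$K$ eigenvalues of $\Sig=\bB\bB'+\Sig_u$ to grow at rate $N$, so the diagonal entries of $\bV$ are $\asymp N$ and hence $\|\bV^{-1}\|=O_p(N^{-1})$; this single fact supplies all the $N^{-1/2}$ and $N^{-1}$ rates below.

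For part (ii), I would substitute $\bR=\bB\bF+\bE$ into the eigen-identity and isolate the equation satisfied by each column $\hf_t$ to obtain the Bai (2003) decomposition
$$
\hf_t-\bH\bff_t=\bV^{-1}\frac{1}{T}\sum_{s=1}^T\hf_s\left(\bff_s'\bB'\bu_t+\bu_s'\bB\bff_t+\bu_s'\bu_t\right),
$$
where the definition $\bH=T^{-1}\bV^{-1}\hF\bF'\bB'\bB$ exactly cancels the leading $\bff_s'\bB'\bB\bff_t$ term. Squaring, averaging over $t$, and using $\|\bV^{-1}\|=O_p(N^{-1})$ with $T^{-1}\hF\hF'=\bI$, each remainder reduces to a quadratic form controlled by Assumption \ref{ass3.12}: the cross terms $\bff_s'\bB'\bu_t$ and $\bu_s'\bB\bff_t$ are governed by $E\|N^{-1/2}\sum_i\bb_iu_{it}\|^4<M$, while the purely idiosyncratic term $\bu_s'\bu_t$ splits into its mean and centered parts. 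The mean part is dominated by the diagonal $s=t$ contribution $\tr(\Sig_u)=O(N)$, which after the $\bV^{-1}T^{-1}$ prefactor of order $N^{-1}T^{-1}$ yields a per-$t$ term of order $T^{-1}$, hence the sharp $T^{-2}$ after squaring and averaging, rather than the cruder $\min(N,T)^{-1}$; the centered part is handled by $E[N^{-1/2}(\bu_s'\bu_t-E\bu_s'\bu_t)]^4<M$. This bookkeeping of the $N$-normalizations, together with the weak-dependence summability of $\gamma_f$, is the main obstacle.

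Parts (iii) and (iv) are then short. For (iii), $\|\bw'\bE\|^2=\sum_{t=1}^T(\bw'\bu_t)^2$ has expectation $T\,\bw'\Sig_u\bw=O(T)$ because $\bw'\Sig_u\bw\le\lambda_{\max}(\Sig_u)\|\bw\|^2=O(1)$, so Markov's inequality gives $O_p(T)$. For (iv), I would use the identity $\hf_t\hf_t'-\bH\bff_t(\bH\bff_t)'=(\hf_t-\bH\bff_t)\hf_t'+\bH\bff_t(\hf_t-\bH\bff_t)'$ and Cauchy--Schwarz, noting $T^{-1}\sum_t\|\hf_t\|^2=\tr(T^{-1}\hF\hF')=K$ and $T^{-1}\sum_t\|\bH\bff_t\|^2=O_p(1)$; combined with part (ii) this gives $O_p((N^{-1}+T^{-2})^{1/2})=O_p(N^{-1/2}+T^{-1})$.

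Finally, for part (i), I would bound $\|\bw'\hB\|^2=T^{-2}\bw'\bR\hF'\hF\bR'\bw\le T^{-1}\|\bR'\bw\|^2=T^{-1}\sum_t(\bw'\bR_t)^2$, which is $O_p(\bw'\Sig\bw)=O_p(1)$ since $\bw'\Sig\bw=\|\bB'\bw\|^2+\bw'\Sig_u\bw=O(1)$. For the second claim, the cleanest route exploits the row-wise structure: writing $\hb_i$ for the $i$-th row of $\hB$, we have $\|\bw'(\hB-\bB\bH^{-1})\|=\|\sum_i w_i(\hb_i-(\bH^{-1})'\bb_i)\|\le\|\bw\|_1\max_{i\le N}\|\hb_i-(\bH^{-1})'\bb_i\|$, and $\|\bw\|_1=O(1)$ together with the maximal loading-consistency bound of Fan et al. (2013) (the unknown-factor analogue of (B.14) of Fan et al. 2011), of order $N^{-1/2}+(\log N/T)^{1/2}$, yields the stated rate. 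I would verify that bound, if needed, from the decomposition $\hB-\bB\bH^{-1}=\bB(T^{-1}\bF\hF'-\bH^{-1})+T^{-1}\bE\hF'$ and the estimates of parts (ii)--(iii), where the $(\log N/T)^{1/2}$ term is the familiar idiosyncratic-estimation contribution matching the observed-factor case.
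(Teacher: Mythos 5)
Your proposal is correct and follows essentially the same route as the paper: part (ii) rests on the same Bai (2003) eigen-identity with the same split of $\bu_s'\bu_t$ into mean and centered parts (the paper cites Lemmas C.7 and C.9 of Fan et al. (2013) for the bounds you re-derive from Assumption \ref{ass3.12}, and handles the mean term via $\max_t\frac{1}{T}\sum_s|E\bu_s'\bu_t|/N=O(T^{-1})$, exactly your diagonal-domination argument), parts (iii)--(iv) are the identical Cauchy--Schwarz computations, and the second claim of part (i) reduces, as in the paper, to the max-norm loading-consistency bound of Fan et al. (2013) via $\|\bw\|_1=O(1)$. The only cosmetic difference is your direct bound $\|\bw'\hB\|^2\le T^{-1}\|\bR'\bw\|^2=O_p(1)$ for the first claim of (i), where the paper instead invokes $\|\hB\|_{\max}=O_p(1)$ together with $\|\bw\|_1=O(1)$.
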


\begin{proof}(i) By Lemma B.16 in an earlier version of Fan et al. (2013)\footnote{downloadable from http://terpconnect.umd.edu/$\sim$yuanliao/factor2/factor2.html }, $\|\hB\|_{\max}\leq \|\hB-\bB\bH^{-1}\|_{\max}+\|\bB\|_{\max}=O_p(1)$. Thus $\|\bw'\hB\|^2\leq r\|\hB\|_{\max}^2\|\bw\|_1^2=O_p(1)$. On the other hand, $\|\bw'(\hB-\bB\bH^{-1})\|^2\leq r\|\hB-\bB\|_{\max}^2\|\bw\|_1^2=O_p(1/N+\log N/T)$.

(ii)  By (A.1) in Bai (2003), the following identity holds:
\begin{equation}\label{eb1}
\hf_t-\bH\bff_t=(\bV/N)^{-1}\left(\frac{1}{T}\sum_{s=1}^T\hf_sE({\bu}_s'{\bu}_t)/N+\frac{1}{T}\sum_{s=1}^T\hf_s\zeta_{st}+\frac{1}{T}\sum_{s=1}^T\hf_s\eta_{st}+\frac{1}{T}\sum_{s=1}^T\hf_s\xi_{st}\right)
\end{equation}
where $\zeta_{st}={\bu}_s'{\bu}_t/N-E({\bu}_s'{\bu}_t)/N$, $\eta_{st}=\bff_s'\sum_{i=1}^N\bb_iu_{it}/N$, and
$\xi_{st}=\bff_t'\sum_{i=1}^p\bb_iu_{is}/N$. It follows from Lemma C.7 in Fan et al. (2013) that
$$
\frac{1}{T}\sum_{t=1}^T(\frac{1}{T}\sum_{s=1}^T\hat{f}_{is} \zeta_{st})^2+\frac{1}{T}\sum_{t=1}^T(\frac{1}{T}\sum_{s=1}^T\hat{f}_{is} \eta_{st})^2+\frac{1}{T}\sum_{t=1}^T(\frac{1}{T}\sum_{s=1}^T\hat{f}_{is} \xi_{st})^2=O_p(\frac{1}{N}).
$$Moreover, by Lemma C.9 of Fan et al. (2013), $\max_{i\leq r}\frac{1}{T}\sum_{t=1}^T(\hf_t-\bH\bff_t)_i^2=O_p(1/T+1/N)$. Applying the inequality $(a+b)^2\leq 2a^2+2b^2$ gives,
\begin{eqnarray*}
&& \frac{1}{T}\sum_{t=1}^T(\frac{1}{T}\sum_{s=1}^T\hat{f}_{is} E(\bu_s'\bu_t)/N)^2
\leq\frac{1}{T}\sum_{t=1}^T(\frac{1}{T}\sum_{s=1}^T[|(\hf_s-\bH\bff_s)_i|+|(\bH\bff_s)_i|]|E(\bu_s'\bu_t)|/N)^2\cr
&\leq&\frac{2}{T}\sum_{t=1}^T(\frac{1}{T}\sum_{s=1}^T|(\hf_s-\bH\bff_s)_i||E(\bu_s'\bu_t)|/N)^2+\frac{2}{T}\sum_{t=1}^T(\frac{1}{T}\sum_{s=1}^T|(\bH\bff_s)_i||E(\bu_s'\bu_t)|/N)^2.
\end{eqnarray*}
By the Cauchy-Schwarz inequality and that $\max_{t\leq T}\sum_{s=1}^T|E(\bu_s'\bu_t)/N|^2=O(1)$,
$$\frac{2}{T}\sum_{t=1}^T(\frac{1}{T}\sum_{s=1}^T|(\hf_s-\bH\bff_s)_i||E(\bu_s'\bu_t)|/N)^2\leq \max_{i\leq r}\frac{2}{T}\sum_{s=1}^T(\hf_s-\bH\bff_s)_i^2\frac{1}{T}\sum_{s=1}^T(|E\bu_s'\bu_t|/N)^2=O_p(\frac{1}{T^2}+\frac{1}{NT}).
$$
Also, $\frac{2}{T}\sum_{t=1}^T(\frac{1}{T}\sum_{s=1}^T|(\bH\bff_s)_i||E(\bu_s'\bu_t)|/N)^2\leq O_p(T^{-1})\sum_{t=1}^T(\frac{1}{T}\sum_{s=1}^T\|\bff_s\||E(\bu_s'\bu_t)|/N)^2.$
We have $T^{-1}\sum_{t=1}^T(\frac{1}{T}\sum_{s=1}^T\|\bff_s\||E(\bu_s'\bu_t)|/N)^2=O_p(T^{-2})$ since
\begin{eqnarray*}
&&E\frac{1}{T}\sum_{t=1}^T(\frac{1}{T}\sum_{s=1}^T\|\bff_s\||E(\bu_s'\bu_t)|/N)^2=\frac{1}{T^2}\sum_{s=1}^T\sum_{l=1}^TE\|\bff_s\|\|\bff_l\|\frac{|E\bu_s'\bu_t|}{N}\frac{|E\bu_l'\bu_t|}{N}\cr
&&\leq \max_{s\leq T}E\|\bff_s\|^2\max_{s\leq T}(\frac{1}{T}\sum_{s=1}^T|E\bu_s'\bu_t|/N)^2=O(T^{-2}).
\end{eqnarray*}
This implies $\max_{i\leq r}T^{-1}\sum_{t=1}^T(\hf_t-\bH\bff_t)_i^2=O_p(N^{-1}+T^{-2})$, and \\thus $T^{-1}\sum_{t=1}^T\|\hf_t-\bH\bff_t\|^2=O_p(N^{-1}+T^{-2})$.\\
(iii) $E\|\bw'\bE\|^2= E\sum_{t=1}^T(\sum_{i=1}^Nw_iu_{it})^2=T\max_{i,j}|Eu_{it}u_{jt}|\|\bw\|_1^2=O(T) 	$. Thus \\ $\|\bw'\bE\|^2=O_p(T)$. Finally, (iv)  follows from the Cauchy-Schwarz inequality and part (ii).

\end{proof}

 \begin{lem}\label{lb4}
$\max_{h\leq L}|\hgamma_P(h)-\gamma_f(h)|=O_p(\sqrt{(L+\log N)/T}+N^{-1/2}).$
\end{lem}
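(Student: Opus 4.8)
The plan is to mirror the proof of Lemma \ref{la.9} for the observed-factor case, now accounting for the additional error from estimating both the factors and the loadings by their principal-component counterparts $\hf_t$ and $\hB$. Write $g_t=\bw'\hB\hf_t$ and $\tilde g_t=\bw'\bB\bff_t$, and set $D_t=g_t-\tilde g_t$. Since the PCA normalization gives $\hF\hF'/T=\bI_K$, the centering constant in $\hgamma_P$ satisfies $\bw'\hB\hB'\bw=T^{-1}\sum_{t=1}^T g_t^2$, which is the exact analog of $\bw'\hB\hcov(\bff_t)\hB'\bw$ in Lemma \ref{la.9}. I would therefore decompose, exactly as there,
$$
\max_{h\le L}|\hgamma_P(h)-\gamma_f(h)|\le a_1+a_2+a_3+a_4,
$$
with $a_1=\max_{h\le L}|T^{-1}\sum_t(g_{t+h}^2g_t^2-\tilde g_{t+h}^2\tilde g_t^2)|$ replacing the leading fluctuation term, $a_2=|(\bw'\hB\hB'\bw)^2-(\bw'\bB\bB'\bw)^2|$ (using $\cov(\bff_t)=\bI_K$), and $a_3,a_4$ the boundary terms built from $\bw'\hB\hB'\bw-T^{-1}\sum_{t=1}^{T-h}g_t^2$.

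The crucial preliminary is the mean-square estimate $T^{-1}\sum_{t=1}^T D_t^2=O_p(N^{-1}+\log N/T)$. To obtain it I would use the identity $D_t=\bw'\hB(\hf_t-\bH\bff_t)+\bw'(\hB-\bB\bH^{-1})\bH\bff_t$ and Cauchy--Schwarz: the first piece contributes $\|\bw'\hB\|^2\,T^{-1}\sum_t\|\hf_t-\bH\bff_t\|^2=O_p(1)\cdot O_p(N^{-1}+T^{-2})$ by Lemma \ref{la.10}(i),(ii), and the second contributes $\|\bw'(\hB-\bB\bH^{-1})\|^2\,\|\bH\|^2\,T^{-1}\sum_t\|\bff_t\|^2=O_p(N^{-1}+\log N/T)$ by Lemma \ref{la.10}(i), $\|\bH\|=O_p(1)$ (which follows from Assumption \ref{ass3.11} and the eigenvalue bounds on $\bV/N$), and $T^{-1}\sum_t\|\bff_t\|^2=O_p(1)$. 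Hence $(T^{-1}\sum_t D_t^2)^{1/2}=O_p(N^{-1/2}+\sqrt{\log N/T})$.

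For $a_1$ I would split $a_1\le a_{11}+a_{12}$ as in Lemma \ref{la.9}, where $a_{11}$ is the fluctuation term in the true factors $(\bw'\bB\bff_t)$, controlled by the Chebyshev/covariance-summability argument of Lemma \ref{la.2}(ii) to give $a_{11}=O_p(\sqrt{L/T})$, while $a_{12}$ is the new quantity involving $g_t-\tilde g_t$. Expanding $g_t=\tilde g_t+D_t$ in the product $g_{t+h}^2g_t^2$, every surviving term carries at least one factor $D$; applying Cauchy--Schwarz and using $T^{-1}\sum_t\tilde g_t^4=O_p(1)$ and $T^{-1}\sum_t\tilde g_t^2=O_p(1)$ (both from $\|\bw'\bB\|=O_p(1)$ and the exponential tails of $\bff_t$, as in Lemma \ref{la.8}), the dominant contribution is $(T^{-1}\sum_t D_t^2)^{1/2}$ times a bounded factor, so $a_{12}=O_p(N^{-1/2}+\sqrt{\log N/T})$ and thus $a_1=O_p(\sqrt{(L+\log N)/T}+N^{-1/2})$. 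The term $a_2$ reduces, via $\bw'\hB\hB'\bw=T^{-1}\sum_t g_t^2$ together with $T^{-1}\sum_t\tilde g_t^2-\bw'\bB\bB'\bw=\bw'\bB(\hcov(\bff_t)-\cov(\bff_t))\bB'\bw=O_p(T^{-1/2})$ and the same product expansion of $g_t^2-\tilde g_t^2$, to $O_p(N^{-1/2}+\sqrt{\log N/T})$; and $a_3,a_4$ follow from the edge argument of Lemma \ref{la.2}(iii),(iv) applied to $g_t^2\le 2\tilde g_t^2+2D_t^2$, giving $O_p(L^2/T+N^{-1}+\log N/T)$, which is dominated under $L^3=o(T)$.

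The main obstacle is the mean-square bound $T^{-1}\sum_t D_t^2=O_p(N^{-1}+\log N/T)$ and, relatedly, the bookkeeping in the product expansions ensuring that the factor-estimation error enters only to the first power after a single Cauchy--Schwarz, so that its \emph{square root} controls the rate. This is precisely what produces the $N^{-1/2}$ term and makes the POET bound comparable to the observed-factor bound of Lemma \ref{la.9}, reflecting that the price of not observing the factors is diversified away for large $N$.
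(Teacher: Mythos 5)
Your proposal is correct and follows essentially the same route as the paper's proof: the identical four-term decomposition $a_1,\dots,a_4$ inherited from Lemma \ref{la.9}, the split $a_1\le a_{11}+a_{12}$, and the reduction of $a_{12}$ (via Cauchy--Schwarz) to the root-mean-square error of $\bw'\hB\hf_t-\bw'\bB\bff_t$, which both you and the paper control through Lemma \ref{la.10}(i)--(ii) by splitting the error into a factor-estimation piece and a loading-estimation piece. Your handling of $a_2$, $a_3$, $a_4$ differs only in minor bookkeeping (e.g., you get $O_p(L^2/T)$ edge terms where the paper gets $O_p(L/T)$, and you expand $a_2$ directly rather than citing the $g_i$ decomposition), but all such terms are dominated by the stated rate, so the argument goes through.
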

\begin{proof}

The triangular inequality implies $\max_{h\leq L}|\hgamma_P(h)-\gamma_f(h)|\leq\sum_{i=1}^4a_i$, where
\begin{eqnarray*}
&a_1=\max_{h\leq L}|T^{-1}\sum_{t=1}^{T-h}(\bw'\hB\hf_{t+h})^2(\bw'\hB\hf_t)^2-E(\bw'\bB\bff_t)^2(\bw'\bB\bff_{t+h})^2|, \cr
&a_2= |(\bw'\hB\hB'\bw)^2-(\bw'\bB\bB'\bw)^2|,\quad
a_3=\bw'\hB\hB'\bw\max_{h\leq L}|\bw'\hB\hB'\bw-T^{-1}\sum_{t=1}^{T-h}(\bw'\hB\hf_t)^2|, \cr
&a_4=\bw'\hB\hB'\bw\max_{h\leq L}|\bw'\hB\hB'\bw-T^{-1}\sum_{t=1}^{T-h}(\bw'\hB\hf_{t+h})^2|.
\end{eqnarray*}
Here $a_1$ is bounded by $a_{11}+a_{12}$, where \\
 $a_{11}=
 \max_{h\leq L}|T^{-1}\sum_{t=1}^{T-h}(\bw'\bB\bff_{t+h})^2(\bw'\bB\bff_t)^2-E(\bw'\bB\bff_t)^2(\bw'\bB\bff_{t+h})^2|,
 $
and \\$a_{12}=\max_{h\leq L}|T^{-1}\sum_{t=1}^{T-h}(\bw'\hB\hf_{t+h})^2(\bw'\hB\hf_t)^2-(\bw'\bB\bff_{t+h})^2(\bw'\bB\bff_t)^2|$.

  As in the proof of Lemma \ref{la.9}, $a_{11}=O_p(\sqrt{L/T}).$   It follows from the Cauchy-Schwarz inequality that
\begin{eqnarray*}
a_{12}&=&O_p(\max_{h\leq L}|T^{-1}\sum_{t=1}^{T-h}(\bw'\hB\hf_{t+h})(\bw'\hB\hf_t)-(\bw'\bB\bff_{t+h})(\bw'\bB\bff_t)|)\cr
&=&O_p(\max_{h\leq L}(T^{-1}\sum_{t=1}^{T-h}(\bw'\hB\hf_{t+h}-\bw'\bB\bff_{t+h})^2)^{1/2}+(T^{-1}\sum_{t=1}^{T-h}(\bw'\hB\hf_{t}-\bw'\bB\bff_{t})^2)^{1/2})\cr
&=&O_p(\|\bw'(\hB-\bB\bH^{-1})\|+\|\bw'\bB\bH^{-1}\|(T^{-1}\sum_{t=1}^T\|\hf_t-\bH\bff_t\|^2)^{1/2})
\end{eqnarray*}
It follows from  Lemma \ref{la.10} that $a_{12}=O_p(\sqrt{\log N/T}+N^{-1/2})$, thus\\
 $a_1=O_p(\sqrt{(L+\log N)/T}+N^{-1/2})$. On the other hand, for $g_1,...,g_5$ defined in (\ref{eqa.1}),
 \begin{eqnarray*}
 a_2=O_p(|\bw'(\hB\hB'-\bB\bB')\bw|)=O_p(\sum_{i=1}^5|g_i|)=O_p(\sqrt{\sigma_f^2/T}).
 \cr
 a_3=O_p(1)\max_{h\leq L}|\bw'\hB(\frac{1}{T}\sum_{t=T-h+1}^T\hf_t\hf_t')\hB'\bw|=O_p(\frac{1}{T}\sum_{t=T-L+1}^T\|\hf_t\|^2).
 \end{eqnarray*}
We have $\frac{1}{T}\sum_{t=T-L+1}^T\|\hf_t\|^2\leq \frac{2}{T}\sum_{t=T-L+1}^T\|\hf_t-\bH\bff_t\|^2+\frac{2}{T}\sum_{t=T-L+1}^T\|\bH\bff_t\|^2$. On one hand, $E\frac{2}{T}\sum_{t=T-L+1}^T\|\bff_t\|^2=O(L/T).$ On the other hand, by Theorem 3.3 in Fan et al. (2013), $ \max_{t}\|\hf_t-\bH\bff_t\|=o_p(1)$, hence $ \frac{2}{T}\sum_{t=T-L+1}^T\|\hf_t-\bH\bff_t\|^2=O_p(L/T)$. Thus $a_3=O_p(L/T)$. Similarly, we have $a_4=O_p(L/T)$.

\end{proof}

\textbf{Proof of Lemma \ref{l33}}

We have  $|\hsig_P^2-  \sigma^2_f   |\leq\sum_{i=1}^3b_i$, where
$
b_1=|\hgamma_P(0)-\gamma_f(0)|, $ $ b_2=2\sum_{h=1}^L|\hgamma_P(h)-\gamma_f(h)|, $ and $ b_3=2\sum_{h>L}\gamma_f(h).
$
 Lemma \ref{lb4} implies $b_2\leq 2L\max_{h\leq L}|\hgamma_P(h)-\gamma_f(h)|=O_p(L\sqrt{(L+\log N)/T}+L/\sqrt{N})$, which gives the convergence rate.  The second statement  is due to $\hsig_f^2=O_p( \log N)$.

\subsection{Proof of Theorem \ref{t33}}

First,  $\hSig_P=\hB\hB'+\bOmega$. With the identification condition $\cov(\bff_t)=\bI_K$, $\Sig=\bB\bB'+\Sig_u.$ Therefore, if we write $\tC_T=\hB-\bB\bH^{-1}$ and $\tD_T=T^{-1}\sum_{t=1}^T\bff_t\bff_t'-\cov(\bff_t)$, then $\bw'(\hSig_P-\Sig)\bw=\sum_{i=1}^5g_i$, where
\begin{eqnarray}\label{eqa.1}
g_1=\bw'\bB\tD_T\bB'\bw, g_2=\bw'\tC_T\hB'\bw, g_3=\bw'\bB\bH^{-1}\tC_T'\bw, \cr
g_4=\bw'(\bOmega-\Sig_u)\bw,  g_5=\bw'\bB\bH^{-1}\frac{1}{T}\sum_{t=1}^T[\hf_t\hf_t'-\bH\bff_t(\bH\bff_t)']\bH^{-1'}\bB'\bw.
\end{eqnarray}
Recall the definition of $d_1$ in Appendix \ref{sb2},  $g_1=d_1$. Thus  it follows from Lemma \ref{la.8} that $\sqrt{T/\sigma_f^2}g_1\rightarrow^d \mathcal{N}(0,1)$. We proceed by showing that $\sqrt{T/\sigma_f^2}g_i$ are asymptotically negligible for $i=2,...,5$. These results are given in the following lemmas.

\begin{lem}\label{la.11}
(i) $|g_2|=O_p(T^{-1/2}(\bw'\Sig_u\bw)^{1/4}+N^{-1/2}+T^{-1})$,  \\
(ii)$|g_3|=O_p(T^{-1/2}(\bw'\Sig_u\bw)^{1/4}+N^{-1/2}+T^{-1})$.\\
(iii) $|g_5|=O_p(N^{-1/2}+T^{-1}).$
\end{lem}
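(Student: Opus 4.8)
The starting point is the exact decomposition $\hB=\bB\bH^{-1}+\tC_T$ recorded just before the lemma, together with $\|\bH\|,\|\bH^{-1}\|=O_p(1)$, the bound $\|\bw'\bB\|^2\le r\|\bB\|_{\max}^2\|\bw\|_1^2=O(1)$, and $\|\bw'\hB\|=O_p(1)$ from Lemma \ref{la.10}(i). I would first note that (i) and (ii) collapse to a single estimate: since $g_2=\bw'\tC_T\hB'\bw$ and $g_3=\bw'\bB\bH^{-1}\tC_T'\bw$, Cauchy--Schwarz gives $|g_2|\le\|\bw'\tC_T\|\,\|\bw'\hB\|$ and $|g_3|\le\|\bw'\bB\bH^{-1}\|\,\|\bw'\tC_T\|$, both therefore $O_p(\|\bw'\tC_T\|)$, so it suffices to prove $\|\bw'\tC_T\|=O_p(T^{-1/2}(\bw'\Sig_u\bw)^{1/4}+N^{-1/2}+T^{-1})$. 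Using $\bR=\bB\bF+\bE$ and $\hB=\bR\hF'/T$, I would split
\[
\bw'\tC_T=\bw'\bB\big(\bF\hF'/T-\bH^{-1}\big)+\bw'\bE\hF'/T .
\]

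For the noise piece $\bw'\bE\hF'/T$, write $\hF=\bH\bF+(\hF-\bH\bF)$. The main summand $\bw'\bE\bF'\bH'/T$ is exactly where Lemma \ref{la.4}(i) enters: $\|\bw'\bE\bF'\|=\|\bF\bE'\bw\|=O_p(T^{1/2}(\bw'\Sig_u\bw)^{1/4})$ (using $E|\bw'\bu_t|^4=O(1)$, valid because $\|\bw\|_1=O(1)$) and $\|\bH\|=O_p(1)$, so this term is $O_p(T^{-1/2}(\bw'\Sig_u\bw)^{1/4})$, producing the characteristic quartic factor. The remainder $\bw'\bE(\hF-\bH\bF)'/T$ is handled by Cauchy--Schwarz with Lemma \ref{la.10}(ii),(iii): it is at most $T^{-1}\|\bw'\bE\|\,\|\hF-\bH\bF\|=O_p(T^{-1}\cdot T^{1/2}\cdot(T/N+T^{-1})^{1/2})=O_p(N^{-1/2}+T^{-1})$.

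The signal piece $\bw'\bB(\bF\hF'/T-\bH^{-1})$ is the delicate one and is the main obstacle. Since $\|\bw'\bB\|=O_p(1)$, it suffices to show $\|\bF\hF'/T-\bH^{-1}\|=O_p(N^{-1/2}+T^{-1})$; the difficulty is that the naive expansion $\bF\hF'/T=(\bI_r+\tD_T)\bH'+T^{-1}\sum_t\bff_t(\hf_t-\bH\bff_t)'$ carries a term $\tD_T\bH'$ of order $T^{-1/2}$, which would be far too large. The plan is to exploit the normalization $\hF\hF'/T=\bI_r$, which forces $\bH\bH'=\bI_r-\bH\tD_T\bH'-\bQ$ with $\bQ=T^{-1}\sum_t[\hf_t\hf_t'-\bH\bff_t(\bH\bff_t)']$, hence $\bH'=\bH^{-1}-\tD_T\bH'-\bH^{-1}\bQ$. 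Substituting this back into the expansion makes the $\tD_T$ contributions cancel to leading order, leaving $\bF\hF'/T-\bH^{-1}=-\bH^{-1}\bQ+T^{-1}\sum_t\bff_t(\hf_t-\bH\bff_t)'$, which is $O_p(N^{-1/2}+T^{-1})$ by Lemma \ref{la.10}(iv) together with a Cauchy--Schwarz bound on the last sum via Lemma \ref{la.10}(ii). Combining the two pieces yields the displayed rate for $\|\bw'\tC_T\|$; since the leftover $\bw'\tC_T\tC_T'\bw=\|\bw'\tC_T\|^2$ inside $g_2$ is of strictly smaller order, parts (i) and (ii) follow.

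Finally, (iii) is immediate from the same ingredients. Writing $g_5=\bw'\bB\bH^{-1}\bQ\,\bH^{-1\prime}\bB'\bw$ with $\bQ$ as above, Cauchy--Schwarz gives $|g_5|\le\|\bw'\bB\bH^{-1}\|^2\,\|\bQ\|=O_p(1)\cdot O_p(N^{-1/2}+T^{-1})$ by the $O_p(1)$ bounds established above and Lemma \ref{la.10}(iv), so $|g_5|=O_p(N^{-1/2}+T^{-1})$.
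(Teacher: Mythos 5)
Your proof is correct and follows essentially the same route as the paper: your two-term split of $\bw'\tC_T$ produces exactly the paper's three terms (the noise terms $\bw'\bE\bF'\bH'/T$ and $\bw'\bE(\hF-\bH\bF)'/T$, and the signal piece, which coincides with the paper's $\bw'\bB\bH^{-1}(\bH\bF-\hF)\hF'/T$ because $\bF\hF'/T-\bH^{-1}=\bH^{-1}(\bH\bF-\hF)\hF'/T$ by the normalization $\hF\hF'/T=\bI_K$), and you bound them with the same key inputs, Lemma \ref{la.4}(i) and Lemma \ref{la.10}. The only difference is in execution: where the paper dispatches the signal piece in one line via $O_p(1)\|\bH\bF-\hF\|\,\|\hF\|/T=O_p(\sqrt{1/N+1/T^2})$, you re-derive the same cancellation of the $\tD_T\bH'$ term through the constraint $\bH\bH'=\bI_K-\bH\tD_T\bH'-\bQ$ together with Lemma \ref{la.10}(ii),(iv) --- correct, but more machinery than the paper needs.
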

\begin{proof}

Using the facts that $\bR=\bB\bF+\bE$, $\hB=\bR\hF'/T$ and $\hF\hF'/T=\bI_K$, we have
$$
\hB-\bB\bH^{-1}=\bB\bH^{-1}(\bH\bF-\hF)\hF'/T+\bE(\hF-\bH\bF)'/T+\bE\bF'\bH'/T.
$$
Thus $g_2=g_{21}+g_{22}+g_{23}$, where $g_{21}=\bw' \bB\bH^{-1}(\bH\bF-\hF)\hF'/T  \hB'\bw$, \\ $g_{22}=\bw'\bE(\hF-\bH\bF)'/T\hB'\bw$, and $g_{23}=\bw'\bE\bF'\bH'/T\hB'\bw$.   It was shown by Fan et al. (2013) that $\|\bH\|=O_p(1)=\|\bH^{-1}\|$. Thus by Lemma  \ref{la.10}, $|g_{21}|\leq O_p(1)\|\bH\bF-\hF\|\|\hF\|/T=O_p(\sqrt{1/N+1/T^2}).$ In addition, $|g_{22}|\leq O_p(\sqrt{T})\|\hF-\bH\bF\|/T=O_p(\sqrt{1/N+1/T^2})$. Finally,  by Lemma \ref{la.4}$, |g_{23}|\leq \|\bw'\bE\bF'\|O_p(1/T)= O_p(T^{-1/2}(\bw'\Sig_u\bw)^{1/4})$. The proof for the convergence rate of $|g_3|$ is the same, so is omitted. Finally, the rate of convergence for $|g_5|$ follows from  Lemma \ref{la.10}.
\end{proof}

\textbf{Proof of Theorem \ref{t33}}

 By Lemma \ref{la.11} and the assumption that $T\sigma_f^2\rightarrow\infty$,  $\sigma_f^2N/T\rightarrow\infty$ and $\bw'\Sig_u\bw=o(\sigma_f^4)$,  we have  $\sqrt{T/\sigma_f^2}g_i=o_p(1)$ for $i=2, 3, 5$. In addition, by Theorem 3.1 of Fan et al. (2013), $$\|\bOmega-\Sig_u\|=O_p(s_N(\frac{\log N}{T}+\frac{1}{N})^{1/2-q/2}),$$
which then implies that $\sqrt{T/\sigma_f^2}g_4=o_p(1)$, by the assumption \\ $\|\bw\|^2=o(\sqrt{\sigma_f^2}s_N^{-1}N^{1/2-q/2}T^{-1/2}).$ By Lemma \ref{la.8}, $\sqrt{T/\sigma_f^2}g_1=\sqrt{T/\sigma_f^2}d_1\rightarrow^d \mathcal{N}(0,1)$,  thus
$\sqrt{T/\sigma_f^2}\bw'(\hSig_P-\Sig)\bw\rightarrow^d \mathcal{N}(0,1).$  By the assumption  $L=O(\sqrt{N}\sigma_f^2)$ and Lemma \ref{l33},   $|  \sigma^2_f   -\hsig_P^2|=o_p(  \sigma^2_f   )$.  Since $\bw'(\hSig_P-\Sig)\bw=O_p(T^{-1/2}\sqrt{  \sigma^2_f   })$,  we have
 $\sqrt{T}|\bw'(\hSig_P-\Sig)\bw||\sigma_f^{-1}-\hsig_P^{-1}|=o_p(1)$.
 It follows  that  $\sqrt{T/\hsig_P^2}\bw'(\hSig_P-\Sig)\bw\rightarrow^d N(0, 1)
$, which validates the H-CLUB.

\subsection{Proof of Theorem \ref{t44}}
The theorem follows from the following lemma.
\begin{lem} Suppose $\bff_t$ and $\bu_t$ are independent, and $E\bff_t=E\bu_t=0$, then
 $$\var\left[\sum_{t=1}^T(\bw'\bR_t)^2\right]=\var\left[\sum_{t=1}^T(\bw'\bB\bff_t)^2\right]+\var\left[\sum_{t=1}^T(\bw'\bu_t)^2\right]+\var\left[2\sum_{t=1}^T\bw'\bB\bff_t\bw'\bu_t	 \right].$$
 \end{lem}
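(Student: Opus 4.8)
The plan is to expand the square, sum over $t$, and reduce the claim to the vanishing of three cross-covariances. Writing $\bw'\bR_t=\bw'\bB\bff_t+\bw'\bu_t$ and squaring gives $(\bw'\bR_t)^2=(\bw'\bB\bff_t)^2+2(\bw'\bB\bff_t)(\bw'\bu_t)+(\bw'\bu_t)^2$. Summing over $t$ and setting
\[
X=\sum_{t=1}^T(\bw'\bB\bff_t)^2,\qquad Y=\sum_{t=1}^T(\bw'\bu_t)^2,\qquad Z=2\sum_{t=1}^T(\bw'\bB\bff_t)(\bw'\bu_t),
\]
we have $\sum_{t=1}^T(\bw'\bR_t)^2=X+Y+Z$, so the standard variance identity yields
\[
\var(X+Y+Z)=\var(X)+\var(Y)+\var(Z)+2\cov(X,Y)+2\cov(X,Z)+2\cov(Y,Z).
\]
Thus the lemma is equivalent to the statement that all three cross-covariances vanish, and the whole proof reduces to verifying this.

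First I would dispose of $\cov(X,Y)$: since $X$ is a measurable function of $\{\bff_t\}_{t=1}^T$ alone and $Y$ is a measurable function of $\{\bu_t\}_{t=1}^T$ alone, the assumed independence of the two processes makes $X$ and $Y$ independent, so $\cov(X,Y)=0$ immediately.

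Next, for $\cov(X,Z)$ and $\cov(Y,Z)$ I would exploit that $Z$ is linear in each of the two processes separately. For the first, expand $E[XZ]=2\sum_{t,s}E[(\bw'\bB\bff_t)^2(\bw'\bB\bff_s)(\bw'\bu_s)]$; the independence of $\{\bff_t\}$ and $\{\bu_t\}$ lets me factor each summand as $E[(\bw'\bB\bff_t)^2(\bw'\bB\bff_s)]\,E[\bw'\bu_s]$, and $E[\bw'\bu_s]=\bw'E\bu_s=0$ kills every term, so $E[XZ]=0$; the same factorization gives $E[Z]=0$, hence $\cov(X,Z)=0$. Symmetrically, $E[YZ]=2\sum_{t,s}E[(\bw'\bu_t)^2(\bw'\bu_s)]\,E[\bw'\bB\bff_s]=0$ because $E[\bw'\bB\bff_s]=\bw'\bB E\bff_s=0$, so $\cov(Y,Z)=0$.

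The computation is routine once the decomposition is in place; the only point requiring care is the bookkeeping in factoring the mixed fourth-order moments, where I must keep the $\bff$-factors and the $\bu$-factors of each summand separate so that the zero-mean property can be applied to the single linear $\bu$-factor (for $\cov(X,Z)$) or the single linear $\bff$-factor (for $\cov(Y,Z)$). This is precisely where both hypotheses—process independence and $E\bff_t=E\bu_t=0$—enter. Substituting the three vanishing covariances into the variance identity above leaves exactly $\var(X)+\var(Y)+\var(Z)$, which is the asserted decomposition.
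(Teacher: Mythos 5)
Your proof is correct and follows essentially the same route as the paper: expand $(\bw'\bR_t)^2$ into the three terms, reduce the claim to vanishing cross-covariances, and kill each one by factoring mixed moments through the independence of $\{\bff_t\}$ and $\{\bu_t\}$ together with $E\bw'\bu_t=0$ (resp.\ $E\bw'\bB\bff_t=0$). If anything, you are slightly more explicit than the paper, which absorbs the $\cov(X,Y)=0$ step silently into its variance expansion while you state it outright.
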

\proof Since $\bR_t=\bB\bff_t+\bu_t$, and $\bff_t$ and $\bu_t$ are independent,
\begin{eqnarray*}
&&\var\left[\sum_{t=1}^T(\bw'\bR_t)^2\right]=\var\left[\sum_{t=1}^T(\bw'\bB\bff_t)^2+(\bw'\bu_t)^2+2\bw'\bB\bff_t\bw'\bu_t		 \right]\cr
&&=\var\left[\sum_{t=1}^T(\bw'\bB\bff_t)^2\right]+\var\left[\sum_{t=1}^T(\bw'\bu_t)^2\right]+\var\left[2\sum_{t=1}^T\bw'\bB\bff_t\bw'\bu_t	 \right]\cr
&&+2\cov\left[		\sum_{t=1}^T(\bw'\bB\bff_t)^2+(\bw'\bu_t)^2		 ,2\sum_{t=1}^T\bw'\bB\bff_t\bw'\bu_t		 \right].
\end{eqnarray*}
It suffices to show the covariance term is zero. In fact,  since $E\bw'\bB\bff_t\bw'\bu_t=0,$
\begin{eqnarray*}
&&\cov\left[		\sum_{t=1}^T(\bw'\bB\bff_t)^2 	,\sum_{t=1}^T\bw'\bB\bff_t\bw'\bu_t		 \right]=\sum_{s\leq T,t\leq T}\cov[(\bw'\bB\bff_s)^2, \bw'\bB\bff_t\bw'\bu_t]\cr
&&=\sum_{s, t}E(\bw'\bB\bff_s)^2\bw'\bB\bff_t\bw'\bu_t=\sum_{s, t}E(\bw'\bB\bff_s)^2\bw'\bB\bff_tE\bw'\bu_t=0.
\end{eqnarray*}
Finally, as  $E\bff_t=0$ implies $E\bw'\bB\bff_t=0$, we have
\begin{eqnarray*}
&&\cov\left[		\sum_{t=1}^T(\bw'\bu_t)^2 	,\sum_{t=1}^T\bw'\bB\bff_t\bw'\bu_t		 \right]=\sum_{s\leq T,t\leq T}\cov[(\bw'\bu_s)^2, \bw'\bB\bff_t\bw'\bu_t]\cr
&&=\sum_{s, t}E(\bw'\bu_s)^2\bw'\bB\bff_t\bw'\bu_t=\sum_{s, t}E(\bw'\bB\bff_t)E(\bw'\bu_s)^2\bw'\bu_t=0.
\end{eqnarray*}

\end{document}